\documentclass[11pt]{article}

\newif\ifsubmission
\submissiontrue
\submissionfalse %

\newif\ifcomment
\commenttrue
\commentfalse %

\ifsubmission
\commentfalse
\author{Anonymous Authors}
\else
\author{
  Yotam Kenneth-Mordoch%
  \thanks{Weizmann Institute of Science, Rehovot, Israel (yotam.kenneth@weizmann.ac.il).}
  \qquad
  Robert Krauthgamer%
  \thanks{ 
  Weizmann Institute of Science, Rehovot, Israel (robert.krauthgamer@weizmann.ac.il).  
  The Harry Weinrebe Professorial Chair of Computer Science.
    Work partially supported by the Israel Science Foundation grant 1336/23,
    by the Israeli Council for Higher Education (CHE) via the Weizmann Data Science Research Center,
    and by a research grant from the Estate of Harry Schutzman.
  }
}

\fi

\usepackage[letterpaper, margin=1in]{geometry}
\usepackage{amsthm}
\usepackage{amsmath}
\usepackage{amssymb}
\usepackage{algorithm}
\usepackage{algorithmicx}
\usepackage[noend]{algpseudocode}

\usepackage{tikz}
\usetikzlibrary{arrows.meta}

\usepackage{amsfonts}
\usepackage{import} 
\usepackage{xifthen}
\usepackage{mathtools}
\usepackage{arydshln}%
\usepackage{dsfont}
\usepackage[cal=esstix]{mathalfa}
\usepackage{thmtools}
\usepackage{thm-restate}
\usepackage{booktabs}
\usepackage{xspace}

\usepackage{comment}
\usepackage[colorinlistoftodos,textsize=tiny,textwidth=2cm,color=red!25!white,obeyFinal]{todonotes}

\usepackage[hypertexnames=false]{hyperref}
\hypersetup{colorlinks=true,allcolors=blue}
\usepackage{cleveref} 

\AddToHook{cmd/appendix/before}{\crefalias{section}{appendix}}

\newcommand{\mintcut}{\mathrm{cut}}

\newcommand{\C}{\mathcal{C}}

\newcommand{\R}{\mathbb{R}}

\newcommand{\Exp}[2]{\mathbb{E}_{ #2}\left[ #1 \right]}

\newcommand{\Probability}[1]{\Pr\left[ #1 \right]}

\newcommand{\T}{\mathcal{T}}

\newcommand{\tO}{\tilde{O}}

\DeclareMathOperator{\poly}{poly}
\DeclareMathOperator{\polylog}{polylog}
\newcommand{\indic}[1]{ \mathds{1}_{\{#1\}} } %
\newcommand{\LCA}{\mathrm{LCA}}

\providecommand{\set}[1]{{\{#1\}}}

\ifcomment

\else
\fi

\newtheorem{theorem}{Theorem}[section]

\newtheorem{claim}[theorem]{Claim}
\newtheorem{proposition}[theorem]{Proposition}
\newtheorem{lemma}[theorem]{Lemma}

\newtheorem{definition}[theorem]{Definition}
\newtheorem{observation}[theorem]{Observation}

\newtheorem{remark}[theorem]{Remark}

\newcommand{\insertG}{\mathtt{insert}_G}
\newcommand{\insertF}{\mathtt{insert}_F}
\newcommand{\deleteG}{\mathtt{delete}_G}
\newcommand{\deleteF}{\mathtt{delete}_F}
\newcommand{\query}{\mathtt{query}}

\title{Fully Dynamic Edge Connectivity in $\tilde{O}(n^{12/13})$ Time}

\begin{document}

\maketitle
\setcounter{page}{0}
\thispagestyle{empty}

\begin{abstract}
In the (fully) dynamic edge connectivity problem, the goal is to maintain the edge connectivity $\lambda_G$ of  an $n$-vertex graph $G$ that undergoes edge insertions and deletions.
Our main result is a randomized algorithm for maintaining edge connectivity in dynamic simple graphs
using worst-case update and query time $\tilde{O}(n^{12/13})$, for all values of $\lambda_G$.
This is the first algorithm that has $o(n)$ update and query time,
as all existing algorithms achieve this only when $\lambda_G$ is
below $n^{1/11}$ or above $n^{1/2}$ (up to polylogarithmic factors).
We then use the tools developed for this purpose to design two additional algorithms.
The first one is a deterministic algorithm for the exact same task,
that uses $n^{1+o(1)}$ worst-case update and query time
or $\tilde{O}(n)$ amortized update and query time;
this gives a polynomial improvement over existing deterministic algorithms.
The second one is a deterministic algorithm for the same task
but in dynamic unweighted multigraphs,
that uses $\tilde{O}(n^{3/2})$ worst-case update and query time.
\end{abstract}

\newpage
\setcounter{page}{1}

\section{Introduction}
\label{sec:intro}
The \emph{edge connectivity} of an (unweighted) graph $G$, denoted $\lambda_G$, is the minimum number of edges that need to be removed from $G$ to disconnect it.
This fundamental graph parameter has been studied across a wide range of computational models, including the classical sequential setting~\cite{NI92,KS96,SW97,Kar00,GMW20,MN20}, parallel computation~\cite{GG18}, the distributed \textsc{Congest} model~\cite{GK13,DHN+19,DEM+21}, graph streams~\cite{MN20,AD21}, and the cut-query and quantum query models~\cite{RSW18,AEGLMN22,AL21}.
We focus on maintaining the edge connectivity in the fully dynamic setting, where the input graph $G=(V,E)$ on a fixed set of $n$ vertices that undergoes edge insertions and deletions.
In this setting, the goal is to minimize the time it takes to update the data structure after an edge insertion or deletion (called \emph{update time}) and the time it takes to query the edge connectivity (called \emph{query time}).
Throughout, the \emph{worst-case update time} is the maximum time taken by the algorithm to process any single update in any update sequence.
When we mention only the update time it is implied that the query time is the same.
Similarly, when the type of update time is not specified, we refer to the worst-case and not, say, amortized.

Maintaining edge connectivity in the fully dynamic setting has a rich history spanning over 40 years~\cite{Fre85,GI91}.
Early work focused on the regime where the edge connectivity $\lambda_G$ is a small constant, giving efficient algorithms for $\lambda_G \in \{1,2,3,4\}$~\cite{Fre85,GI91,EGIN97,HdLT01,KKM13}.
Recent work on the bounded connectivity regime has given efficient algorithms for values of $\lambda_G$ up to $ n^{o(1)}$ using $n^{o(1)}$ worst-case update time~\cite{JST24,ElHL25,ElHL26}.

The first algorithm for polynomial values of $\lambda_G$ was given in~\cite{Tho07};
it uses $\tO(\sqrt{n}\lambda_{\max}^7)$ worst-case update time, where $\lambda_{\max}$ is the maximum edge connectivity throughout the execution and $\tO(\cdot)$ hides polylogarithmic factors in $n$.
A similar technique was later used in \cite{dVC25} to obtain $\tO(\sqrt{n}\lambda_{\max}^{5.5})$ worst-case update time, which is sublinear in the number of vertices when $\lambda_{\max} < n^{1/11}/\polylog(n)$.
Unfortunately, when $\lambda_{\max}\ge n^{3/11}$, this bound is no better than $\tO(n^2)$, the trivial solution of recomputing the edge connectivity from scratch after each update.

Recently, several algorithms have achieved $\tO(n)$ worst-case update and query time for general $\lambda_G$, improving upon the trivial solution of $\tO(n^2)$ time~\cite{GHN+23,KK26a,HKMR25}.%
\footnote{The algorithm of \cite{HKMR25} achieves $\polylog(n)$ worst-case update time and $\tO(n)$ query time.}
In addition, the algorithm of \cite{KK26a} achieves $o(n)$ worst-case update time for $\lambda_G\ge n^{1/2}\polylog(n)$.
Hence, the only regime where $o(n)$ worst-case update time was not known is $\lambda_G$ between $n^{1/11}$ and $n^{1/2}$ (omitting polylogarithmic factors).
\emph{This left a tantalizing open question: can one achieve $o(n)$ worst-case update and query time for all values of $\lambda_G$?}
We answer this question affirmatively by presenting the first such algorithm for simple graphs.
In addition, our framework gives the first deterministic algorithm for maintaining edge connectivity in simple graphs with worst-case update time $n^{1+o(1)}$, and the first algorithm for multigraphs with worst-case update time $o(n^2)$ (both for general $\lambda_G$).

\subsection{Results}
Our main result is an algorithm that maintains the edge connectivity of an unweighted graph in the fully-dynamic setting with $\tO(n^{12/13})$ worst-case update and query time.
This is the first algorithm for this problem with worst-case update and query time $o(n)$ for all values of $\lambda_G$.
Previous algorithms achieve update and query time $o(n)$ only when $\lambda_G$ is small ~\cite{Tho07,dVC25,ElHL25,ElHL26} or for $\lambda_G  \ge n^{1/2}\polylog(n)$ \cite{KK26a}.
Existing algorithms for general $\lambda_G$ use $\tO(n)$ worst-case query time, although some achieve $\polylog (n)$ worst-case update time~\cite{GHN+23,KK26a,HKMR25}.
\begin{theorem}
    \label{theorem:sublinear-edge-connectivity}
    There exists an algorithm that maintains the edge connectivity $\lambda_G$ of a dynamic simple graph $G$ on $n$ vertices (namely undergoing edge insertions and deletions), using $\tO(n^{12/13})$ worst-case update and query time.
    The algorithm is randomized and succeeds with high probability against an oblivious adversary.
\end{theorem}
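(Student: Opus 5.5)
We split by the value of $\lambda_G$, using a threshold $\lambda^* = n^{\alpha}$ with $\alpha$ chosen at the end. The high regime is handled by invoking the algorithm of \cite{KK26a}, which is sublinear once $\lambda_G \ge \sqrt{n}\polylog(n)$. The low regime is handled by a new procedure whose cost grows with $\lambda$. Both data structures run in parallel, and a query returns the minimum of their answers. This is sound because each structure only ever reports values of actual cuts (or exact values inside its promised range).

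Because the graph is simple, $\delta_G \ge \lambda_G$, and every nontrivial minimum cut $(S,\bar S)$ has $|S| \ge \delta_G$ on both sides. This is the standard Kawarabayashi--Thorup observation. We then contract along a low-conductance expander decomposition, which is maintained dynamically in the style of \cite{GHN+23}. This yields a contracted multigraph $H$ with $\tO(m/\delta)$ edges that preserves all nontrivial minimum cuts. Trivial cuts, namely singletons, are tracked by a heap of degrees.

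For the low range $\lambda \le n^{\alpha}$, we do not work on $H$ directly. Instead we sparsify further: we take a Nagamochi--Ibaraki certificate of $O(\lambda n)$ edges, maintained dynamically via $\lambda$ nested spanning forests. After contracting the certificate, it has $\tO(n)$ edges on $\tO(n/\lambda)$ super-vertices. On this sparse instance we answer queries with a local / tree-packing query routine that runs in time roughly $\tO(\lambda^{c}\cdot(n/\lambda)^{\beta})$. The key lemma is a dynamic tree packing on the contracted graph with $O(\lambda)$ trees. We rebuild it periodically, every $n^{\gamma}$ updates, and deamortize the rebuilds using the standard two-copy technique. Between rebuilds, each update affects only $O(1)$ trees per level.

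The parameters are then balanced. The update cost combines the rebuild cost amortized over $n^{\gamma}$ updates with the per-query cost of checking whether some tree $2$-respects a minimum cut, which for the $\tO(n/\lambda)$-vertex contracted graph takes about $(n/\lambda)\cdot \lambda^{c}/n^{\gamma'}$. Solving the resulting linear constraints in the exponents, together with the requirement $\alpha \ge 1/2$ so that we meet the regime of \cite{KK26a}, should give the exponent $12/13$. We choose the high/low threshold so that the two bounds match.

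The main obstacle is the middle range $n^{1/11} \lesssim \lambda \lesssim n^{1/2}$. There, neither the contraction size $n/\lambda$ nor the $\lambda$-dependent query cost is small on its own. I would first try to make the expander-contraction recourse per update small, as $\tO(1)$ amortized. Second, I would make the 2-respecting cut query sublinear by sampling edges at rate $\tO(1/\lambda)$ (Karger) inside each contracted piece, which gives a graph with $\tO(n/\lambda)$ edges and connectivity $\Theta(\log n)$. The randomness introduced here is why the bound holds only against an oblivious adversary.
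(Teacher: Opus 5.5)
Your proposal has a genuine gap: it contains no algorithm for the range the theorem is about. You yourself flag $n^{1/11}\lesssim\lambda_G\lesssim n^{1/2}$ as ``the main obstacle'' and offer only things you ``would try''. The query routine of cost $\tO(\lambda^{c}(n/\lambda)^{\beta})$ and the exponents $c,\beta,\gamma,\gamma'$ are never specified, so $12/13$ is asserted rather than derived. The architecture also cannot yield $12/13$ as set up. Delegating $\lambda_G\ge\sqrt n\polylog n$ to \cite{KK26a} forces your low-regime procedure to run in $\tO(n^{12/13})$ all the way up to $\sqrt n$, and that is exactly the part you leave unsolved. The paper splits differently and does not use \cite{KK26a}. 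It runs the $\tO(\sqrt n\lambda_{\max}^{5.5})$ algorithm of \cite{dVC25} on a maximal $n^{1/13}$-packing of forests, which forces $\lambda_{\max}\le n^{1/13}$. In parallel it runs a new $\tO(n/\delta_G)$ algorithm with degree threshold $n^{1/13}$. The two bounds $\sqrt n\,\lambda^{5.5}$ and $n/\lambda$ meet at $\lambda=n^{1/13}$.

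The concrete missing idea is how to answer a query in time proportional to the number of super-vertices rather than the number of edges. Your contracted certificate still has $\tO(n)$ edges, so Karger's 2-respecting dynamic program, or any static min-cut routine on it, is already linear. Sampling at rate $\tO(1/\lambda)$ only gives approximate cut values, and certifying the exact minimum 2-respecting cut still needs exact cut values in the unsampled graph. The paper's key tool is a dynamic version of \cite{MN20}, built from three pieces. First, a tree packing built once via a skeleton is never updated: for $t\le c\lambda_{G_0}$, every minimum cut of $G_t$ is a $(1+\epsilon)$-approximate minimum cut of $G_0$, so by Karger's packing lemma it 2-respects one of the fixed trees. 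Second, a tree cut-query oracle based on 2D range searching absorbs each edge update in $\polylog n$ time. Third, the interested-pairs structure finds the minimum 2-respecting cut with $\tO(\#\text{vertices})$ oracle queries.

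You also miss robustness. A contraction that preserves the \emph{minimum} cuts of $G$ at rebuild time need not preserve the minimum cuts of $G_t$ after updates, because a cut that later becomes minimum may already have had crossing edges contracted. This is why the paper uses a star-contraction NMC sparsifier. It preserves each non-trivial cut of value up to $\lambda_G+\delta_G/10$ with constant probability, and it is maintained in $\polylog n$ worst-case time by fixing random centers and using a DCS. You give no argument that an expander-decomposition contraction can be maintained in $o(n)$ worst-case time. The paper then takes a $2\lambda_G$-NI certificate of the contraction, with $\tO(n\lambda_G/\delta_G)$ edges, and rebuilds every $\Theta(\lambda_G)$ updates. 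This gives $\tO(n/\delta_G)$ update and query time. Your ``dynamic tree packing where each update touches $O(1)$ trees per level'' is neither justified nor needed.
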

We note that this algorithm can also report the edges of the minimum cut by increasing the query time to $\tO(n^{12/13}+\lambda_G)$.
In addition, as long as the algorithm reports only the edge connectivity it is clearly pseudo-deterministic, and hence adversarially robust.

We also provide a \emph{deterministic} algorithm for maintaining the edge connectivity of a dynamic simple graph with $n^{1+o(1)}$ worst-case update time.
This algorithm too can also report the edges of the minimum cut using $O(n^{1+o(1)}+\lambda_G)$ time.
We note that this algorithm matches up to subpolynomial factors the best randomized algorithms known before this work \cite{GHN+23,KK26a,HKMR25};
the algorithm of \cite{HKMR25} achieves $\polylog(n)$ worst-case update time but $\tO(n)$ worst-case query time.
Previous deterministic algorithms use $n^{3/2+o(1)}$ worst-case update time or $\tO(\min \left\{ m^{11/12},m^{11/13}n^{1/13},n^{3/2} \right\})$ amortized update time, where $m$ is the number of edges in the graph \cite{Tho07,GHN+23,dVC25,KK26b}.
We note that faster deterministic algorithms exist for small values of $\lambda_G$~\cite{ElHL26}.
Finally, there exists a pseudo-deterministic algorithm that maintains a global minimum cut (i.e. it reports the edges of a minimum cut) of a dynamic simple graph with $\polylog (n)$ worst-case update time and $\tO(n)$ worst-case query time~\cite{Kenneth-Mordoch26}.
\begin{theorem}\label{theorem:deterministic-edge-connectivity}
    There exists a deterministic algorithm that maintains the edge connectivity $\lambda_G$ of a dynamic simple graph $G$ on $n$ vertices (undergoing edge insertions and deletions), using $n^{1+o(1)}$ worst-case update time or $\tO(n)$ amortized update time.
\end{theorem}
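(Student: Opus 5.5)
The plan is to follow the now-standard two-regime template for simple graphs, and to replace every randomized ingredient with a deterministic counterpart. The template starts from a deterministic contraction in the spirit of Kawarabayashi--Thorup. For a simple graph with minimum degree $\delta$, there is a contracted multigraph $H$ with $\tO(n/\delta)$ vertices and $\tO(n)$ edges that preserves every non-trivial minimum cut, meaning every cut in which both sides have at least two vertices. Since trivial cuts have value equal to a vertex degree, we then have
\[
\lambda_G=\min\{\delta_G,\ \lambda_H'\},
\]
where $\lambda_H'$ is the minimum cut of $H$ restricted to those cuts that come from non-trivial cuts of $G$. The minimum degree $\delta_G$ is easy to maintain deterministically in $O(\log n)$ time per update with a heap over the degrees. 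So all the work lies in maintaining $H$, or a sparse certificate of it, and then computing a minimum cut of $H$ at query time.

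\textbf{Step 1 (deterministic expander hierarchy / contraction).} I would maintain a dynamic expander decomposition of $G$ with conductance $\phi=1/n^{o(1)}$. Deterministic versions with $n^{o(1)}$ worst-case or amortized update time exist, namely the expander hierarchy and pruning framework of Goranci--R\"acke--Saranurak--Tan and later works. Within each cluster, I would trim low-degree-into-cluster vertices and then shave. As in the deterministic Kawarabayashi--Thorup argument of Saranurak, every non-trivial min cut crosses only $n^{o(1)}$ trimmed or shaved vertices per cluster. Contracting each core therefore yields a graph $H$ with $n^{1+o(1)}/\delta$ vertices whose non-trivial cuts include every non-trivial minimum cut of $G$. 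Each edge update changes $H$ by $n^{o(1)}$ edges in the worst case.

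\textbf{Step 2 (query on the small graph).} To answer a query, I would run a deterministic near-linear min-cut algorithm (Henzinger--Li--Rao--Wang, $m^{1+o(1)}$) on $H$. The difficulty is that $H$ may have $\Theta(m)$ edges rather than $\tO(n)$. To fix this, I would also maintain a Nagamochi--Ibaraki style sparse $k$-certificate with $k=\delta$, which has $O(n\delta)$ edges. Contracted onto $H$, parallel edges can be merged into weighted edges, so $H$ has at most $\min\{(n/\delta)^2,\ n\delta\}\cdot n^{o(1)}$ weighted edges. Taking the minimum over the two expressions, this is always at most $n^{1+o(1)}$. The query then costs $n^{1+o(1)}$, and an update costs $n^{o(1)}$ plus the cost of maintaining the certificate.

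\textbf{Step 3 (certificate maintenance and amortized variant).} I would maintain the weighted contracted graph explicitly as adjacency counters between clusters, updating the counters whenever clusters change. A dynamic deterministic certificate is not actually needed, because merging parallel edges already bounds the size of $H$ by $(n^{1+o(1)}/\delta)^2$, and this is at most $n^{1+o(1)}$ whenever $\delta\ge\sqrt n$. When $\delta<\sqrt n$, the graph has few edges only if it is sparse, which need not hold. For that regime I would instead use a deterministic dynamic spanning-forest packing of $\delta$ forests, in which each forest is maintained with deterministic worst-case $n^{o(1)}$ dynamic connectivity (Chuzhoy et al.). Each update then touches $O(\delta)$ forests, giving $\delta\, n^{o(1)}\le n^{1+o(1)}$. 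For the $\tO(n)$ amortized bound, I would substitute the amortized polylogarithmic deterministic connectivity structure of Holm--de Lichtenberg--Thorup and an amortized expander pruning procedure with polylogarithmic overhead.

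The main obstacle is Step 1. The deterministic expander decomposition must support worst-case updates while each update perturbs the contracted graph only slightly, and the contraction must provably preserve all non-trivial minimum cuts, not merely approximate them. My first attempt would follow the deterministic Kawarabayashi--Thorup analysis with a periodic rebuild of the decomposition, spread over many updates in the standard way to convert amortized bounds to worst-case ones, at a cost of $n^{o(1)}$ per update.
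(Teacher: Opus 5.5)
Your Step~1 is a placeholder rather than a proof, and it is where the whole difficulty lies. You need a deterministic Kawarabayashi--Thorup contraction that is valid after \emph{every} update, and you do not supply one. The trimming and shaving thresholds are tied to the current $\delta_G$ and to in-cluster degrees, trimming cascades, and expander pruning after a single deletion can peel off a large-volume set. So neither correctness nor a worst-case bound on how much $H$ changes follows from a dynamic expander decomposition of $G$. The paper itself notes that deterministic NMC constructions are static and not easily dynamized.

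Your fallback, a periodic rebuild spread over many updates, lacks the idea that makes a rebuild correct. A contraction computed on a stale graph $G_0$ is only guaranteed to keep the non-trivial \emph{minimum} cuts of $G_0$ intact. After $t$ updates, a minimum cut of $G_t$ may be a non-minimum cut of $G_0$ that the contraction has destroyed. The paper resolves this with objects that preserve all \emph{near}-minimum cuts. A strong NMC sparsifier keeps every non-trivial cut of value at most $\lambda_G+\delta_G/10$, and a maximal $2\lambda_G$-packing of forests keeps every cut of value at most $2\lambda_G$. By \Cref{lemma:might-as-well-ni,lemma:might-as-well-nmc}, the stale object with the updates replayed then stays correct for $\Theta(\lambda_G)$, respectively $\Theta(\delta_G)$, updates. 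Because this validity window is short, the rebuild also cannot be run on $G$ itself. A static contraction of $G$ costs $\tO(m)$, i.e.\ $\tO(m/\delta_G)$ per update, which can be $\Omega(n^2/\delta_G)$. It must instead be run on the $O(n\lambda_G)$-edge certificate, costing $\tO(n\lambda_G)$ and hence $\tO(n)$ per update.

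Step~2 also contains an arithmetic error. $\min\{(n/\delta)^2,\, n\delta\}$ is not always $n^{1+o(1)}$: at $\delta=n^{1/3}$ both terms equal $n^{4/3}$, so your query could cost $n^{4/3+o(1)}$. Contracting the certificate along cores computed on $G$ does not inherit Kawarabayashi--Thorup's $\tO(m/\delta_G)$ edge bound with $m=O(n\delta_G)$. You get $\tO(n)$ edges only by computing the contraction on the certificate itself, which is simple, has minimum degree $\delta_G$, and has $O(n\delta_G)$ edges.

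With both fixes your plan becomes the proof in \Cref{sec:simple-deterministic}:
\begin{itemize}
\item Maintain a maximal $n$-packing of forests by \Cref{lemma:packing-of-forests}; this alone accounts for the $n^{1+o(1)}$ worst-case or $\tO(n)$ amortized cost.
\item Every $\delta_G/40$ updates, rebuild in the background a strong NMC sparsifier of the $2\delta_G$-NI sparsifier via \Cref{theorem:deterministic-nmc-sparsifier}, and replay the queued updates.
\item Answer a query by a static deterministic minimum-cut computation on the $\tO(n)$-edge sparsifier, taking the minimum with $\delta_G$.
\end{itemize}
The main proof in \Cref{sec:deterministic-alg-overview} follows the same scheme, but builds the dynamic \cite{MN20} structure of \Cref{theorem:dynamic-mn20} on a $2\lambda_G$-NI sparsifier instead, so queries take $\tO(n)$ without any contraction. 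Neither proof needs a dynamic expander decomposition.
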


Using our framework, we also give the first algorithm that maintains the edge connectivity of a fully dynamic unweighted multigraph using $o(n^2)$ worst-case update time, without any restriction on $\lambda_G$.
\begin{theorem}\label{theorem:multigraph-edge-connectivity}
There exists a deterministic algorithm that maintains the edge connectivity $\lambda_G$ of a dynamic unweighted multigraph $G$ on $n$ vertices (namely undergoing edge insertions and deletions), using $\tO(n^{3/2})$ worst-case update and query time.
\end{theorem}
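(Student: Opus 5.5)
We have to prove Theorem~\ref{theorem:multigraph-edge-connectivity}: a deterministic algorithm for unweighted multigraphs with $\tO(n^{3/2})$ worst-case update and query time.

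The plan is a threshold argument on $\lambda_G$ at $\sqrt n$, combined with sparsification that is valid for multigraphs.

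\textbf{Step 1: sparsify.} Maintain a Nagamochi--Ibaraki style certificate: a union of $k$ edge-disjoint maximal spanning forests $F_1,\dots,F_k$ of $G$, with $k=\Theta(\sqrt n)$ (or larger where needed). Any such certificate $H$ satisfies $\lambda_H=\lambda_G$ whenever $\lambda_G<k$, and $\lambda_H\ge k$ otherwise. In a multigraph the parallel copies of an edge are treated as distinct edges, so this statement still holds. $H$ has at most $k(n-1)$ edges.

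Each forest can be maintained deterministically with a worst-case dynamic spanning forest structure in which $F_i$ spans $G\setminus(F_1\cup\dots\cup F_{i-1})$. One update to $G$ causes $O(1)$ changes to each level, which cascade through the levels. This gives $O(k)$ changes to $H$ per update, each handled in $n^{o(1)}$ or polylog worst-case time by known deterministic connectivity structures. If that cascade is too expensive, I would fall back to recomputing $H$ from scratch in a background, periodic-rebuild fashion: an $\tO(m')$ Nagamochi--Ibaraki computation spread over $\sqrt n$ updates, with the intervening updates buffered.

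\textbf{Step 2: small connectivity.} On a query, run a deterministic exact min-cut algorithm on $H$. $H$ has $O(n^{3/2})$ edges, so Henzinger--Rao--Wang or Gabow's $\tO(\lambda m)$ algorithm restricted to $\lambda<\sqrt n$ needs care. Gabow gives $\tO(\lambda m)=\tO(n^2)$, which is too slow. Instead use a deterministic near-linear-time min cut for weighted graphs (Henzinger--Li--Rao--Wang, $m^{1+o(1)}$), or better the $\tO(m)$ deterministic bound for simple/unweighted graphs where available. Compressing parallel edges into weights keeps $H$ at $O(n^{3/2})$ weighted edges, and a near-linear deterministic algorithm takes $\tO(n^{3/2})$ if I use a polylog-overhead version. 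This case answers exactly when $\lambda_G<\sqrt n$.

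\textbf{Step 3: large connectivity.} When $\lambda_G\ge\sqrt n$, the min cut cannot be a singleton-sparse structure in general, because multigraphs defeat the Kawarabayashi--Thorup contraction. Here I would instead use the weighted representation directly. $G$ as a weighted simple graph has at most $n^2$ distinct pairs, so rerunning from scratch costs $\tO(n^2)$, which is too slow. The fix is to scale the same certificate idea: a weighted $k$-certificate with $k=\lambda$ can be stored as at most $n-1$ weighted forests after contracting? That fails.

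A more promising route is the tools developed earlier in the paper, which I would assume provide a dynamic structure reducing min cut to $\tO(n)$ work per $\lambda$-scale. I would combine that structure with the fact that, for $\lambda\ge\sqrt n$, the minimum degree $\delta\ge\lambda$. A cut of value near $\lambda$ then either is trivial (tracked by maintaining degrees in a heap) or has both sides containing a vertex set on which the paper's framework gives $\tO(n^{3/2})$.

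\textbf{Main obstacle.} The main obstacle is Step 3, namely handling large $\lambda$ deterministically in multigraphs without sparsification. I would try first to apply the paper's framework, treating multiplicities as weights, and balance the two regimes at $k=\sqrt n$.
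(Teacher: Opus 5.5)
\textbf{Assessment.} Your Steps 1--2 match the paper's handling of the regime $\lambda_G\le\sqrt n$: maintain a maximal $2\sqrt n$-packing of forests via \Cref{lemma:packing-of-forests}, and on a query run a static near-linear-time min-cut algorithm on its $O(n^{3/2})$ edges. The genuine gap is Step 3, which you leave unresolved. The route you sketch there (minimum degree plus an unspecified non-trivial-cut argument) is unnecessary. It is also close to the NMC-style contraction reasoning that you yourself note breaks for multigraphs.

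\textbf{The missing idea.} It sits in the sentence where you dismiss rebuilding from scratch at cost $\tO(n^2)$. You only have to pay that cost once per $\Theta(\sqrt n)$ updates, not once per update. Merge parallel edges into a weighted simple graph, which has at most $\binom n2$ edges; each multigraph insertion or deletion is then a $\pm1$ weight change. Build on this weighted graph the structure of \Cref{theorem:dynamic-mn20}:
\begin{itemize}
\item construction costs $\tO(m_0)=\tO(n^2)$;
\item updates cost $\polylog(n)$ and queries cost $\tO(n)$;
\item by the tree-packing stability of \Cref{lemma:tree-packing-stability}, it stays valid for $c\lambda_{G_0}=\Omega(\sqrt n)$ updates whenever $\lambda_{G_0}=\Omega(\sqrt n)$.
\end{itemize}
That lemma is a statement about weighted graphs, so multiplicities cause no difficulty. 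Spread each rebuild over $\Theta(\sqrt n)$ updates in the background, buffer the updates that arrive meanwhile, and replay them before switching. This gives $\tO(n^2/\sqrt n)=\tO(n^{3/2})$ worst-case update time and $\tO(n)$ query time in this regime, with no reference to degrees or trivial cuts.

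\textbf{Regime switching.} You still need to decide between the two regimes, which the paper does with epochs of $c\sqrt n/2$ updates. Since queries cost no more than updates, $\lambda_G$ can be computed after every update and so is always known.
\begin{itemize}
\item When $c\sqrt n/10$ updates remain in an epoch and $\lambda_G>(1-c/2)\sqrt n$, start the background build.
\item This guarantees that whenever $\lambda_G>\sqrt n$ at the start of the next epoch, a structure of small age (at most $c\sqrt n/10$) exists and remains valid for the whole epoch.
\item Otherwise, answer queries from the forest packing as in your Step 2.
\end{itemize}
Your observation that the certificate $H$ satisfies $\lambda_H=\lambda_G$ when $\lambda_G<k$ and $\lambda_H\ge k$ otherwise fits naturally into this scheme.
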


\subsection{Related Work}\label{sec:related}

\paragraph{Dynamic Approximate Minimum Cut.}
In contrast to the exact setting, where the $O(n^2)$ barrier for general $\lambda_G$ was broken only recently, sublinear algorithms for \emph{approximating} the edge connectivity have been known for some time.
An algorithm that maintains a multiplicative $\sqrt{2+o(1)}$-approximation of the edge connectivity using polylogarithmic amortized update time was provided in~\cite{TK00}.
The approximation ratio was later improved to $1+o(1)$, at the cost of an $\tO(\sqrt{n})$ worst-case update time~\cite{Tho07}.
In recent years, an algorithm nearly matching the best properties of both algorithms, i.e. giving an approximation ratio  $1+o(1)$ using $n^{o(1)}$ update time, was given in \cite{ElHL25,ElHL26}.

\paragraph{Non-trivial minimum cut sparsifiers.}
One of the main tools introduced in recent works on finding a minimum cut in a simple graph is the notion of a \emph{non-trivial minimum cut sparsifier}, a sparse graph that preserves all minimum cuts that are non-trivial, i.e. contain more than one vertex.
This idea was first introduced in deterministic sequential algorithms \cite{KT19,HRW17,Sar21}, and was later leveraged by randomized algorithms in several models of computation~\cite{GNT20,AD21,AEGLMN22}.
Constructions of such sparsifiers are central to all recent algorithms using $o(n^2)$ worst-case update time for maintaining the edge connectivity for general $\lambda_G$ in simple graphs~\cite{GHN+23,KK26a,HKMR25}.

\section{Technical Overview}
\label{sec:tech-overview}
We now give a technical overview of our algorithms and their analysis.
Our main technical contribution is a new framework for maintaining the edge connectivity of a dynamic graph based on dynamizing the minimum cut algorithm of \cite{MN20}.
Our framework achieves $o(n^2)$ worst-case update time for all values of $\lambda_G$.
In contrast, all previous sublinear algorithms (in $m$) for general $\lambda_G$ are based on constructing a non-trivial minimum cut sparsifier.
This technique is sufficient for obtaining our deterministic and multi-graph algorithms (\Cref{theorem:deterministic-edge-connectivity,theorem:multigraph-edge-connectivity}), but not for obtaining our main result of worst-case update time $o(n)$ for all values of $\lambda_G$ (\Cref{theorem:sublinear-edge-connectivity}).
Our second contribution is combining our framework with a non-trivial minimum cut sparsifier to yield our main result.

As mentioned above, our main technical result is a dynamic (and deterministic) version of the minimum cut algorithm of \cite{MN20}.
We dynamize that algorithm by incremental rebuilding, where we divide updates into epochs comprising $O(\lambda_G)$ updates each.
In each epoch we build a specialized data structure that is easy to update during the epoch, and allows answering edge connectivity queries without accessing the entire graph.
Towards the end of the epoch, we reconstruct the data structure in the background and then switch to the new data structure at the beginning of the next epoch.
Throughout, the \emph{age} of a data structure is the number of updates that have been applied to the graph since it was constructed, and a data structure is called \emph{valid} if its age is at most the number of updates it can handle.
We denote the  graph after $t$ updates by $G_t$ and the number of edges in $G_t$ by $m_t$.
The following theorem summarizes the guarantees of our algorithm.
\begin{theorem}[Dynamic Version of \cite{MN20}]
    \label{theorem:dynamic-mn20}
    There exists a deterministic algorithm that, given an initial weighted graph $G=G_0$ on $n$ vertices, constructs in time $\tO(m_0)$ a data structure that supports the following operations:
    \begin{enumerate}
        \item \texttt{update(e,w)}: Update the weight of an edge $e$ by $w\in \set{\pm 1}$ in $\polylog (n)$ time.
        \item \texttt{query()}: Return the edge connectivity of the current graph $G_t$ in $\tO(n)$ time.
    \end{enumerate}
    The data structure is valid for $c\cdot \lambda_{G_0}$ updates for some small constant $c>0$.
\end{theorem}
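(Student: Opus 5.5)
We would dynamize the Mukhopadhyay--Nanongkai (MN20) algorithm, which rests on two facts.

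First, a packing of $O(\log n)$ spanning trees, obtained deterministically from a greedy or MWU tree packing, contains a tree $T$ that \emph{2-respects} some minimum cut. That is, $T$ crosses the cut in at most two edges.

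Second, for a fixed tree $T$, the minimum over tree-edge pairs $(e,f)$ of $\mathrm{cut}(e,f)$ can be computed with $\tO(n)$ \emph{cut queries} against a data structure that stores the graph edges as points, indexed by the preorder (Euler-tour) positions of their endpoints. Here $\mathrm{cut}(e,f)$ is the weight of the cut obtained by removing $e$ and $f$ from $T$. Each cut query is an orthogonal range-sum query on these points. The MN20 procedure (heavy-path decomposition plus an interval/monotone-matrix search) needs only $\tO(n)$ such range queries per tree, once the tree is fixed.

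At construction time we therefore do three things. We compute the tree packing $T_1,\dots,T_k$ on $G_0$ in $\tO(m_0)$ time. For each $T_i$, we compute the Euler-tour ordering. We then build a dynamic 2D range-counting structure, such as a weighted range tree with Fenwick inner structures, over the points $(\mathrm{pos}_i(u),\mathrm{pos}_i(v))$ for each edge $uv$. This costs $\tO(m_0)$ in total.

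The operations then work as follows.

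\texttt{update(e,w)} changes the weight of one point in each of the $k=O(\log n)$ range structures, in $\polylog(n)$ time. The trees themselves are never changed.

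\texttt{query()} runs the MN20 2-respecting search on each $T_i$ over the \emph{current} graph $G_t$. Every value $\mathrm{cut}_{G_t}(e,f)$ it needs is expressible as $O(1)$ range sums on the updated structures. This gives $\tO(n)$ time per tree, and we return the minimum. We also return the minimum weighted degree, maintained in a heap, to handle 1-respecting and trivial cuts (MN20's search already covers 1-respecting cuts).

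Correctness for $t\le c\lambda_{G_0}$ updates goes as follows. Each update changes every cut value by at most $1$, so after $t$ updates every cut changes by at most $t$. The computed value $\min_i\min_{e,f}\mathrm{cut}_{G_t}(e,f)$ is always the weight of an actual cut of $G_t$, so it is at least $\lambda_{G_t}$. For the reverse inequality, we need some $T_i$ to 2-respect some minimum cut of $G_t$, not merely of $G_0$. This is the main obstacle.

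To handle it, we would use the quantitative form of the tree-packing lemma. For a packing of total weight in which every edge has load at most $1/\mathrm{(}$packing value$\mathrm{)}$ times roughly $2/\lambda_{G_0}$, any cut $C$ of $G_0$ with $w_{G_0}(C)<(1+\epsilon)\lambda_{G_0}$ is crossed by an average tree fewer than $3$ times. Hence some tree of the packing 2-respects it; for Thorup/Karger-style packings, a constant fraction of the trees do.

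Now let $C$ be a minimum cut of $G_t$. Then $w_{G_0}(C)\le \lambda_{G_t}+t\le\lambda_{G_0}+2t\le(1+2c)\lambda_{G_0}$. So $C$ is a $(1+2c)$-near-minimum cut of $G_0$. Choosing $c$ small enough, and constructing the packing so that the averaging argument has slack up to $(1+2c)$-approximate cuts, guarantees that some $T_i$ crosses $C$ at most twice in the tree sense. Tree crossing depends only on $T_i$ and the vertex bipartition, not on the edge weights. Therefore the MN20 search on $T_i$ over $G_t$ finds a value at most $w_{G_t}(C)=\lambda_{G_t}$.

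The delicate points are twofold. First, the deterministic packing must be made robust to near-minimum cuts with explicit constant slack, e.g.\ via a $(1-\delta)$-approximate greedy packing with $O(\log n/\delta^2)$ trees. Second, edges inserted after construction, which may join vertices arbitrarily, must be handled by the range structures. They are, since the points are indexed by vertex positions, which are fixed.
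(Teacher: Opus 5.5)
Your architecture matches the paper's: a packing fixed at time $0$, a dynamic 2D range structure per tree over post-order positions (so inserted edges are just new points), the MN20 search run on $G_t$ at query time, and the same stability argument $w_{G_0}(C)\le \lambda_{G_t}+t\le\lambda_{G_0}+2t$. That argument shows a minimum cut of $G_t$ is a near-minimum cut of $G_0$, hence $2$-respects some tree.

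\textbf{Main gap: building the packing.} You assert that a greedy or MWU packing gives $O(\log n/\delta^2)$ trees in $\tO(m_0)$ deterministic time. For these methods, however, the number of iterations is governed by the width of the packing LP, which is proportional to $\lambda_{G_0}$. Each iteration is an MST computation and contributes a tree. This is the $O(\epsilon^{-2}\lambda_{G_0})$ trees and $\tO(\epsilon^{-2}m_0\lambda_{G_0})$ time of the PST bound.

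For a weighted graph with large $\lambda_{G_0}$ this breaks all three bounds at once:
\begin{itemize}
\item construction costs $\tO(m_0\lambda_{G_0})$;
\item every update must touch every tree's range structure;
\item every query runs the $2$-respecting search on every tree.
\end{itemize}
Sampling a few trees from a fractional packing, as Karger does, is randomized, so it does not help here.

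The missing step is to lower the connectivity deterministically first. Build the deterministic skeleton of HLRW24 (\Cref{lemma:skeleton-graph-construction}) with a small constant $\epsilon$. It has minimum cut $\polylog(n)$ and $\tO(m_0)$ edges, and after rescaling it preserves every cut of value at most $\alpha\lambda_{G_0}$ up to a factor $1\pm\epsilon$. Then pack trees in the skeleton, which gives $\polylog(n)$ trees in $\tO(m_0)$ time, and apply Karger's lemma there. A $(1+2c)$-near-minimum cut of $G_0$ is then a $(1+2c)(1+\epsilon)/(1-\epsilon)$-near-minimum cut of the skeleton, so it is still $2$-respected by some tree once $c$ and $\epsilon$ are small.

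\textbf{Smaller gap: determinism of the query.} You treat the MN20 $2$-respecting search as a sequence of range-sum queries. As the paper notes, MN20 finds the interested pairs via random sampling queries; only the evaluation of the resulting $\tO(n)$ candidate cuts is pure cut-query work. Keeping the algorithm deterministic needs a deterministic replacement.

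The paper's replacement is a binary search. All vertices that $u$ is interested in lie on one root-to-leaf path (\Cref{observation:interested-edges-structure}), and the interest condition is monotone along that path. So one alternates two searches:
\begin{itemize}
\item a binary search over the children of the current vertex, ordered by post-order so that groups of their subtrees form contiguous intervals;
\item a binary search along the current heavy path.
\end{itemize}
Each test costs $O(1)$ cut queries on unions of $O(1)$ post-order intervals. This gives $\polylog(n)$ time per vertex and $\tO(n)$ per tree.
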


We now give an overview of the algorithm and its analysis.
The algorithm of \cite{MN20} is based on the $2$-respecting min-cut framework of \cite{Kar00}.
Given a tree $T=(V,E_T)$, a cut $C\subseteq V$ is said to $2$-respect $T$ if $|E_T \cap C\times (V\setminus C)|\le 2$, where $E_H(S,T)$ is the set of edges in a graph $H$ between vertex subsets $S,T\subseteq V$.
The key idea of this celebrated framework is that one can find in near-linear time a set of spanning trees of a graph $G$ such that every minimum cut $C$ of $G$ $2$-respects some tree $T$ in the packing.
We extend the original analysis of \cite{Kar00} to show that every cut $C$ that is near-minimum, namely $\mintcut_G(C)\le (1+\eta)\lambda_G$ for a small constant $\eta>0$ where $\mintcut_G(C)$ is the weight of the cut $C$, also $2$-respects some tree $T$ in the packing.
A similar result was previously shown in \cite{CQX19} in the context of bounding the number of approximate minimum cuts in the graph.
The following lemma proves this fact, and uses it to show that every minimum cut of $G_t$ $2$-respects some tree in a packing of $G_0$.
To make the algorithm deterministic, we leverage the deterministic skeleton graph construction of \cite{HLRW24}.
\begin{lemma}
    \label{lemma:tree-packing-stability}
    There exists a deterministic algorithm that, given a weighted graph $G=G_0$ constructs a weighted tree packing $\T$ of $G$ with $\polylog (n)$ trees in $\tO(m_0)$ time.
    Furthermore, for every $t\le c\cdot \lambda_{G_0}$, for a small enough constant $c>0$, if $C\subseteq V$ is a minimum cut of $G_t$ then $C$ $2$-respects some tree in the packing.
\end{lemma}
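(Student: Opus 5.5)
The plan is to reduce to the static statement ``near-minimum cuts of $G_0$ 2-respect some tree'' and then argue stability under few updates. The first step is construction. We sparsify $G_0$ with the deterministic skeleton of \cite{HLRW24}: a graph $H$ with $\tO(n)$ edges, connectivity $\lambda_H=\Theta(\log n)$ after scaling, which preserves every cut up to a factor $(1\pm\epsilon)$ for a small constant $\epsilon$. On $H$ we run a greedy (multiplicative-weights) tree packing in the style of \cite{Kar00,TK00}, for instance Thorup's greedy packing, which deterministically produces $\polylog(n)$ trees. Their value $\sum_T w_T$ is at least $(1-\epsilon)\lambda_H/2$ times the total weight, and every edge has load at most $1$. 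The running time is $\tO(m_0)$ for the skeleton plus $\polylog(n)$ MST computations on an $\tO(n)$-edge graph.

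Next comes the static 2-respecting argument for near-minimum cuts, following \cite{Kar00,CQX19}. Let $C$ be a cut of $G_0$ with $\mintcut_{G_0}(C)\le (1+\eta)\lambda_{G_0}$. Then $\mintcut_H(C)\le (1+\eta)(1+\epsilon)\lambda_H/(1-\epsilon)$. Every edge of $H$ has load at most $1$, so
\[\textstyle\sum_T w_T\, |E_T\cap \partial C| \le \mintcut_H(C).\]
Normalize the packing weights to a distribution over trees. The expected number of tree edges crossing $C$ is then at most $\mintcut_H(C)/\mathrm{val}(\T)\le 2(1+\eta)(1+\epsilon)/(1-\epsilon)^2<3$ when $\eta,\epsilon$ are small constants. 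Each tree crosses $C$ at least once, so by Markov-type averaging some tree (indeed a constant fraction of the weight) is crossed at most twice. In other words, $C$ 2-respects that tree.

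The third step is stability. Each update changes one edge weight by $\pm1$, so after $t$ updates every cut value changes by at most $t$. In particular $|\lambda_{G_t}-\lambda_{G_0}|\le t$. Let $C$ be a minimum cut of $G_t$. Then
\[\mintcut_{G_0}(C)\le \mintcut_{G_t}(C)+t=\lambda_{G_t}+t\le \lambda_{G_0}+2t\le(1+2c)\lambda_{G_0}.\]
Choosing $c\le\eta/2$ makes $C$ an $\eta$-near-minimum cut of $G_0$, and by the previous step it 2-respects some tree of $\T$. The packing depends only on $G_0$, so this conclusion does not depend on the updates.

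The main obstacle is the packing guarantee in the deterministic setting. The Nash-Williams bound gives value $\lambda/2$ only for an exact optimal fractional packing, whereas a greedy packing with $\polylog(n)$ trees guarantees value $(1-\epsilon)\lambda_H/2$ only when $\lambda_H=\Omega(\log n/\epsilon^2)$ relative to the number of trees. That is why the skeleton must be scaled so that $\lambda_H=\Theta(\epsilon^{-2}\log n)$. We must also check that the deterministic skeleton of \cite{HLRW24} gives a $(1\pm\epsilon)$ cut approximation for all cuts, not only for near-minimum ones; its guarantee on non-near-minimum cuts may be weaker. If the guarantee is weaker, we only need it for cuts of value at most $(1+\eta)\lambda$, which is the regime that skeleton constructions do control. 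The constants $\eta,\epsilon,c$ are then fixed so that the ratio bound above stays below $3$.
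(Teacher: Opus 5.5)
Your proposal is correct and follows essentially the paper's route. The paper also uses a \cite{HLRW24} skeleton to bring the connectivity down to $\polylog(n)$, then an approximate packing of value close to $\lambda_H/2$ (greedy/MWU in your version, \cite{PST91} in the paper), then Karger's averaging bound (which you reprove inline instead of citing \Cref{lemma:structural-packing-2-respecting}), and finally the observation that a minimum cut of $G_t$ has value at most $\lambda_{G_0}+2t$ in $G_0$.

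The one point to tighten is your hedge about the skeleton. You need its lower-bound guarantee for \emph{all} cuts, in order to lower-bound $\lambda_H$, and its upper bound only for near-minimum cuts. This is exactly the form stated in \Cref{lemma:skeleton-graph-construction}.
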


The second part of the framework is an efficient algorithm that given a tree $T$, finds a minimum cut that $2$-respects it.
This is where \cite{MN20} differs from the original algorithm of \cite{Kar00}, which relies on a clever dynamic programming technique to find the minimum cut among those that $2$-respects a given tree in near-linear $\tO(m)$ time.
The key insight of \cite{MN20} is that it suffices to check the value of $\tO(n)$ cuts instead of all $O(n^2)$ possible $2$-respecting cuts.
To efficiently query the values of these cuts, the algorithm constructs a data structure that allows querying the value of certain cuts in the graph, as follows.
\begin{definition}[Tree Cut-Query Oracle]
    \label{definition:T-cut-query-oracle}
    A \emph{tree cut-query oracle} is a data structure constructed for a rooted spanning tree $T$ of a weighted graph $G$, that supports querying the weight of the following cuts:
    \begin{enumerate}
        \item any cut that $2$-respects $T$; and
        \item any cut $C\subseteq V$ such that $C=\cup_{i\in [k]} S_i$ where each $S_i$ is a contiguous set of vertices encountered during a post-order traversal of $T$ and $k=O(1)$.
    \end{enumerate}
\end{definition}
\begin{lemma}
    \label{lemma:dynamic-T-cut-query-oracle}
    There exists a deterministic data structure for tree cut-query oracle that given an initial weighted graph $G=G_0$ and a fixed rooted spanning tree $T$ of $G_0$, with construction time $\tO(m_0)$.
    Furthermore, the data structure supports edge insertions/deletion to $G$ using $\polylog (n)$ worst-case update time, and can return the value of a cut in $\polylog (n)$ query time.
\end{lemma}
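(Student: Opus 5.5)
Fix a post-order numbering $\pi:V\to[n]$ of the rooted tree $T$. Then every subtree $T_v$ (the descendants of $v$) is a contiguous interval $[\ell_v,\pi(v)]$ of this order. I will reduce every cut in \Cref{definition:T-cut-query-oracle} to a constant number of \emph{interval-pair queries}. Such a query asks, for two intervals $I,J\subseteq[n]$, for the total weight $w(I,J)$ of edges $\{x,y\}$ with $\pi(x)\in I$ and $\pi(y)\in J$. Embed each edge $\{x,y\}$ of $G$ as the two points $(\pi(x),\pi(y))$ and $(\pi(y),\pi(x))$ in the grid $[n]\times[n]$, each carrying weight $w(e)$. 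Then $w(I,J)$ is a weighted orthogonal range sum over $I\times J$, halved when $I=J$ or, more generally, corrected for the overlap $I\cap J$ by inclusion--exclusion.

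For the data structure, I store the points in a deterministic 2D range tree. This is a balanced tree over the first coordinate, and each node holds a balanced BST (or Fenwick tree) keyed by the second coordinate that stores subtree weight sums. Both the tree shape and $\pi$ depend only on $T$, which is fixed, so the coordinate universe is the static set $[n]$. Construction therefore takes $\tO(m_0)$: compute $\pi$ by DFS, then insert the $2m_0$ points. A weight update on $e=\{x,y\}$ by $\pm1$ touches $O(\log n)$ outer nodes and performs an $O(\log n)$ update in each. This gives $O(\log^2 n)$ worst-case time per update, and a range sum costs the same. Edges not yet present simply start with weight $0$, or are inserted as new points. If the inner structures are Fenwick trees of size $n$, one uses dynamic BSTs keyed by coordinate instead, to keep the space at $\tO(m)$.

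It remains to express every queried cut as $O(1)$ range sums.

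Type 2: let $C=\bigcup_{i\le k}S_i$ with each $S_i$ an interval. Merge the intervals so they are disjoint, which leaves $k'\le k$ intervals. The complement $V\setminus C$ is then a union of at most $k'+1$ intervals $R_j$. Hence $\mintcut_G(C)=\sum_{i,j}w(S_i,R_j)$, which is $O(k^2)=O(1)$ rectangle queries over disjoint interval pairs, so no overlap correction is needed.

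Type 1: a cut that $2$-respects $T$ is determined by at most two tree edges, say the edges above $u$ and above $v$. If only one edge is cut, $C=T_v$, a single interval. If two edges are cut and the subtrees are disjoint, $C=T_u\cup T_v$, two intervals. If $v$ is a descendant of $u$, then $C=T_u\setminus T_v$. This is an interval minus a subinterval, hence a union of at most two intervals, namely $[\ell_u,\ell_v-1]$ and $[\pi(v)+1,\pi(u)]$. So every type 1 cut is a special case of type 2 with $k\le 2$, computed with $O(1)$ range sums.

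The step needing the most care is making the range tree deterministic and worst-case with $\polylog(n)$ bounds under arbitrary insertions and deletions. I would handle this by using the fixed universe $[n]$ of first coordinates, so the outer tree is a static balanced tree, and a worst-case balanced BST (e.g.\ red-black) for each inner structure, so that no amortized rebuilding is required. The remaining details are routine bookkeeping: checking that each edge contributes exactly once to $\sum_{i,j}w(S_i,R_j)$, because $S_i\subseteq C$ and $R_j\subseteq V\setminus C$ are disjoint, and handling empty intervals.
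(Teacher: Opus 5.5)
Your proposal is correct and follows the paper's proof: the same post-order numbering, the same embedding of each edge as two weighted points $(\pi(x),\pi(y))$ and $(\pi(y),\pi(x))$, the cut value computed as a sum of rectangle queries over (interval of $C$) $\times$ (interval of $V\setminus C$), and the same reduction of $2$-respecting cuts to at most two post-order intervals. The only difference is that you build the dynamic 2D range-sum structure explicitly, as a range tree with a static outer tree over $[n]$ and worst-case balanced inner BSTs, instead of citing Chazelle's structure with its de-amortization remark; this is fine and makes the worst-case $\polylog(n)$ bound self-evident.
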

The original algorithm of \cite{MN20} builds a static tree cut-query oracle, based on the 2D semigroup range searching data structure of \cite{Chazelle88}.
It then adds sampling queries to the oracle, which allows it to find the $\tO(n)$ cuts relevant to query.
We obtain our dynamic algorithm, by using the dynamic version of the data structure of \cite{Chazelle88}.
We note that while it is possible to show that the random sampling method is robust to edge updates, and thus can be used in our dynamic algorithm, we require our framework to be deterministic; 
thus, we introduce a new deterministic procedure for finding the $\tO(n)$ cuts relevant to query.
Our procedure leverages the structural insights of \cite{MN20} to design a deterministic binary search procedure that finds the relevant cuts in $\tO(n)$ time using $O(\log n)$ cut queries per relevant cut.
Our deterministic procedure works also in the static setting, which may be of independent interest.
We encapsulate our deterministic variant of \cite{MN20} in the following lemma, whose proof appears in \Cref{sec:dynamic-mn20} and which immediately implies \Cref{theorem:dynamic-mn20}.
\begin{lemma}
    \label{lemma:encapsulation-of-mn20}
    There exists a deterministic algorithm that, given a spanning tree $T$ of a weighted graph $G$ and a corresponding tree cut-query oracle, finds in time $\tO(n)$ a minimum cut among those that $2$-respect $T$.
\end{lemma}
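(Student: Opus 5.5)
We follow the structure of \cite{MN20} and replace its random sampling with a deterministic binary search over cut queries. The 1-respecting cuts, i.e. the subtrees $T_v$, are handled directly: there are $n-1$ of them, each is a contiguous post-order interval, so the oracle returns each value in $\polylog(n)$ time and we take the minimum in $\tO(n)$ total. For cuts that $2$-respect $T$ through two tree edges $(v,p(v))$ and $(u,p(u))$, the value is $\mintcut(T_v)+\mintcut(T_u)-2w(T_v,T_u)$ when $T_u,T_v$ are disjoint, and $\mintcut(T_v)+\mintcut(T_u)-2w(T_u, V\setminus T_v)$ when $T_u\subset T_v$. Both expressions are available from the oracle, since the relevant vertex sets are unions of $O(1)$ post-order intervals. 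As in \cite{MN20}, we fix a heavy-light decomposition of $T$. This reduces the problem to $O(n\log n)$ pairs of vertex-to-path or path-to-path subproblems, with total path length $\tO(n)$. Within one heavy path $P$ we must find the best partner for each $v$.

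The structural insight of \cite{MN20} is that, for a fixed $v$, the only interesting partners $u$ are those for which $v$ is ``cross-interested'' in $u$ (or, in the nested case, ``down-interested''), meaning $w(T_v,T_u)>\mintcut(T_v)/2$. For each $v$ the interesting $u$ lie on a single root-to-leaf path of $T$. The reason is that if two incomparable $T_u,T_{u'}$ each received more than half of $\mintcut(T_v)$, their weights would sum to more than $\mintcut(T_v)$, which is impossible. The same holds for ancestor-type interest.

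Deterministically, we locate this path by a top-down descent. Starting at the root, at the current vertex $x$ we test each child $y$ of $x$ for $w(T_v,T_y)>\mintcut(T_v)/2$. At most one child can pass, and we step into it. To avoid paying for high degree, we descend along heavy paths instead. Along the heavy path through $x$, the quantity $w(T_v,T_y)$ is monotone non-increasing as $y$ moves down the path, because the subtrees are nested. We therefore binary search with $O(\log n)$ oracle queries for the deepest $y$ on the path that is still interesting. We then need to determine which light child, if any, to continue into. For this we binary search over the light children of that deepest point, ordered in post-order. A contiguous range of children forms a single post-order interval, and at most one light child can exceed the threshold. So a range query at each step of the search lets us either find that child or conclude it does not exist. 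The amortization that makes this work is that we only enter a light child at most $O(\log n)$ times, so each $v$ uses $O(\log^2 n)$ queries to obtain its interesting path as $O(\log n)$ heavy-path segments. I expect this to be the main obstacle. The technical point is that a binary search over a range of light children only works if the \emph{range sum} certifies the existence of at most one exceeding child. This is false as stated, because a range can exceed half the cut while no single child in it does. The fix I would try is to search for a child whose weight exceeds the threshold by recursing into whichever half of the range has the larger sum, comparing the two halves with two queries per step. Ending at a single child, we then test it directly.

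Once the interesting path of every $v$ is found, we proceed as in \cite{MN20}. Following the heavy-light decomposition, we group the $v$'s by the heavy paths they are interested in. For each heavy path $P$ and the set $A_P$ of vertices interested in it, we solve the path-pair problem: find the minimum over $v\in A_P$ and $u\in P$ of the 2-respecting cut value. \cite{MN20} solve this with a monotone (Monge-type) divide-and-conquer using $\tO(|A_P|+|P|)$ cut queries. That routine is already deterministic and only needs oracle queries. Since $\sum_P (|A_P|+|P|)=\tO(n)$, the total time is $\tO(n)$ times the polylogarithmic oracle cost. Finally, correctness follows because the minimum 2-respecting cut either has value at least the best 1-respecting cut, or has each edge interested in the other, so its pair is examined by some path-pair subproblem.
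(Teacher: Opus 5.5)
Your proposal is correct and is essentially the paper's proof: the paper likewise replaces the sampling of \cite{MN20} with interleaved deterministic binary searches (along a heavy path, and over the children of the current vertex by recursing into the half carrying more weight) to find all interested pairs with $\polylog(n)$ oracle queries per vertex, and then invokes the \cite{MN20} procedure on those pairs as a black box. There are two details to tighten: when the descent passes through an ancestor $y$ of $v$, the tested quantity must be $w(T_v,T_y\setminus T_v)$, with the child containing $v$ handled separately in the children search (the paper treats this as an explicit case), and the Monge routine of \cite{MN20} is applied to pairs of heavy paths, each side restricted to the vertices interested in the other, rather than to an arbitrary set $A_P$ against all of $P$, because the Monge property needs the rows to lie on a single path.
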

\begin{proof}[Proof of \Cref{theorem:dynamic-mn20}]
    Begin by constructing a tree packing $\T$ with $|\T|\le \polylog(n)$ of $G$ using \Cref{lemma:tree-packing-stability}.
    Then, for each tree $T\in \T$, construct a weighted tree cut-query oracle using \Cref{lemma:dynamic-T-cut-query-oracle}.
    When encountering an edge update, update each tree cut-query oracle accordingly.
    Finally, to answer a query for the edge connectivity of $G_t$, find the minimum cut that $2$-respects each tree $T\in\T$ using \Cref{lemma:encapsulation-of-mn20} and return the minimum among them.
    The crucial observation is that as long as $t\le c\cdot \lambda_G$, for an appropriate constant $c$, every minimum cut of $G_t$ $2$-respects some tree $T^*\in \T$ by \Cref{lemma:tree-packing-stability}
    Thus, the algorithm finds it, or another cut of equal or smaller value, when finding the minimum cut that $2$-respects $T^*$.

    To conclude the proof we analyze the time complexity of the algorithm.
    Updating each tree cut-query oracles takes $\polylog (n)$ time by \Cref{lemma:dynamic-T-cut-query-oracle}, and the total update time over all $T\in \T$ is $\polylog (n)$.
    In addition, the construction of the tree packing and all the tree cut-query oracles takes $\tO(m)$ time.
    We conclude that the overall time complexity of the algorithm is $\tO(m)$ for the construction and $\polylog (n)$ for each update.
\end{proof}

\paragraph{Additional Tools.}
Before proceeding to give an overview of the proofs of \Cref{theorem:sublinear-edge-connectivity,theorem:deterministic-edge-connectivity,theorem:multigraph-edge-connectivity}, we introduce several additional tools that are used throughout the paper.
The first is a maximal $k$-packing of forests.
\begin{definition}[Maximal $k$-Packing of Forests]
    \label{def:packing-of-forests}
    Given a graph $G = (V, E)$, a \emph{maximal $k$-packing of forests} is a collection of edge-disjoint forests $T_1, \ldots, T_k$ of $G$ such that for every $i \in [k]$ and every edge $e \in E \setminus \bigcup_{j\le i} T_j$, the set $T_i \cup \{e\}$ contains a cycle.
\end{definition}
By the classical result of Nagamochi and Ibaraki \cite{NI92}, a maximal $k$-packing of forests yields a subgraph with at most $k(n-1)$ edges that preserves all cuts of value at most $k$ in $G$, and is called a $k$-NI sparsifier.
The following lemma summarizes several algorithms for maintaining a maximal packing of forests, which appear in previous work on dynamic graph algorithms, see e.g. \cite{ADK+16,KK26a}.
We provide a proof of this lemma for completeness in \Cref{sec:packing-of-forests}.
\begin{lemma}
    \label{lemma:packing-of-forests}
    There exist fully dynamic algorithms that maintain a maximal $k$-packing of forests of a dynamic unweighted (multi)graph $G$ on $n$ vertices with the following guarantees:
    \begin{itemize}
        \item A deterministic algorithm with worst-case update time  $k\cdot n^{o(1)}$.
        \item A deterministic algorithm with amortized update time  $k\cdot \polylog (n)$.
        \item A randomized algorithm with worst-case update time  $k\cdot \polylog (n)$.
    \end{itemize}
\end{lemma}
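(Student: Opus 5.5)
The plan is to maintain the packing greedily, level by level. Forest $T_i$ will be a maximal spanning forest of the graph $G_i = G \setminus \bigcup_{j<i} T_j$. This greedy characterization is exactly the maximality condition of \Cref{def:packing-of-forests}: an edge outside $T_1,\dots,T_i$ either belongs to some $T_j$ with $j<i$, or it lies in $G_i$ and closes a cycle in the spanning forest $T_i$. So it suffices to run $k$ instances of a dynamic spanning forest algorithm $\mathcal{A}_1,\dots,\mathcal{A}_k$, where $\mathcal{A}_i$ maintains a spanning forest of $G_i$. For a multigraph we treat parallel edges as distinct edges; each instance handles them without change.

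The first step is a cascade argument. We use black-box dynamic spanning forest algorithms that change the maintained forest by $O(1)$ edges per update. For the worst-case deterministic bound, take the $n^{o(1)}$ worst-case deterministic algorithm of Chuzhoy et al.; for the amortized bound, take Holm--de Lichtenberg--Thorup with $\polylog(n)$ amortized time; for the randomized bound, take Kapron--King--Mountjoy, which is $\polylog(n)$ worst-case against an oblivious adversary. The cascade works as follows.
\begin{itemize}
\item Insertion of $e$ into $G$: insert $e$ into $G_1$. If $\mathcal{A}_1$ adds $e$ to $T_1$, nothing else changes. Otherwise $e$ is a non-tree edge of $G_1$, so it passes on to $G_2$, and so on down the levels.
\item Deletion of $e$: if $e \notin T_i$ for all $i$, delete it from every $G_j$ that contains it. If $e\in T_i$, deleting it from $G_i$ may cause $\mathcal{A}_i$ to add a replacement edge $f$ to $T_i$. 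That edge $f$ leaves $G_{i+1}$, which acts as a deletion at level $i+1$. This may in turn pull a replacement edge out of $G_{i+2}$, and so on.
\end{itemize}
The key invariant is that each level incurs only $O(1)$ updates per top-level update. With no recourse amplification this gives $O(k)$ sub-updates in total, hence total time $k$ times the per-update cost.

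The main obstacle is making sure each $\mathcal{A}_i$ changes its forest by only $O(1)$ edges per update, so the cascade stays linear in $k$. Standard replacement-edge algorithms have recourse $1$: an insertion adds at most one tree edge, and a deletion removes one tree edge and adds at most one. In the cascade, a single update at level $i$ therefore produces at most one insertion or deletion at level $i+1$. I would check this explicitly for each black box.

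For the amortized case, amortization across levels needs care, since each level receives at most one update per global update. Summing amortized costs is still valid because each instance's amortized bound holds over its own update sequence.

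For the randomized case, obliviousness needs care too, because the update sequence fed to $\mathcal{A}_{i+1}$ depends on $\mathcal{A}_i$'s random choices. To handle this I would first check whether the replacement edges chosen by Kapron--King--Mountjoy depend on its randomness. If they do, I would make each level's forest history-independent, for example a minimum spanning forest under random edge weights fixed in advance, so that later levels see an oblivious sequence.
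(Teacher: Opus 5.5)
Your proposal is correct and is essentially the paper's proof. You run $k$ cascaded dynamic spanning-forest instances on $G\setminus\bigcup_{j<i}T_j$; an insertion falls through until some level accepts it, and a deleted tree edge pulls at most one replacement edge from the next graph, so each level receives at most one update and the total cost is $k$ times the per-instance bound. Your obliviousness concern, which the paper does not raise, needs no history-independent fallback, and that fallback would require an exact weighted dynamic MSF with $\polylog(n)$ worst-case time, which the available tools do not provide. It suffices to give each level independent random bits: the input to level $i$ depends only on the external update sequence and the bits of levels $j<i$, with no feedback from $T_i$ into $G_i$, so it is oblivious to level $i$'s randomness, and a union bound over the $k$ levels finishes the argument.
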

The following lemma shows that maximal $k$-packings of forests are robust to edge updates.
It allows us to work on a $k$-NI sparsifier instead of $G$ itself, and thus decreases the processing time of our algorithms by reducing the number of edges.
The proof of this lemma is provided in \Cref{sec:robustness-edge-updates}.
\begin{lemma}
    \label{lemma:might-as-well-ni}
    Let $G=(V,E)$ be a graph, let $H$ be the union of a maximal $c\cdot \lambda_G$-packing of forests of $G$ for some $c>1$, and let $I=\set{e_1,\ldots,e_k}$ and $D=\set{e_1,\ldots,e_l}$ be sets of edge insertions and deletions respectively.
    If $k+l \le \tfrac{c-1}{2}\lambda_G$, then every minimum cut of $(H\cup I)\setminus D$ is a minimum cut of $(G\cup I)\setminus D$, and vice versa.
\end{lemma}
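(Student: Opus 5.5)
The plan is to compare cut values in $G':=(G\cup I)\setminus D$ and $H':=(H\cup I)\setminus D$, using the Nagamochi--Ibaraki property of $H$. Write $k'=c\lambda_G$.

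The key fact is the standard one for a maximal $k'$-packing of forests: for every $S\subseteq V$,
\[
\mintcut_H(S)\ge \min\{\mintcut_G(S),k'\}.
\]
Together with $H\subseteq G$, this says that every cut of value at most $k'$ in $G$ has the same value in $H$, and every other cut has value at least $k'$ in $H$. I will take this as the content of~\cite{NI92} cited above. If a self-contained argument is wanted, it goes as follows. Suppose $\mintcut_G(S)>\mintcut_H(S)$. Then some edge $e\in E(S,V\setminus S)$ is not in $H$. By maximality, for each $i\le k'$, $T_i\cup\{e\}$ contains a cycle. That cycle crosses the cut at $e$, so it must use another edge of $T_i$ crossing the cut. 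Hence every $T_i$ contributes at least one crossing edge, and $\mintcut_H(S)\ge k'$.

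Next I transfer this to the updated graphs. The same set $I$ is added and the same set $D$ is removed in both graphs. Since $H\subseteq G$, the deleted edges lying in $H$ are also in $G$. (I would assume $D\subseteq G\cup I$ and treat deletions of non-$H$ edges as affecting only $G$.) For every $S$ we get $\mintcut_{H'}(S)\le \mintcut_{G'}(S)$. Moreover, for a cut $S$ with $\mintcut_G(S)\le k'$ we have $\mintcut_{H'}(S)=\mintcut_{G'}(S)$: the crossing edges of $G$ and $H$ coincide, and the same insertions and deletions are applied to both. If instead $\mintcut_G(S)>k'$, then $\mintcut_H(S)\ge k'$, and after the updates
\[
\mintcut_{H'}(S)\ge k'-l\ge c\lambda_G-\tfrac{c-1}{2}\lambda_G=\tfrac{c+1}{2}\lambda_G .
\]

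Now fix a minimum cut $S^*$ of $G$. Its value in $G'$ is at most $\lambda_G+k\le \lambda_G+\tfrac{c-1}{2}\lambda_G=\tfrac{c+1}{2}\lambda_G$, so $\lambda_{G'}\le\tfrac{c+1}{2}\lambda_G$. Since $\lambda_G\le k'$, the cut $S^*$ has the same value in $H'$, and therefore $\lambda_{H'}\le\tfrac{c+1}{2}\lambda_G$ as well.

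To conclude, I split every cut $S$ into two cases. If $\mintcut_G(S)\le k'$, then $S$ has equal values in $G'$ and $H'$. If $\mintcut_G(S)>k'$, then $S$ has value at least $\tfrac{c+1}{2}\lambda_G$ in both graphs; for $G'$ this holds because $\mintcut_{G'}(S)\ge \mintcut_{H'}(S)$. Combined with the upper bounds on $\lambda_{G'}$ and $\lambda_{H'}$, this gives $\lambda_{G'}=\lambda_{H'}$. Every minimum cut of $G'$ or of $H'$ is then either a cut on which the two values agree, and hence minimum in both, or a cut whose value in both graphs is at least $\tfrac{c+1}{2}\lambda_G\ge\lambda$. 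In the latter case a minimum cut $S$ of $H'$ satisfies $\lambda=\mintcut_{H'}(S)\le\mintcut_{G'}(S)$, and $\mintcut_{G'}(S)\ge\lambda_{G'}=\lambda$ anyway.

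The main obstacle is the boundary case in this last step. A cut that is minimum in $H'$ might have strictly larger value in $G'$. This happens only if $\mintcut_G(S)>k'$ and the cut value reaches exactly $\tfrac{c+1}{2}\lambda_G$, which requires the deletion budget to be fully spent. I would handle it by strengthening the NI fact in the second paragraph to a strict inequality. If $\mintcut_G(S)>k'$, then some crossing edge $e$ is outside $H$, and the cycle argument gives at least one crossing edge per forest plus possibly more. That only yields $\ge k'$, not $>k'$. So, failing the strict inequality, I would absorb the slack by noting the budget $k+l\le\tfrac{c-1}{2}\lambda_G$ is shared between insertions and deletions. Precisely, $\lambda\le\lambda_G+k$ while every heavy cut is at least $c\lambda_G-l$, and
\[
c\lambda_G-l>\lambda_G+k\iff k+l<(c-1)\lambda_G,
\]
which holds strictly with room to spare under the stated bound. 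This strict gap rules out heavy minimum cuts entirely, so all minimum cuts of either graph are light cuts, and the equivalence follows cleanly.
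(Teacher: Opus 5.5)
Your proof is correct, and it is more complete than the paper's. The paper argues in one direction only. It takes a minimum cut $C$ of $G'$ and notes $\mintcut_G(C)\le\lambda_G+k+l<c\lambda_G$. From this it concludes that every crossing edge of $C$ survives in $H$, so $\mintcut_{H'}(C)=\mintcut_{G'}(C)$, and then simply asserts that the minimum cuts coincide.

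That last step implicitly needs $\lambda_{H'}\ge\lambda_{G'}$, i.e.\ a lower bound on all cuts of $H'$. The same bound is what drives the ``vice versa'' direction. The Nagamochi--Ibaraki property as quoted in the paper (cuts of value at most $k$ are preserved) does not provide it.

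Your stronger form $\mintcut_H(S)\ge\min\{\mintcut_G(S),c\lambda_G\}$, with its short cycle-crossing proof, supplies exactly this missing piece. Heavy cuts satisfy $\mintcut_{G'}(S)\ge\mintcut_{H'}(S)\ge c\lambda_G-l>\lambda_G+k\ge\lambda_{G'}\ge\lambda_{H'}$. So every minimum cut of either graph is light, and on light cuts the two values agree.

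Two small points for a clean write-up. First, use this strict comparison directly, rather than the bound $\tfrac{c+1}{2}\lambda_G$ followed by a patch for the tie case. Second, state the assumption $\lambda_G\ge 1$ that the strictness relies on. For $\lambda_G=0$ the packing is empty and no updates are allowed. If $E\neq\emptyset$, the ``vice versa'' direction then genuinely fails, since every cut is minimum in the edgeless $H$.
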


\subsection{Sublinear Fully Dynamic Edge Connectivity in Simple Graphs}
In this section we give an overview of the proof of \Cref{theorem:sublinear-edge-connectivity}, the complete proof appears in \Cref{sec:sublinear-edge-connectivity}.
Our main technical tool is an algorithm that maintains the edge connectivity of a fully dynamic simple graph with $\tO(n/\delta_G)$ worst-case update time, where $\delta_G$ is the minimum degree of $G$ (at the beginning of every epoch).
To obtain a worst-case update time of $\tO(n^{12/13})$, we combine it with the following algorithm of \cite{dVC25}.
\begin{theorem}[Theorem 1 of \cite{dVC25}]
    \label{theorem:small-lambda-edge-connectivity}
    There exists a deterministic algorithm that, given a dynamic simple graph $G$ on $n$ vertices undergoing edge insertions and deletions and an upper bound $\lambda_{\max}$ on the edge connectivity, maintains the edge connectivity $\lambda_G$ with $\tO(\sqrt{n}\lambda_{\max}^{5.5})$ worst-case update time if $\lambda_G \leq \lambda_{\max}$ always.
\end{theorem}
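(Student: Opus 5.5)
This statement is imported from \cite{dVC25}, and the paper uses it as a black box, so the natural route is to follow the tree-packing approach of \cite{Tho07} that \cite{dVC25} refines. The plan has four steps.

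First, reduce the graph size. Since $\lambda_G\le\lambda_{\max}$ always, I would maintain a maximal $(\lambda_{\max}+1)$-packing of forests using \Cref{lemma:packing-of-forests}, giving a $(\lambda_{\max}+1)$-NI sparsifier $H$. By \cite{NI92}, $H$ preserves every cut of value at most $\lambda_{\max}$, and hence $\lambda_H=\lambda_G$. The sparsifier has $O(\lambda_{\max}n)$ edges and changes by $O(1)$ edges per update in the amortized/randomized versions. For the deterministic worst-case statement I would instead use the $\lambda_{\max} n^{o(1)}$-time version, absorbed into the $\tO(\cdot)$ here as in \cite{dVC25}, or argue that only $\polylog$ forest edges change per update. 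Trivial cuts, i.e.\ single vertices, are handled separately by keeping all degrees in a heap, so the minimum degree $\delta_G$ is available in $O(\log n)$ time per update.

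Second, find a dynamically maintainable family of trees that some minimum cut crosses at most once. I would maintain a greedy tree packing of $H$ with $\poly(\lambda_{\max})\log n$ trees in the style of Thorup. Thorup's analysis shows that in a greedy packing of $\Theta(\lambda^7\log n)$ trees some tree $1$-respects a minimum cut, and \cite{dVC25} tightens this count. Each edge update changes the loads along the packing and causes a bounded number of tree changes per tree. Maintaining trees with load-based edge weights can use the standard dynamic MST machinery of \cite{HdLT01} on $H$, whose size depends only on $\lambda_{\max}$ and $n$.

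Third, for each tree $T$, maintain the minimum $1$-respecting cut $\min_v \mathrm{cut}(\text{subtree}(v))$. I would partition $T$ into $O(\sqrt n)$ clusters of size $O(\sqrt n)$ in the style of Frederickson. For every vertex, store the cut value of its subtree, decomposed into contributions from intra-cluster and inter-cluster edges. A graph-edge update changes the subtree cut value along two root paths, which can be applied lazily per cluster in $O(\sqrt n)$ time. A tree-edge change, i.e.\ a link or cut, requires recomputing $O(1)$ clusters in $\tO(\sqrt n\cdot\lambda_{\max})$ time, using the degree bound of $H$.

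Fourth, assemble the bound. Multiplying the per-tree cost by the number of trees and by the number of tree changes per update gives the claimed $\tO(\sqrt n\,\lambda_{\max}^{5.5})$.

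I expect the main obstacle to be bounding the number of tree changes per update with the sharp exponent. This requires a stability argument for greedy packings: one edge update perturbs loads only slightly, so the greedy trees change boundedly. The argument must be combined with the improved $1$-respect count of \cite{dVC25}. I would first try to carry the analysis through with Thorup's $\lambda^7$ packing, and then plug in the refined packing size.
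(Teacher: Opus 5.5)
The paper gives no proof of this statement. It is quoted as Theorem~1 of \cite{dVC25} and used purely as a black box. Judged as a stand-alone proof, your proposal has a genuine gap. It defers exactly the ingredients that make up the result of \cite{dVC25}, and nothing in it produces the exponent $5.5$.

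\emph{Number of trees.} You never state or prove how many greedy trees suffice. You cite Thorup's $\Theta(\lambda^7\log n)$ and propose to ``plug in the refined packing size'', but the refinement in \cite{dVC25} is not a drop-in replacement. As stated there, $\Theta(\lambda^3\log m)$ greedy trees guarantee only a $1$-respecting cut \emph{or} a trivial cut in some contracted graph. The algorithm must therefore also detect the second alternative, and a heap of vertex degrees of $G$ does not do that.

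\emph{Changes per update.} You explicitly leave open how many tree edges change per graph update. In an exactly maintained greedy packing these changes cascade: a swap in $T_i$ changes the loads of two edges, each of which may force a swap in $T_{i+1}$, and so on. So the number of changes is not obviously bounded, and \cite{Tho07} and \cite{dVC25} need a dedicated argument to control it. Without that argument, your Step~4 has no factors to multiply.

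\emph{Step~3 as stated.} Replacing a tree edge $e$ by a non-tree edge $f$ changes the bipartition, and hence the cut value, of every tree edge on the fundamental cycle of $f$. That cycle may pass through $\Theta(\sqrt n)$ clusters. The swap also reroutes every non-tree edge whose tree path used $e$. This is not a recomputation of $O(1)$ clusters. Nor is there a degree bound for $H$ to invoke. A $(\lambda_{\max}+1)$-NI sparsifier has $O(\lambda_{\max}n)$ edges but can have maximum degree $n-1$ (for example, if $G$ has a universal vertex, the first forest may be a spanning star). So a cluster of $O(\sqrt n)$ vertices may be incident to $\Theta(n)$ edges. Fixing this needs Frederickson-style degree reduction and clustering by edges, with cost governed by $\sqrt{m}$ for $m=O(\lambda_{\max}n)$ rather than by $\sqrt n$, and that feeds into the exponent you are trying to obtain. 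A minor point: inserting a non-tree edge $(u,v)$ changes subtree cut values only on the paths from $u$ and $v$ up to $\LCA(u,v)$, not up to the root.

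Your Step~1 is fine. The $\lambda_{\max}n^{o(1)}$ cost of the deterministic forest packing is not absorbed by $\tO(\cdot)$, but it is additive and dominated by the target bound.

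Overall, your outline names the right ingredients from \cite{Tho07}. However, the packing lemma, the bound on tree changes per update, and the per-swap cost of maintaining $1$-respecting cuts are all missing, so it does not establish $\tO(\sqrt n\,\lambda_{\max}^{5.5})$. For the purposes of this paper, citing \cite{dVC25}, as the authors do, is the appropriate route.
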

We obtain \Cref{theorem:sublinear-edge-connectivity} by using both algorithms, balanced by choosing $\lambda_{\max} = n^{1/13}$.
To ensure that the edge connectivity of the graph on which we apply the algorithm of \Cref{theorem:small-lambda-edge-connectivity} never increases beyond $n^{1/13}$, we maintain a maximal $n^{1/13}$-packing of forests of the graph.
Then, running \Cref{theorem:small-lambda-edge-connectivity} on the packing ensures that $\lambda_{\max} \le n^{1/13}$, and thus the algorithm maintains the edge connectivity (up to $n^{1/13}$) using $\tO(n^{12/13})$ worst-case update time.

We now provide an overview of the proof of our $\tO(n/\delta_G)$ worst-case update time algorithm.
The algorithm is based on our dynamic version of the \cite{MN20} minimum cut algorithm.
Unfortunately, the query time of our dynamic algorithm is linear in $n$ and applying it directly cannot yield worst-case update time $o(n)$.
We bypass this issue by decreasing the number of vertices in the graph while preserving the edge connectivity.
To do this we use a non-trivial (near) minimum-cut sparsifier (NMC sparsifier), a widely used tool in many minimum cut algorithms for simple graphs \cite{KT19,HRW17,GNT20,AD21,GHN+23,HKMR25,KK26a}.
\begin{definition}[NMC Sparsifier]
    \label{definition:weak-nmc}
    A \emph{weak (strong) non-trivial minimum cut sparsifier} of a graph $G=(V,E)$ is a contraction $H=(V_H,E_H)$ of $G$ such that for every cut $C\subseteq V$ in $G$ with $|C|>1$ and $\mintcut_G(C) \le \lambda_G + \frac{1}{10}\delta_G$, where $\delta_G$ is the minimum degree of $G$, no edge in $E(C,V\setminus C)$ is contracted in $H$ with constant probability (with probability $1$).%
    \footnote{The value $1/10$ is arbitrary and can be replaced by any small enough constant.}
\end{definition}
Note that previous work in the dynamic setting only used the fact that NMC sparsifiers preserve minimum cuts, this is sufficient for running a static algorithm on the sparsifier.
However, to use our framework of \Cref{theorem:dynamic-mn20} we need to ensure that the minimum cut of the sparsifier is preserved for $\Theta(\lambda_G)$ updates, which is not guaranteed by the usual definition of NMC sparsifiers.
Therefore, our algorithm leverages the fact that the contraction preserves near-minimum cuts, namely cuts of value at most $\lambda_G + \frac{1}{10}\delta_G$, which allows us to prove the following analog of \Cref{lemma:might-as-well-ni} for NMC sparsifiers.
The proof is provided in \Cref{sec:robustness-edge-updates}.
\begin{lemma}
    \label{lemma:might-as-well-nmc}
    Let $G$ be a graph, let $H$ be a weak (strong) NMC sparsifier of $G$, and let $I=\set{e_1,\ldots,e_k}$ and $D=\set{e_1,\ldots,e_l}$ be sets of edge insertions and deletions respectively.
    If $k+l \le \frac{1}{20}\delta_G$, then for each minimum cut $C\subseteq V$ of $G'\coloneqq (G\cup I)\setminus D$ such that $|C|>1$ with constant probability (with probability $1$), no edge in $E_{G'}(C,V\setminus C)$ is contracted in $(H\cup I)\setminus D$.
\end{lemma}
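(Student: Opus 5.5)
The plan is to show that every non-trivial minimum cut of $G'$ was already a near-minimum cut of $G$ in the sense of \Cref{definition:weak-nmc}, and then apply the sparsifier guarantee for that one cut. Fix a minimum cut $C\subseteq V$ of $G'\coloneqq (G\cup I)\setminus D$ with $|C|>1$; by symmetry we may also take $|V\setminus C|>1$, since otherwise we work with whichever side is non-trivial. Each single edge update changes the value of any fixed cut by at most $1$, and it also changes $\lambda$ by at most $1$. Hence, writing $k+l\le \frac{1}{20}\delta_G$, we get
\[
\mintcut_G(C) \le \mintcut_{G'}(C) + (k+l) = \lambda_{G'} + (k+l) \le \lambda_G + 2(k+l) \le \lambda_G + \tfrac{1}{10}\delta_G .
\]
Thus $C$ is a cut of $G$ with $|C|>1$ and value at most $\lambda_G+\frac{1}{10}\delta_G$. \Cref{definition:weak-nmc} then applies to $C$: no edge of $E_G(C,V\setminus C)$ is contracted in $H$, with probability $1$ for a strong sparsifier and with constant probability for a weak one. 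The probability in the weak case is over the randomness of $H$, which is independent of the fixed cut $C$; this is where we implicitly use an oblivious choice of $I,D$.

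Next we transfer this guarantee from $G$ to $G'$. The edge set $E_{G'}(C,V\setminus C)$ consists of edges of $E_G(C,V\setminus C)$ that were not deleted, together with inserted edges from $I$ that cross $C$. The surviving original edges are uncontracted by the previous step. For inserted edges, we must fix what ``contracted in $(H\cup I)\setminus D$'' means. $H$ is a contraction of $G$, given by a partition of $V$ into supernodes, and $(H\cup I)\setminus D$ applies the updates to this contracted graph with the same partition. So an inserted edge is contracted exactly when both endpoints lie in one supernode of $H$. Every supernode is connected by contracted edges of $G$. If a supernode met both $C$ and $V\setminus C$, some contracted edge would cross $C$ in $G$, contradicting the previous step. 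Therefore every supernode lies entirely on one side of $C$, and no crossing edge of $G'$, inserted or original, is contracted.

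The main subtle step is this last point. The argument must rely on ``no crossing edge contracted'', which is equivalent to ``$C$ is a union of supernodes'', rather than treating the edges edge-by-edge, because the inserted edges were never seen by the sparsifier. A secondary issue is the degenerate case where $|V\setminus C|=1$ but $|C|>1$. Either we apply the definition to $C$ as written, since the definition only requires $|C|>1$, or we note that a singleton side is trivially a union of supernodes. Either way the conclusion holds.
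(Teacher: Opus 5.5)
Your proof is correct and follows the paper's route. You bound $\mintcut_G(C)$ by counting updates; your $\lambda_G+2(k+l)$ is looser than the paper's $\lambda_G+k+l$ but still within $\lambda_G+\frac{1}{10}\delta_G$. You then invoke the sparsifier definition for this fixed cut and transfer the guarantee to $(H\cup I)\setminus D$, where your supernode argument spells out why inserted crossing edges stay uncontracted, a step the paper only asserts.

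One side remark is wrong, though harmless since your other option (applying the definition as written) suffices. A singleton side $\{v\}$ is not ``trivially a union of supernodes'': under star contraction $v$ is typically merged into a center. This is why such trivial cuts are meant to be excluded and are handled via the minimum degree instead.
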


Most NMC constructions have $\tO(n/\delta_G)$ vertices and $\tO(n)$ edges, which allows us to apply our dynamic minimum cut algorithm on the contracted graph with $\tO(n/\delta_G)$ query time.
However, the dynamic data structure of \Cref{theorem:dynamic-mn20} on an NMC sparsifier $H$ takes $\tO(|E_H|)=\tO(n)$ time to construct, and remains valid for only $\Theta(\lambda_G)$ updates, yielding worst-case update time $\tO(n/\lambda_G)$ which can be much worse than $\tO(n/\delta_G)$.
To overcome this barrier, we apply our framework on a maximal $2\lambda_G$-packing of forests of the NMC sparsifier.
By combining \Cref{lemma:might-as-well-ni,lemma:might-as-well-nmc}, this subgraph of the NMC sparsifier preserves the minimum cut, and thus we can use our data structure on this subgraph.
In addition, since the maximal $2\lambda_G$-packing of forests has $\tO(n \lambda_G/\delta_G)$ edges, the construction time of our data structure is $\tO(\lambda_G n/\delta_G)$, yielding worst-case update time $\tO(n/\delta_G)$ by incremental rebuilding over $\Theta(\lambda_G)$ updates.
Finally, since each NMC sparsifier only preserves each cut with constant probability, we build $O(\log n)$ NMC sparsifiers in parallel to ensure that with high probability at least one repetition succeeds.
We state the guarantees of our algorithm in the following lemma, whose proof appears in \Cref{sec:sublinear-edge-connectivity}.
\begin{lemma}
    \label{lemma:sublinear-edge-connectivity}
    There exists an algorithm that, given a dynamic simple graph $G$ on $n$ vertices undergoing edge insertions and deletions and a degree threshold $\Delta$, maintains the edge connectivity $\lambda_G$ whenever $\delta_G \ge \Delta$, with $\tO(\min\left\{ n/\delta_G, n/\Delta \right\})$ worst-case update and query time, where $\delta_G$ is the minimum degree of $G$.
    The algorithm is randomized and succeeds with high probability against an oblivious adversary.
\end{lemma}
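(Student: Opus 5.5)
We prove the lemma by incremental rebuilding over epochs of $\Theta(\delta_G)$ updates. The guarantee needed for one epoch is: if the structure is built from a snapshot $G_0$, it correctly reports $\lambda_{G_t}$ for all $t\le c\,\delta_{G_0}$. We then spread the rebuild cost over the previous epoch in the standard background fashion.

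\textbf{Construction for one epoch.} At the start of an epoch, with current graph $G_0$ satisfying $\delta_{G_0}\ge\Delta$, we build $O(\log n)$ independent weak NMC sparsifiers $H^{(1)},\dots,H^{(r)}$ of $G_0$. We use a standard randomized construction (e.g.\ the one used in \cite{GHN+23,KK26a}), taken as given with $\tO(n/\delta_{G_0})$ vertices and $\tO(n)$ edges. We assume it can be maintained under edge updates, since each update touches only the images of its endpoints in $O(1)$ contracted vertices.

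For each $H^{(i)}$ we take a maximal $2\lambda$-packing of forests $F^{(i)}$, where $\lambda$ is obtained by first computing $\lambda_{G_0}$ or a constant approximation of it. This is done on the sparsifier itself together with the trivial cuts, using $\delta_{G_0}$. We maintain $F^{(i)}$ by \Cref{lemma:packing-of-forests}. Then $F^{(i)}$ has $\tO(\lambda n/\delta_{G_0})$ edges, because a forest on $\tO(n/\delta)$ vertices has that many edges. On $F^{(i)}$ we build the structure of \Cref{theorem:dynamic-mn20} in $\tO(\lambda n/\delta)$ time. Separately, we maintain all vertex degrees of $G$ in a heap so that $\delta_{G_t}$ is available in $O(\log n)$ time.

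A query returns $\min\{\delta_{G_t}, \min_i \texttt{query}_i\}$. Each $\texttt{query}_i$ costs $\tO(|V(H^{(i)})|)=\tO(n/\delta)$.

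\textbf{Correctness.} Let $t\le c\,\min(\lambda_{G_0},\delta_{G_0})=c\lambda_{G_0}$, and fix a minimum cut $C$ of $G_t$. If $C$ is trivial, the degree term captures it. Otherwise, by \Cref{lemma:might-as-well-nmc} (since $t\le\delta/20$), with constant probability no edge of $C$ is contracted in $H^{(i)}_t$. This holds with high probability for some $i$ by independence against an oblivious adversary. For that $i$, $C$ is a cut of $H^{(i)}_t$ of value $\lambda_{G_t}$, and every cut of $H^{(i)}_t$ is a cut of $G_t$, so $\lambda_{H^{(i)}_t}=\lambda_{G_t}$.

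Next, \Cref{lemma:might-as-well-ni} is applied to $H^{(i)}_0$ with the $2\lambda$-packing. Here one must check that $\lambda_{H^{(i)}_0}\le$ roughly $\lambda_{G_0}+O(t)$, so that $c=2$ relative to $\lambda_{H_0}$ still leaves $\Omega(\lambda)$ slack. Choosing the packing parameter as $2(\lambda_{G_0}+\text{slack})$ or $3\lambda$ handles this. With this, the minimum cuts of $F^{(i)}_t$ coincide with those of $H^{(i)}_t$. Finally, \Cref{theorem:dynamic-mn20} on $F^{(i)}$ stays valid for $c\lambda_{F_0}$ updates, and $\lambda_{F_0}=\Theta(\lambda_{G_0})$. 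Every value returned by any repetition is the value of some real cut of $G_t$, so it is never below $\lambda_{G_t}$. Hence the minimum is exact with high probability.

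\textbf{Time.} The epoch length is $\Theta(\lambda_{G_0})$ updates. The rebuild costs $\tO(n\lambda/\delta)$ plus the sparsifier construction. The sparsifier construction is the main obstacle: it naively costs $\tO(m)$, or $\tO(n)$ given an NI sparsifier. Spread over only $\lambda$ updates, this would give $n/\lambda$ per update. I would handle it in two ways.

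\begin{itemize}
\item The NMC sparsifier itself is valid for $\Theta(\delta)\ge\Theta(\lambda)$ updates by \Cref{lemma:might-as-well-nmc}. So we rebuild it on a slower schedule, every $\Theta(\delta)$ updates, at cost $\tO(n)$ using a maintained $O(\delta)$-NI sparsifier. This amortizes, and de-amortizes in the background, to $\tO(n/\delta)$ per update.
\item The packing $F^{(i)}$ and the MN20 structures are rebuilt every $\Theta(\lambda)$ updates at cost $\tO(n\lambda/\delta)$, which is $\tO(n/\delta)$ per update.
\end{itemize}

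The cost $k\cdot\polylog$ per update for the forest packing with $k=2\lambda\le2\delta$ also needs care. I would bound it by running the packing only over the sparsifier's contracted graph. Its updates are within $O(1)$ super-vertices, and it is plausibly fine to cap at $\tO(n/\delta)$ only when $\lambda^2\lesssim n$. Otherwise I would instead rebuild $F$ statically at each epoch start and pass the epoch's updates directly to the MN20 oracle, which is justified by \Cref{lemma:might-as-well-ni} applied with $I,D$ as the epoch's updates.

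Finally, if $\delta_G$ changes during an epoch it changes by at most $t\le c\delta$, so it stays within a constant factor. When $\delta<\Delta$ the output is not required to be correct, and we cap the work at $\tO(n/\Delta)$, giving the stated bound $\tO(\min\{n/\delta_G,n/\Delta\})$.
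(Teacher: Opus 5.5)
Your correctness argument matches the paper's: NMC robustness via \Cref{lemma:might-as-well-nmc}, then packing robustness via \Cref{lemma:might-as-well-ni}, the \Cref{theorem:dynamic-mn20} structure on the packing, the minimum with $\delta_{G_t}$, and $O(\log n)$ independent repetitions. The gap is the running time, which is the real content of this lemma. Each of your routes to a fresh sparse graph for the \Cref{theorem:dynamic-mn20} structure costs more than $\tO(n/\delta_G)$ per update in some regime.

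\begin{enumerate}
\item Maintaining an $O(\delta)$-NI sparsifier, or the packing $F^{(i)}$ with $k=2\lambda$, via \Cref{lemma:packing-of-forests} costs $\tO(\delta)$, respectively $\tO(\lambda)$, per update. This exceeds $n/\delta$ whenever $\lambda\delta\gg n$, for example $\lambda=\delta=n^{2/3}$. For $\lambda=\Theta(n)$ it is not even sublinear, so \Cref{theorem:sublinear-edge-connectivity} would fail as well.
\item Your fallback of recomputing $F$ statically at each epoch start has to read the contracted graph $H$. That is $\Omega(|E_H|)$, i.e.\ at least $\tO(n)$ under your own assumption, every $\Theta(\lambda)$ updates. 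This gives $\tO(n/\lambda)$ per update, which is exactly the barrier the paper identifies.
\item The claimed $\tO(n)$ rebuild of $H$ from an $O(\delta)$-NI sparsifier is unjustified. That sparsifier has $\Theta(n\delta)$ edges, so materializing the contraction from it means touching all of them. On a $\Theta(\delta)$-update schedule that is $\tO(n)$ per update, not $\tO(n/\delta)$.
\end{enumerate}

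The missing idea is that the paper never materializes the sparsifier at all.
\begin{itemize}
\item It keeps the star-contraction forest $F$ inside a dynamic cutset structure (\Cref{claim:star-contraction-dcs}) for every $\tau=2^i$. Each non-center vertex is attached to the neighboring center of minimum random priority, so the contraction of the \emph{current} graph is maintained in $\polylog(n)$ worst-case time per update.
\item At rebuild time it extracts a maximal $2\lambda$-packing of forests of $G/F$ directly by cutset queries (\Cref{claim:retrieve-maximal-packing}). This takes $\tO(r\lambda)=\tO(n\lambda/\delta)$ time, output-sensitively, no matter how many edges $G/F$ has.
\item It builds \Cref{theorem:dynamic-mn20} on this packing and spreads the $\tO(n\lambda/\delta)$ cost over the $\Theta(\lambda)$ updates of an epoch, which gives $\tO(n/\delta)$ per update.
\end{itemize}
Nothing of size $\Theta(\lambda)$ or $\Theta(\delta)$ is maintained per update, and no separate sparsifier-rebuild schedule is needed.
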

\begin{proof}[Proof of \Cref{theorem:sublinear-edge-connectivity}]
    The algorithm maintains a maximal $n^{1/13}$-packing of forests of the graph using $\tO(n^{1/13})$ worst-case update time by \Cref{lemma:packing-of-forests}, and runs the algorithm of \Cref{theorem:small-lambda-edge-connectivity} on the packing.
    By the properties of a maximal packing of forests, the edge connectivity of the packing is at most $n^{1/13}$, and thus the algorithm of \Cref{theorem:small-lambda-edge-connectivity} maintains the edge connectivity, up to $n^{1/13}$, with $\tO(n^{12/13})$ worst-case update time.

    In parallel, run the algorithm of \Cref{lemma:sublinear-edge-connectivity} on the original graph with degree threshold $\Delta = n^{1/13}$, which maintains the edge connectivity of the original graph whenever it is at least $n^{1/13}$ (as $\delta_G \ge \lambda_G$) with $\tO(n^{12/13})$ worst-case update time.
    To answer a query of the edge connectivity of the graph, return the result of \Cref{theorem:small-lambda-edge-connectivity} if it is strictly smaller than $n^{1/13}$, and the edge connectivity maintained by the algorithm of \Cref{lemma:sublinear-edge-connectivity} otherwise.
\end{proof}

\subsection{Deterministic Dynamic Edge Connectivity in Simple Graphs}
\label{sec:deterministic-alg-overview}

Our deterministic algorithm is similar to our randomized algorithm for simple graphs, but instead of using an NMC sparsifier it operates directly on an NI sparsifier of the graph.
We provide an alternative self-contained proof of this result in \Cref{sec:simple-deterministic}.

\begin{proof}[Proof of \Cref{theorem:deterministic-edge-connectivity}]
    The algorithm maintains a maximal $n$-packing of forests of the graph.
    Throughout, assume that the algorithm is given a $2$-approximation of the edge connectivity $\lambda_G$ of the graph and that the edge connectivity is at least $n^{o(1)}$.
    This can be achieved using the algorithm of \cite{ElHL26} with $n^{o(1)}$ worst-case update time, which does not affect the overall time complexity of the algorithm.

    Divide the updates into epochs of $\lambda_G/40$ updates.
    Assume that at the beginning of each epoch the algorithm is given the data structure of \Cref{theorem:dynamic-mn20} of a $2\lambda_G$-NI sparsifier of $G$ whose age is less than $\lambda_G/80$.
    Throughout the epoch, whenever an edge is inserted or deleted in $G$, the algorithm inserts/deletes updates the data structure accordingly.
    By \Cref{lemma:might-as-well-ni}, we have that every non-trivial minimum cut of $G_t$ is a minimum cut of the $2\lambda_G$-NI sparsifier, for every $t$ in the epoch.
    Therefore, to answer a minimum cut query, it suffices to query the data structure, and return the minimum between it and the minimum degree of $G_t$.

    When only $\lambda_G/80$ updates are left in the epoch, use the first $2\lambda_G$ forests in the packing to build a $2\lambda_G$-NI sparsifier of $G$, and construct the data structure of \Cref{theorem:dynamic-mn20} on the $2\lambda_G$-NI sparsifier.
    Constructing the data structure takes $\tO(n\lambda_G)$ time by \Cref{theorem:dynamic-mn20}, and thus one can spend $\tO(n)$ time during $\lambda_G/160$ updates to build the data structure in the background.
    During the last $\lambda_G/160$ updates of the epoch, apply all the queued updates to the new data structure, and then switch to it at the beginning of the next epoch.
    Notice that the age of the data structure at the end of the epoch is at most $\lambda_G/80$, and thus it is valid for the next epoch.
    Finally, observe that the algorithm constructing the data structure requires a static copy of the maximal $2\lambda_G$-packing of forests, which we obtain by queueing the updates to the maximal $2\lambda_G$-packing of forests and applying them in the next $\lambda_G/100$ updates after the data structure is constructed.

    Finally, the algorithm uses $n^{1+o(1)}$ worst-case update time or $\tO(n)$ amortized update time, depending on the choice of the algorithm for maintaining the maximal $n$-packing of forests by \Cref{lemma:packing-of-forests}, which concludes the proof.
\end{proof}

\subsection{Dynamic Edge Connectivity in Multigraphs}
\label{sec:multigraph-overview}
In this section we give the proof of \Cref{theorem:multigraph-edge-connectivity}.
The proof is based on splitting into two cases by the value of the edge connectivity.
When the connectivity is small, it can be found by running a static algorithm on a maximal packing of forests.
The complementary case, when the connectivity is large, uses our dynamic version of the \cite{MN20} minimum cut algorithm directly on the graph.
\begin{proof}[Proof of \Cref{theorem:multigraph-edge-connectivity}]
    The algorithm maintains a maximal $2\sqrt{n}$-packing of forests of $G$, and a weighted version of $G$ that merges parallel edges into a single edge with weight equal to the number of parallel edges.
    Throughout, we assume that the algorithm knows the edge connectivity $\lambda_G$ of the graph before the update, this is achieved trivially since the update and query times of our algorithm are the same.

    The algorithm is split into epochs of length $c \sqrt{n}/2$ updates, where $c>0$ is the constant from \Cref{theorem:dynamic-mn20}.
    Assume that at the beginning of each epoch, if $\lambda_G >\sqrt{n}$, then the algorithm is given a data structure of \Cref{theorem:dynamic-mn20} on the weighted version of $G$ whose age is less than $c\sqrt{n}/10$ (this is obtained by constructing the data structure in the background during the previous epoch).
    During the epoch the algorithm updates the data structure according to the edge updates, and answers queries using that data structure.
    Otherwise, the algorithm answers queries by running an offline near-linear-time minimum cut algorithm on the $2\sqrt{n}$-packing of forests.
    Notice that the data structure is valid throughout the entire epoch since $c\sqrt{n} (1/10+1/2) < c\sqrt{n}$, and thus the algorithm returns the correct answer to all queries.

    When  only $c\sqrt{n}/10$ updates remain in the epoch, the algorithm checks if $\lambda_G > (1-c/2)\sqrt{n}$, and if this occurs it constructs the data structure of \Cref{theorem:dynamic-mn20} on the weighted version of $G$.
    Observe that this guarantees that if $\lambda_G > \sqrt{n}$ at the beginning of the next epoch, then the data structure exists and is valid for the next epoch.
    When constructing the data structure, the algorithm spends at most $\tO(n^{3/2})$ time after each update to build the data structure in the background (spread over $O(\sqrt{n})$ updates), and then switches to the new data structure at the beginning of the next epoch.
    Notice that the age of the data structure at the end of the epoch is at most $c\sqrt{n}/10$, and thus it is valid for the next epoch.
    Finally, the algorithm constructing the data structure requires a static copy of the weighted version of $G$, which we obtain by queueing the edge updates and applying them in the next $c\sqrt{n}/100$ updates after the data structure is constructed.

    To conclude we analyze the time complexity of the algorithm.
    The weighted version of $G$ can be easily maintained using $O(1)$ time per update, and the maximal $2\sqrt{n}$-packing of forests can be maintained using $\tO(n^{1/2})$ randomized or $n^{1/2+o(1)}$ deterministic worst-case update time by \Cref{lemma:packing-of-forests}.
    In addition, we spend at most $\tO(n^{3/2})$ time after each update to build the data structure of \Cref{theorem:dynamic-mn20} in the background, and thus the worst-case update time is $\tO(n^{3/2})$.
    The query time is $\tO(n^{3/2})$ when $\lambda_G \le \sqrt{n}$, and $\tO(n)$ when $\lambda_G > \sqrt{n}$, which is always at most $\tO(n^{3/2})$.
    Therefore, the algorithm uses $\tO(n^{3/2})$  worst-case update time and query time, which concludes the proof.
\end{proof}

\subsection{Future Work}

\paragraph{Faster Dynamic Edge Connectivity.}
While our work achieves $o(n)$ worst-case update time for all values of $\lambda_G$, we do not believe that $\tO(n^{12/13})$ is optimal.
Known algorithms using $o(n^{12/13})$ update time work only for restricted ranges of $\lambda_G$.
In particular, when $\lambda_G$ is very small (up to $n^{o(1)}$) or very large (at least $n^{1-o(1)}$) one can achieve $n^{o(1)}$ update time, see \cite{JST24,ElHL26,KK26a} and \Cref{lemma:sublinear-edge-connectivity}.
A natural direction is to extend the range of $\lambda_G$ for which $o(n^{12/13})$ update time is achievable.

\paragraph{Polylogarithmic Update Time.}
While it seems that the query time of our algorithm inherently relies on the framework of \Cref{theorem:dynamic-mn20}, which cannot improve upon $\tO(n/\delta_G)$, we believe it is possible to improve the update time.
The framework requires maintaining a weighted tree packing of $G$, and a tree cut-query oracle for each tree in the packing.
Maintaining a weighted tree packing of size $\polylog(n)$ using $\polylog(n)$ worst-case update time is achievable using existing tools.%
\footnote{For example, one can construct a skeleton graph \cite{Kar00} by subsampling $G$ and maintain a dynamic tree packing of the resulting graph using existing results \cite{Tho07,dVC25}.}
Therefore, we ask whether a tree cut-query oracle under tree updates (i.e. edge insertions/deletions in the tree) with $\polylog(n)$ worst-case update time exists.

\paragraph{Deterministic $o(n)$ Edge Connectivity.}
There is only one use of randomness in our $\tO(n/\delta_G)$ worst-case update time algorithm that is hard to derandomize: maintaining the clusters of an NMC sparsifier.
Unfortunately, existing deterministic NMC constructions are based on static algorithms using $\tO(m)$ time that are not easily amenable to the dynamic setting.
A deterministic dynamic algorithm maintaining the clusters of an NMC sparsifier using $o(n)$ worst-case update time would, through the framework of \Cref{theorem:dynamic-mn20}, yield a deterministic $o(n)$ worst-case update time algorithm for all values of $\lambda_G$.

\section{Preliminaries}\label{sec:prelim}
Throughout we use the following easy observation.
\begin{observation}
    \label{observation:cut-query-edges}
    It is possible to find the weight of the set of edges between any two sets $S,T\subseteq V$ by querying the cuts $\mintcut_G(S),\mintcut_G(T),\mintcut_G(S\cup T)$
\end{observation}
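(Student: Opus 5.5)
The observation should follow from writing each of the three cut values as a sum over the edge sets between the relevant vertex classes, and then solving for the cross term. I will assume $S$ and $T$ are disjoint, which is how the observation is used: $S\cup T$ is then a cut set and $E_G(S,T)$ is well defined. Partition $V$ into three classes: $S$, $T$, and $R \coloneqq V\setminus(S\cup T)$. Every edge crossing one of the three cuts joins two different classes, so each cut value is a sum of the pairwise class weights $w(S,T)$, $w(S,R)$ and $w(T,R)$. Here $w(X,Y)$ denotes the total weight of $E_G(X,Y)$.

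Next I would expand the three cuts.
\begin{itemize}
\item $\mintcut_G(S)=w(S,T)+w(S,R)$, since the complement of $S$ is $T\cup R$.
\item $\mintcut_G(T)=w(S,T)+w(T,R)$, by the same reasoning.
\item $\mintcut_G(S\cup T)=w(S,R)+w(T,R)$, since the edges between $S$ and $T$ are internal to $S\cup T$.
\end{itemize}
Adding the first two and subtracting the third gives
\[
w(S,T)=\tfrac12\bigl(\mintcut_G(S)+\mintcut_G(T)-\mintcut_G(S\cup T)\bigr).
\]
So three cut queries determine the weight between $S$ and $T$, with $O(1)$ additional arithmetic.

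The only real obstacle is making the disjointness assumption explicit, since the statement leaves it implicit. If the sets overlap, the same identity still holds with $\mintcut_G(S\cap T)$ added, by submodularity-type bookkeeping; however, that version is not needed for the later uses.

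Within the paper's framework, the three queried sets are supposed to be cuts that a tree cut-query oracle (\Cref{definition:T-cut-query-oracle}) can answer. If $S$ and $T$ are each unions of $O(1)$ contiguous post-order intervals, then $S\cup T$ is as well, so all three queries are supported. Note, however, that this oracle-compatibility remark concerns how the observation is applied later, not the observation itself.
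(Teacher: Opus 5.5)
Your proof is correct and matches the paper: its one-line argument is the same identity $\mintcut_G(S\cup T)=\mintcut_G(S)+\mintcut_G(T)-2\cdot w(E(S,T))$ for disjoint $S,T$, which you derive explicitly via the partition $S$, $T$, $V\setminus(S\cup T)$, and your disjointness assumption agrees with every use of the observation in the paper. One small correction to your aside on overlapping sets: adding $\mintcut_G(S\cap T)$ gives $2\cdot w(E(S\setminus T, T\setminus S))$, not $2\cdot w(E(S,T))$, though as you note that case is never needed.
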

\begin{proof}
    The proof follows from $\mintcut_G(S\cup T) = \mintcut_G(S) + \mintcut_G(T) - 2\cdot w(E(S,T))$.
\end{proof}

\paragraph{Heavy-Light Decomposition.}
We use the heavy-light decomposition of \cite{ST83}, in a standard variant where every path contains the tree edge entering its topmost vertex (if it exists).
Throughout, let $T$ be a tree rooted at an arbitrary vertex $r$, and for a vertex $v$ let $v^{\downarrow}$ denote the subtree of $T$ rooted at $v$ and $|v^{\downarrow}|$ the number of vertices in it.
For a non-leaf vertex $v$, its \emph{heavy child} is the child $u$ maximizing $|u^{\downarrow}|$ (breaking ties arbitrarily), the edge $(v,u)$ is the corresponding \emph{heavy edge}, and every other edge from $v$ to a child is a \emph{light edge}.
The paths formed by heavy edges are called \emph{heavy paths}, and the decomposition of $T$ into heavy paths is a \emph{heavy-light decomposition}.
Notice that a light edge $(v,u)$ satisfies $|u^{\downarrow}|\le |v^{\downarrow}|/2$, so every root-to-leaf path of $T$ contains at most $\log n$ light edges, and hence at most $\log n$ paths in the decomposition.

\begin{theorem}[\cite{ST83}]
    \label{fact:heavy-light}
    There exists an algorithm that, given a tree $T$ on $n$ vertices, computes a heavy-light decomposition of $T$ in $O(n)$ time.
\end{theorem}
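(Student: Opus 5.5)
The statement is the classical heavy-light decomposition result of \cite{ST83}, and I would prove it with two linear-time passes over $T$. Root $T$ at $r$ and first compute a DFS (or BFS) order of the vertices together with parent pointers. This takes $O(n)$ time, since a tree has $n-1$ edges.

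The first pass processes the vertices in reverse order, children before parents. For each vertex it computes the subtree sizes via the recurrence
\[
|v^{\downarrow}| = 1+\sum_{u \text{ child of } v}|u^{\downarrow}| .
\]
During the same pass, each non-leaf $v$ records as its heavy child a child $u$ maximizing $|u^{\downarrow}|$, breaking ties arbitrarily. Each child is inspected exactly once, by its parent, so this pass costs $O(\sum_v \deg(v))=O(n)$.

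The second pass runs top-down in the original order and assigns each vertex a heavy-path identifier and the head (topmost vertex) of its path. If $v$ is the heavy child of its parent $p$, then $v$ inherits $p$'s path and head. Otherwise $v$ is the root or the endpoint of a light edge $(p,v)$, and it starts a new path with head $v$. Under the paper's convention, that new path also contains the light edge $(p,v)$ entering $v$, and we store this edge with the path.

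Appending $v$ to its path's list costs $O(1)$, because the top-down order visits the vertices of each path in order from head to tail. Hence this pass is also $O(n)$, and it outputs every heavy path as an ordered list.

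For correctness, note that each vertex has at most one heavy child. Therefore the heavy edges form vertex-disjoint downward paths that together cover all vertices, which is exactly the decomposition defined above; the $\log n$ bound on light edges was already observed in the text and is not part of this statement. No step is a real obstacle. The only point needing care is to use an iterative traversal, or an explicit order, so that the recursion depth does not matter, and to count the inspection of child lists as $O(n)$ overall rather than $O(n)$ per vertex.
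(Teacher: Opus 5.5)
Your proof is correct. The paper gives no argument of its own and simply cites \cite{ST83}; your two linear-time passes (bottom-up subtree sizes and heavy children, then a top-down, parent-before-child pass that assigns path heads and appends vertices head to tail) are the standard construction behind that citation, and they directly handle the variant the paper uses: heavy child $=$ a child of maximum subtree size, with each path carrying the light edge entering its head.
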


\section{Dynamic Minimum Cut Algorithm}
\label{sec:dynamic-mn20}
In this section we prove the main components of our dynamic version of the algorithm of \cite{MN20} for finding a global minimum cut in a weighted graph.
The section is organized as follows.
In \Cref{sec:interested-pairs} we show how to find the potential cuts that the algorithm of \cite{MN20} needs to check.
Then, in \Cref{sec:tree-packing-stability} we prove \Cref{lemma:tree-packing-stability}, which shows that tree packing is stable for $O(\lambda_G)$ edge updates, and in \Cref{sec:dynamic-cut-query-oracle} we show how to maintain a tree cut-query oracle under edge updates proving \Cref{lemma:dynamic-T-cut-query-oracle}.
Finally, in \Cref{sec:putting-it-all-together}, we put all the components together to prove \Cref{lemma:encapsulation-of-mn20}.

\subsection{Finding Potential Cuts}
\label{sec:interested-pairs}
The main tool towards identifying the minimum cuts that $2$-respect the tree are \emph{interested pairs of vertices}, introduced by \cite{MN20}.
Note that \cite{MN20} uses the term interested edges, however we find it more intuitive to refer to the vertices.
These two notations are equivalent by arbitrarily rooting $T$, and identifying each edge with its lower endpoint.
We say that two vertices $u,v\in T$ are \emph{independent}, denoted by $u\perp v$, if they are not in the same root-to-leaf path in $T$.

Informally, a vertex $u$ is \emph{cross interested} in a vertex $v$ if most of the outgoing edges from the subtree $u^{\downarrow}$ have an endpoint in $v^{\downarrow}$, hence the cut $u^{\downarrow}\cup v^{\downarrow}$ is smaller than the cut $u^{\downarrow}$.
Similarly, $u$ is down-interested in $v$, where $v\in u^{\downarrow}$, if more than half of the edges in $E(u^{\downarrow}, V\setminus u^{\downarrow})$ have an endpoint in $v^{\downarrow}$.
Again, this implies that the cut $u^{\downarrow}\setminus v^{\downarrow}$ is smaller than the cut $u^{\downarrow}$.
We now formally define the notion of interested pairs of vertices.
\begin{definition}[Cross-Interested Vertices]
    A vertex $u\in T$ is \emph{cross interested} in $v\in T$ if $u\perp v$, and $w(E(v^{\downarrow}, u^{\downarrow}))>\mintcut_G(u^{\downarrow})/2$.
\end{definition}
\begin{definition}[Down-Interested Vertices]
    A vertex $u\in T$, is \emph{down interested} in $v\in u^{\downarrow}$ if  $w(E(V\setminus u^{\downarrow}, v^{\downarrow}))>\mintcut_G(u^{\downarrow})/2$.
\end{definition}
\begin{observation}[Observation 3.9 of \cite{MN20}]
    \label{observation:interested-edges-structure}
    Given a vertex $u$, there cannot be two different vertices $v_1,v_2$ such that $v_1\perp v_2$ and $u$ is interested in both.
\end{observation}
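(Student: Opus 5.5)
The plan is a proof by contradiction: suppose $u$ is interested in both $v_1$ and $v_2$ with $v_1\perp v_2$, and show that the two "more than half" conditions charge disjoint edge sets of total weight at most $\mintcut_G(u^{\downarrow})$. Here "interested" covers both cross and down interest. I split into cases according to the type of each interest.

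The key fact is that $v_1\perp v_2$ makes the subtrees $v_1^{\downarrow}$ and $v_2^{\downarrow}$ disjoint: two vertices on different root-to-leaf paths have disjoint subtrees. I will also use that every edge counted in either definition lies in $E(u^{\downarrow},V\setminus u^{\downarrow})$.

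For cross interest, $u\perp v$ forces $v^{\downarrow}\subseteq V\setminus u^{\downarrow}$. The counted edges $E(v^{\downarrow},u^{\downarrow})$ are exactly the edges of the cut $u^{\downarrow}$ whose outer endpoint lies in $v^{\downarrow}$. For down interest, $v\in u^{\downarrow}$ forces $v^{\downarrow}\subseteq u^{\downarrow}$. The counted edges $E(V\setminus u^{\downarrow},v^{\downarrow})$ are exactly the edges of the cut $u^{\downarrow}$ whose inner endpoint lies in $v^{\downarrow}$.

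In every combination of types, I show the two counted edge sets are disjoint subsets of the cut edges of $u^{\downarrow}$.

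1. If both interests are cross, each edge has a unique outer endpoint. That endpoint cannot lie in both $v_1^{\downarrow}$ and $v_2^{\downarrow}$, so the sets are disjoint.
2. If both interests are down, each edge has a unique inner endpoint, and the same argument gives disjointness.
3. If one interest is cross (for $v_1$) and the other is down (for $v_2$), an edge in both sets would need its outer endpoint in $v_1^{\downarrow}$ and its inner endpoint in $v_2^{\downarrow}$. Such an edge is not excluded automatically. In this case I first use the tree structure: $v_2\in u^{\downarrow}$ and $v_1\perp u$ already imply $v_1\perp v_2$, so the hypothesis gives nothing extra. The obstacle is therefore that overlap is possible, namely edges from $v_2^{\downarrow}$ to $v_1^{\downarrow}$.

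The mixed case is the step I expect to be the main obstacle. My first attempt would be to read the observation (following \cite{MN20}) as concerning two interests of the same type, where the argument above is complete. If mixed types must be handled, I would restrict to the way the observation is applied, namely to each type separately. Failing that, I would try a bound via the cut of $u^{\downarrow}\cup v_1^{\downarrow}\setminus v_2^{\downarrow}$, but I do not expect this to give a contradiction in general.

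In the same-type cases the conclusion is immediate. Summing the two disjoint sets gives
\[w(\text{set}_1)+w(\text{set}_2)>\mintcut_G(u^{\downarrow}),\]
yet both are contained in $E(u^{\downarrow},V\setminus u^{\downarrow})$, whose total weight is $\mintcut_G(u^{\downarrow})$. This is a contradiction.
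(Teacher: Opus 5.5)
Your same-type argument is correct and complete, and it is the standard disjointness argument behind the cited Observation~3.9 of \cite{MN20} (the paper itself gives no proof): $v_1\perp v_2$ makes $v_1^{\downarrow}$ and $v_2^{\downarrow}$ disjoint, so the two counted edge sets are disjoint subsets of $E(u^{\downarrow},V\setminus u^{\downarrow})$ and cannot both carry more than half of $\mintcut_G(u^{\downarrow})$. You can drop the hedge on the mixed case, because that version is false---take a tree rooted at $r$ with edges $(r,a),(r,u),(u,v_2)$ of weight $1$ plus a non-tree edge $(v_2,a)$ of weight $10$, so $\mintcut_G(u^{\downarrow})=11$ while $w(E(a^{\downarrow},u^{\downarrow}))=w(E(V\setminus u^{\downarrow},v_2^{\downarrow}))=10$, making $u$ cross-interested in $a$ and down-interested in $v_2$ with $a\perp v_2$---so the observation must be read separately for each type, which is how \Cref{claim:find-down-interested,claim:find-cross-interested} use it.
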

Notice that all the vertices that $u$ is interested in are contained within a root-to-leaf path in $T$.
Hence, it suffices to find the top and bottom vertices in this path to find all interesting vertices.
The main technical result of this section leverages this insight to find all interested pairs of vertices in $\tO(n)$ time using a tree cut-query oracle.
\begin{lemma}
    \label{lemma:find-all-interesting-pairs}
    There exists a deterministic algorithm that, given a spanning tree $T$ of a graph $G$ and a tree cut-query oracle, finds all pairs of interesting vertices in $T$ in time $\tO(n)$.
\end{lemma}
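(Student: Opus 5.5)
For each vertex $u$ we locate the root-to-leaf path that contains all vertices $u$ is interested in (by \Cref{observation:interested-edges-structure}), using $O(\log^2 n)$ cut queries, and then report that path segment. Summed over all $n$ vertices this gives $\tO(n)$ time, with output size bounded in terms of these segments.

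\textbf{Subtree weights as oracle queries.} For $u$ and a candidate $v$, the weight $w(E(v^{\downarrow},u^{\downarrow}))$ for $v\perp u$, and $w(E(V\setminus u^{\downarrow},v^{\downarrow}))$ for $v\in u^{\downarrow}$, both follow from \Cref{observation:cut-query-edges}. The sets $u^{\downarrow}$, $v^{\downarrow}$ and their unions or differences are unions of $O(1)$ contiguous post-order intervals, so each such quantity costs $O(1)$ queries to the tree cut-query oracle. More generally, the weight of edges between $u^{\downarrow}$ (or $V\setminus u^{\downarrow}$) and any post-order interval $I$ costs $O(1)$ queries, because $I\cup u^{\downarrow}$ is still a union of $O(1)$ intervals.

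\textbf{Descent procedure.} Fix $u$ and set $X=u^{\downarrow}$ for cross interest. Let $f(v)$ be the weight between $X$ and $v^{\downarrow}$, and call $v$ heavy if $f(v)>\mintcut_G(u^{\downarrow})/2$. Heaviness is monotone upward: if $v$ is heavy, so is every ancestor of $v$ that is independent of $u$. The heavy vertices therefore form a path going down from some top vertex. In the cross case the top vertex is a child of an ancestor of $u$ lying off $u$'s root path; in the down case it is a child of $u$.

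We find the lowest heavy vertex by descending with the heavy-light decomposition (\Cref{fact:heavy-light}). Suppose we are at a heavy vertex $v$ on heavy path $P$. Binary search along $P$ below $v$ finds the deepest heavy vertex $v'$ on $P$; this uses $O(\log n)$ queries by monotonicity. We then check whether some light child $c$ of $v'$ is heavy. Since heavy vertices cannot be independent of each other, at most one child can be heavy.

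To find that child without scanning every light child, binary search over the contiguous post-order interval of $v'$'s light children. We look for the unique child whose subtree carries weight exceeding the threshold: split the range of children in half and recurse into the half whose interval weight exceeds the threshold. At most one half can exceed it, because two disjoint sets each carrying more than half of $\mintcut_G(u^{\downarrow})$ is impossible. We also need to rule out a half that exceeds the threshold only through several children together. That is handled by recursing until a single child remains and then testing it directly.

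By the light-edge bound, the descent passes through at most $\log n$ heavy paths. This gives $O(\log^2 n)$ queries per $u$ for the bottom endpoint.

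\textbf{Top endpoint.} The top endpoint is found symmetrically. In the down case it is simply the highest heavy child of $u$, which we find with the same child search. In the cross case we must choose which off-path subtree to enter. We binary search over ancestors $a$ of $u$ for the lowest one at which $V\setminus a^{\downarrow}$ is no longer heavy, again via heavy paths, and then perform the child search at $a$.

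\textbf{Main obstacle.} The delicate step is the child search. A naive scan over light children can cost $\deg(v')$, and summed over all $u$ this could reach $\Theta(n^2)$. The argument must show that interval-halving correctly isolates the unique heavy child, or certifies that none exists, using only the threshold structure. This works because the target weight strictly exceeds half of a fixed quantity, so at most one disjoint part can carry it. The top endpoint in the cross case likewise needs care to stay within $O(\log^2 n)$ queries. With these pieces the total is $n\cdot\polylog(n)$ oracle queries, each taking $\polylog(n)$ time, giving $\tO(n)$.
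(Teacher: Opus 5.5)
Your proposal is correct and follows essentially the same route as the paper. For each $u$, you interleave a halving search over a vertex's children with a binary search along heavy paths of the heavy-light decomposition. The halving search is valid because of the strict more-than-half threshold, and the heavy-path search uses upward monotonicity. This costs $O(\log^2 n)$ oracle queries per vertex, and you report the interested vertices as a path segment as the paper does.

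Your cross-case step matches the paper's device. You search $u$'s ancestors for the lowest $a$ at which $V\setminus a^{\downarrow}$ stops being heavy, which locates $\LCA(\beta,u)$. The paper tests $w(E(u^{\downarrow}, x^{\downarrow}\setminus u^{\downarrow}))$ along $u$'s ancestor path, which is the complementary quantity. In the child search at that ancestor you must exclude the child toward $u$, as the paper does with $r_{I\setminus\{j\}}$.
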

The proof of \Cref{lemma:find-all-interesting-pairs} follows directly from the following claims, and the that a heavy-light decomposition can be constructed in $\tO(n)$ time by \Cref{fact:heavy-light}.
\begin{claim}
    \label{claim:find-down-interested}
    There exists a deterministic algorithm that, given a spanning tree $T$ of a graph $G$, a tree cut-query oracle, a heavy-light decomposition of $T$, and a vertex $u\in T$, finds all vertices $v\in T$ such that $u$ is down interested in $v$ in time $\polylog(n)$.
\end{claim}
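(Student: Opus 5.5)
The plan is to exploit the monotone, path-like structure of the set of vertices that $u$ is down interested in, and to locate the bottom of that path by a combination of binary searches along heavy paths and across light children. Every oracle query will be one of the cuts allowed by \Cref{definition:T-cut-query-oracle}.

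\textbf{Structure of the target set.} Let $D_u=\set{v\in u^{\downarrow} : u \text{ is down interested in } v}$ and $\kappa=\mintcut_G(u^{\downarrow})$. I will establish two facts.
\begin{enumerate}
    \item $D_u$ is upward closed inside $u^{\downarrow}$. If $v\in D_u$ and $w$ lies on the tree path from $u$ to $v$, then $v^{\downarrow}\subseteq w^{\downarrow}$. Hence $w(E(V\setminus u^{\downarrow},w^{\downarrow}))\ge w(E(V\setminus u^{\downarrow},v^{\downarrow}))>\kappa/2$, so $w\in D_u$.
    \item No two vertices of $D_u$ are independent, by \Cref{observation:interested-edges-structure}.
\end{enumerate}
Together these show that $D_u$ is a downward path starting at $u$, provided $\kappa>0$; if $\kappa=0$, then $D_u=\emptyset$. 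It therefore suffices to find the bottom vertex of $D_u$ and output the tree path from $u$ to it.

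\textbf{Evaluating membership.} Fix the post-order used by the oracle. Each subtree $v^{\downarrow}$ is a contiguous interval of this order, and $V\setminus u^{\downarrow}$ is a union of at most two intervals. By \Cref{observation:cut-query-edges}, the quantity $w(E(V\setminus u^{\downarrow},S))$ can be computed with three oracle queries whenever $S$ is a union of $O(1)$ intervals, because every set involved is then a union of $O(1)$ contiguous post-order segments. In particular, membership $v\in D_u$ can be tested with $O(1)$ queries.

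\textbf{Descent procedure.} Start at $w=u$ and alternate two kinds of steps.
\begin{enumerate}
    \item \emph{Heavy step.} Follow the heavy path containing $w$ downward from $w$. By upward closure, membership in $D_u$ is monotone along this path. A binary search therefore finds the lowest vertex $x$ of the path that lies in $D_u$, using $O(\log n)$ queries.
    \item \emph{Light step.} The next vertex of $D_u$, if one exists, must be a light child of $x$. Since $x$ may have many light children, we cannot test them one by one. Instead, order the children of $x$ as they appear in post-order; any prefix of this sequence is a contiguous interval. The prefix weights $w(E(V\setminus u^{\downarrow},\cdot))$ are monotone in the prefix length. Binary search finds the shortest prefix whose weight exceeds $\kappa/2$, again in $O(\log n)$ queries.

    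Any child $c$ with $c\in D_u$ has weight exceeding $\kappa/2$ on its own. Hence, for every prefix that excludes $c$, the weight of the remaining children is below $\kappa/2$, since total weight over disjoint subtrees is bounded. Consequently the last child of that shortest prefix is the only possible candidate. We test that candidate directly, taking care to exclude the heavy child, which the heavy step already handled.
\end{enumerate}
If the light step finds a child in $D_u$, we continue from it with another heavy step; otherwise $x$ is the bottom of $D_u$.

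\textbf{Running time and main obstacle.} Every root-to-leaf path crosses at most $\log n$ light edges, so there are $O(\log n)$ rounds, each using $O(\log n)$ oracle queries of $\polylog(n)$ time each. This gives $\polylog(n)$ in total. I expect the light-child step to be the main obstacle. The delicate point is justifying that the threshold-crossing prefix identifies the unique candidate, using edge-disjointness of the child subtrees. Another delicate point is that the sets queried, namely complements of subtrees together with blocks of children, are genuinely unions of $O(1)$ post-order intervals. If the heavy child sits in the middle of the children order, the light-children block splits into at most two intervals, which still keeps $k=O(1)$.
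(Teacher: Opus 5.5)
Your proposal is correct and takes essentially the same approach as the paper: interleaved binary searches, one along heavy paths (where membership is monotone) and one across the children of the current vertex (where disjointness of child subtrees means at most one child can carry more than $\mintcut_G(u^{\downarrow})/2$), for $O(\log n)$ rounds since the descent crosses $O(\log n)$ light edges. The only differences are cosmetic: you search for the shortest threshold-crossing prefix of children where the paper recurses into the heavier half, and you run the heavy-path search before the child search rather than after.
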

\begin{claim}
    \label{claim:find-cross-interested}
    There exists a deterministic algorithm that, given a spanning tree $T$ of a graph $G$, a tree cut-query oracle, a heavy-light decomposition of $T$, and a vertex $u\in T$, finds all vertices $v\in T$ such that $u$ is cross interested in $v$ in time $\polylog(n)$.
\end{claim}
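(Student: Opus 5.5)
The goal is to find, for a fixed vertex $u$, all $v\perp u$ with $w(E(v^{\downarrow},u^{\downarrow}))>\mintcut_G(u^{\downarrow})/2$, using $\polylog(n)$ cut queries. By \Cref{observation:interested-edges-structure}, these vertices lie on a single root-to-leaf path. Cross-interest is also monotone upward: if $u$ is cross interested in $v$ and $v'$ is an ancestor of $v$ with $v'\perp u$, then $v^{\downarrow}\subseteq v'^{\downarrow}$. Hence $w(E(v'^{\downarrow},u^{\downarrow}))\ge w(E(v^{\downarrow},u^{\downarrow}))$, so $u$ is cross interested in $v'$ as well. The set of interesting vertices is therefore a path $P$ from a top vertex $v_{\mathrm{top}}$ down to a bottom vertex $v_{\mathrm{bot}}$. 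The top vertex is simply the highest ancestor of $v_{\mathrm{bot}}$ that is still independent of $u$, namely the child of $\LCA(u,v_{\mathrm{bot}})$ on the way to $v_{\mathrm{bot}}$. So it suffices to locate $v_{\mathrm{bot}}$, and then output the path. If the output must be explicit, its size is charged to the output; otherwise the pair $(v_{\mathrm{top}},v_{\mathrm{bot}})$ is returned as the implicit representation.

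Each test "is $u$ cross interested in $x$" costs $O(1)$ oracle queries. By \Cref{observation:cut-query-edges}, $w(E(x^{\downarrow},u^{\downarrow}))$ is determined by $\mintcut_G(x^{\downarrow})$, $\mintcut_G(u^{\downarrow})$ and $\mintcut_G(x^{\downarrow}\cup u^{\downarrow})$. Each of these is a union of at most two subtrees, hence of at most two contiguous post-order intervals, which is a type-2 query. More generally, for any vertex set $S$ that is a union of $O(1)$ post-order intervals and is disjoint from $u^{\downarrow}$, the quantity $w(E(S,u^{\downarrow}))$ is available in $O(1)$ queries.

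To find $v_{\mathrm{bot}}$, I would descend the tree using the heavy-light decomposition. First, the vertices not independent of $u$ are exactly the ancestors of $u$ and $u^{\downarrow}$, and these are excluded. I would start at the root and maintain the invariant that the current vertex $x$ either is interesting or is an ancestor of $u$ whose subtree contains all interesting vertices. The descent inside a heavy path uses a binary search. Let $x_1,\dots,x_k$ be the current heavy path below the current point. By monotonicity, the interesting vertices on it form a prefix, or none of them is interesting if the path leads toward $u$. Binary-searching for the last interesting $x_i$ takes $O(\log n)$ tests.

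From that last interesting $x_i$, the next step is to decide whether some light child $c$ of $x_i$ is interesting. The heavy child is not interesting by the choice of $i$, and interesting vertices form a path, so at most one light child qualifies. This is the main obstacle, since $x_i$ may have many light children and they cannot be tested one by one. I would exploit the post-order layout. The subtrees of the children of $x_i$ occupy consecutive post-order intervals. Summing $w(E(\cdot,u^{\downarrow}))$ over a contiguous range of children is therefore a single interval query. Binary-searching over the children for the one whose interval carries more than $\mintcut_G(u^{\downarrow})/2$ of the weight then takes $O(\log n)$ queries. This works because if some child qualifies, any range containing it exceeds the threshold, and since all these weights are disjoint parts of the total, which is at most $\mintcut_G(u^{\downarrow})$, no range missing it can exceed the threshold.

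The same range-binary-search handles the phase before the first interesting vertex, where we are walking down ancestors of $u$ and must branch off the ancestor path. Here the ranges are taken as the children of an ancestor excluding the child on the path to $u$, which is a union of at most two intervals. Each light-edge step enters a new heavy path, and there are at most $\log n$ such steps. The total cost is therefore $O(\log^2 n)$ oracle queries, each taking $\polylog(n)$ time, giving $\polylog(n)$ time overall. \Cref{claim:find-down-interested} is handled analogously, with $E(V\setminus u^{\downarrow},\cdot)$ in place of $E(u^{\downarrow},\cdot)$ and the search restricted to $u^{\downarrow}$.
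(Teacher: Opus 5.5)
Your descent once you are on the interesting path is sound and matches the paper. Interesting vertices are upward closed among vertices independent of $u$, so on a heavy path they form a prefix, and one binary search finds the last one. Branching into a light child by binary search over contiguous ranges of children is also correct, for the reason you give: a qualifying child carries more than half of $\mintcut_G(u^{\downarrow})$, so any range missing it carries less than half.

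The gap is in the phase before the first interesting vertex. You say you walk down the ancestors of $u$ and run the children range-search ``at an ancestor'', but you never say how to find \emph{which} ancestor $a=\LCA(u,v_{\mathrm{top}})$ to branch at. The root-to-$u$ path can have $\Theta(n)$ vertices, nearly all on one heavy path. So your accounting of one light-edge step per heavy path, at most $\log n$ of them, does not bound this phase. Running the children search at every ancestor costs $\Theta(\mathrm{depth}(u)\log n)$ queries, which is not $\polylog(n)$. Your remark that a heavy path leading toward $u$ has no interesting vertices is also inaccurate. A heavy path can pass through an ancestor $a$ of $u$ while $u$ hangs off $a$ via a light edge, and the part of that heavy path below $a$ may be interesting.

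The missing idea, which the paper uses, is a monotone predicate along the ancestor path itself. For an ancestor $y$ of $u$, test whether $w(E(u^{\downarrow}, y^{\downarrow}\setminus u^{\downarrow}))>\mintcut_G(u^{\downarrow})/2$. This needs only the cuts of $y^{\downarrow}$, $u^{\downarrow}$ and $y^{\downarrow}\setminus u^{\downarrow}$, each a union of at most two post-order intervals.

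The predicate is upward closed along the path. Suppose $u$ is cross interested in some $v$ with $a=\LCA(u,v)$. Then the predicate holds at $a$, since $v^{\downarrow}\subseteq a^{\downarrow}\setminus u^{\downarrow}$. It fails at every proper descendant $y$ of $a$ on the path to $u$, since $y^{\downarrow}\setminus u^{\downarrow}$ is then disjoint from $v^{\downarrow}$ and so carries less than half.

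Hence the deepest ancestor satisfying the predicate is the only possible branching point. You can find it by binary search over the $O(\log n)$ heavy-path segments of the root-to-$u$ path. Only then do you run your children range-search, once, at that single vertex, excluding the child toward $u$. With this step added, your argument goes through with $O(\log^2 n)$ oracle queries.
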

\begin{proof}[Proof of \Cref{lemma:find-all-interesting-pairs}]
    The algorithm iterates over all vertices $u\in T$, and for each vertex $u$ it finds all vertices $v\in T$ such that $u$ is down interested in $v$ by \Cref{claim:find-down-interested}, and all vertices $v\in T$ such that $u$ is cross interested in $v$ by \Cref{claim:find-cross-interested}.
    By \Cref{fact:heavy-light}, a heavy-light decomposition of $T$ can be constructed in $\tO(n)$ time, and thus the overall time complexity of the algorithm is $\tO(n)$.
\end{proof}
We now proceed to prove \Cref{claim:find-down-interested} and \Cref{claim:find-cross-interested}.
\begin{proof}[Proof of \Cref{claim:find-down-interested}]
    The proof is based on interleaving two binary searches.
    Recall that all vertices $v$ in which $u$ is down-interested are contained in the subtree rooted at $u$.
    Therefore, starting from $u$, first find the child $x$ of $u$ such that $u$ is down interested in $x$.
    Then, take the path $P$ in the heavy-light decomposition that contains the edge $(u,x)$, and perform a binary search on $P$ to find the last vertex $\alpha\in P$ that $u$ is interested in.
    Finally, repeat the same process in the tree rooted at $\alpha$.
    
    Let $r$ be the root of a subtree in $u^{\downarrow}$ that we start the process from.
    Denote the children of $r$ by $r_1,\ldots,r_k$ where the ordering is according to a post-order traversal of $T$.
    To find the child $r^*$ that $u$ is interested in, perform a binary search.
    Notice that there is such an $r^*$ exists if and only if $w(E(V\setminus u^{\downarrow}, r_i^{\downarrow}))>\mintcut_G(u^{\downarrow})/2$ for some child $r^*$.
    Otherwise, $u$ is not interested in any vertices in the subtree rooted at $r$.
    
    For every sequence $I=\{l,l+1,\ldots,q\}$ such that $1\le l< q \le k$, let $r_I^{\downarrow} =\cup_{i\in I} r_i^{\downarrow}$.
    Repeatedly splitting the children of $r$ into two halves $I_1,I_2$, finding the half for which $w(E(V\setminus u^{\downarrow}, r_{I_j}^{\downarrow}))$ is larger and recursing on it, terminates by returning $r^*$ if it exists.
    This holds since the half containing $r^*$ must have $w(E(V\setminus u^{\downarrow}, r_{I_j}^{\downarrow}))>\mintcut_G(u^{\downarrow})/2$.
    Furthermore, this process terminates in $\log n$ iterations since the number of children of $r$ is at most $n$.

    We now show how to implement this process using a tree cut-query oracle.
    By \Cref{observation:cut-query-edges}, $w(E(V\setminus u^{\downarrow}, r_I^{\downarrow}))$ can be computed by querying the cuts $(V\setminus u^{\downarrow}) \cup r_I^{\downarrow},r_I^{\downarrow},u^{\downarrow}$.
    Notice that $r_I$ is a contiguous set of vertices in the post-order traversal of $T$, and therefore each of these cuts is a union of at most two contiguous sets in the post-order traversal of $T$.
    Hence, they can be queried in $\polylog(n)$ time using the tree cut-query oracle.
    Therefore, the total time complexity of this process is $\polylog(n)$.

    We now proceed to find the last vertex $\alpha\in P$ that $u$ is interested in.
    By \Cref{observation:interested-edges-structure}, if $u$ is interested in some $x\in P$, then $u$ is interested also in every vertex before $x$ in $P$.
    Furthermore, $w(E(V\setminus u^{\downarrow}, x^{\downarrow}))$ can be computed by querying the cuts $(V\setminus u^{\downarrow}) \cup x^{\downarrow},x^{\downarrow},u^{\downarrow}$.
    Again, all these cuts are unions of at most two contiguous sets in the post-order traversal of $T$, and therefore can be queried in $\polylog(n)$ time using the tree cut-query oracle.
    Therefore, it is straightforward to perform a binary search on $P$ using $\log n$ queries to find the last vertex $\alpha\in P$ in which $u$ is interested.

    To bound the overall time complexity, note that every leaf to root path in $T$ intersects $O(\log n)$ paths in the heavy-light decomposition, and that throughout the algorithm we perform at most two binary searches on each of these paths, which yields a total time complexity of $\polylog(n)$.
\end{proof}
\begin{proof}[Proof of \Cref{claim:find-cross-interested}]
    The proof is similar to the previous one, interleaving two binary searches.
    Throughout this proof, let $r$ be the root of the subtree we are searching for vertices in which $u$ is interested.
    Starting from the root $r$ of $T$, first find the child $x$ of $r$ such that $u$ is interested in $x$ (if any exist).
    Then, take the path $P$ containing the edge $(r,x)$ in the heavy-light decomposition and perform a binary search on $P$ to find the last vertex $\alpha\in P$ in which $u$ is interested.
    Finally, repeat the same process in the tree rooted at the lower endpoint of $\alpha$.
    Unfortunately, in the cross-interested case, there exists an additional technicality that does not exist in the down-interested case, which occurs when $u$ is interested in some vertex $\beta$ and $\LCA(\beta,u)\ne r$.
    Therefore, we split the process into two cases, one where there exists such a vertex $\beta$, and one where there is no such vertex.

    To find whether there such a vertex $\beta$ exists, begin by finding $w(E(u^{\downarrow}, x^{\downarrow}\setminus u^{\downarrow}))$ where $x$ is the child of $r$ that is an ancestor of $u$.
    Notice that if $w(E(u^{\downarrow}, x^{\downarrow}\setminus u^{\downarrow}))>\mintcut_G(u^{\downarrow})/2$, then if $u$ interested in some vertex $y$, $y$ is a descendant of $x$.
    To find $w(E(u^{\downarrow}, x^{\downarrow}\setminus u^{\downarrow}))$ it is sufficient to query the cuts $u^{\downarrow}\cup (x^{\downarrow}\setminus u^{\downarrow}),u^{\downarrow},x^{\downarrow}\setminus u^{\downarrow}$ by \Cref{observation:cut-query-edges}.
    Again, all these cuts are unions of at most two contiguous sets in the post-order traversal of $T$, and therefore can be queried in $\polylog(n)$ time using the tree cut-query oracle.

    Next, take the path $P$ containing the edge $(r,x)$ in the heavy-light decomposition and perform a binary search on $P$ to find the last vertex $\alpha\in P$ in which $u$ is interested.
    Again, for every vertex $y\in P$, $u$ is interested in a descendant of $y$ if and only if $w(E(u^{\downarrow}, y^{\downarrow}\setminus u^{\downarrow}))>\mintcut_G(u^{\downarrow})/2$, which can be checked by querying the cuts $u^{\downarrow}\cup (y^{\downarrow}\setminus u^{\downarrow}),u^{\downarrow},y^{\downarrow}\setminus u^{\downarrow}$.
    Finally, we recurse on the tree rooted at $\alpha$ to find all interested vertices in this tree.

    Next we deal with the case where $u$ is only interested in vertices with whom its least common ancestor is $r$.
    To find a child $x$ of $r$ such that $u$ is interested in $x$, we again perform a binary search.
    Denote the children of $r$ by $r_1,\ldots,r_k$ where the children are according to a post-order traversal ordering of $V$.
    In addition, let $r_j$ be the child of $r$ that is an ancestor of $u$, if it exists.
    For every sequence $I=\{l,l+1,\ldots,q\}$ such that $l,q\in [k]$, let $r_{I\setminus\{j\}}^{\downarrow} =\cup_{i\in I\setminus\{j\}} r_i^{\downarrow}$.
    Observe that $w(E(r_{I\setminus\{j\}}^{\downarrow}, u^{\downarrow})) = \sum_{i\in I\setminus \set{j}} w(E(r_i^{\downarrow}, u^{\downarrow}))$.
    Our goal is to find a child $r_{i^*}$ of $r$ such that $w(E(r_{i^*}^{\downarrow}, u^{\downarrow}))>\mintcut_G(u^{\downarrow})/2$ if it exists.
    To do so, split the set $[k]$ into two halves $I_1,I_2$, and recurse on $\arg\max_{J\in \set{I_1\setminus\{j\},I_2\setminus\{j\}}} w(E(r_J^{\downarrow}, u^{\downarrow}))$.
    This binary search terminates after at most $\log k$ iterations and returns a vertex $r_{i^*}$ such that $u$ is interested in $r_{i^*}$ if it exists.
    We again use the fact that $w(E(r_{I\setminus\{j\}}^{\downarrow}, u^{\downarrow}))$ can be found by querying the cuts $(\cup_{i\in I\setminus\{j\}} r_i^{\downarrow}) \cup u^{\downarrow},(\cup_{i\in I\setminus\{j\}} r_i^{\downarrow}),u^{\downarrow}$, which are unions of at most three contiguous sets in the post-order traversal of $T$, and therefore can be queried in $\polylog(n)$ time using the tree cut-query oracle.
    Therefore, the total time complexity of this process is $\polylog(n)$.

    Finally, we find the last vertex $\alpha\in P$ in which $u$ is interested, by performing a binary search on $P$.
    By \Cref{observation:interested-edges-structure}, if $u$ is interested in some vertex $x\in P$, then $u$ is also interested in every vertex above $x$ in $P$.
    Notice that $u$ is not a descendant of any vertex in $P$.
    Therefore, it is possible to find $w(E(x^{\downarrow}, u^{\downarrow}))$ by querying the cuts $x^{\downarrow}\cup u^{\downarrow},x^{\downarrow},u^{\downarrow}$ using the tree cut-query oracle.
    This immediately yields a binary search procedure to find the last vertex $\alpha\in P$ in which $u$ is interested using $\polylog(n)$ time.
    Finally, repeat the same process in the tree rooted at $\alpha$ to find all vertices in which $u$ is interested in this tree.

    To bound the time complexity, note that every leaf to root path in $T$ is decomposed into $O(\log n)$ paths in the heavy-light decomposition, and that throughout the algorithm we perform at most two binary searches on each of these paths, which yields a total time complexity of $\polylog(n)$.
\end{proof}

\subsection{Tree Packing Stability}
\label{sec:tree-packing-stability}
In this section we prove \Cref{lemma:tree-packing-stability}.
We note that a similar result was previously shown in \cite{CQX19}, however their goal was to obtain a better bound on the number of approximate minimum cuts rather than to show stability under edge updates.
We first provide a few results that will be used in the proof.
\begin{lemma}[\cite{PST91}]
    \label{lemma:tree-packing-procedure}
    There exists a deterministic algorithm that, given a weighted graph $G$ on $n$ vertices and $m$ edges with minimum cut value $\lambda_G$, constructs a weighted tree packing $\T$ of $G$ with value at least $(1-\epsilon)\lambda_G/2$ and at most $O(\epsilon^{-2}\lambda_G)$ distinct trees in time $\tO(\epsilon^{-2} m \lambda_G)$.
\end{lemma}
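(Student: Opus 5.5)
The plan is to obtain the packing by the greedy multiplicative-weights scheme of Plotkin--Shmoys--Tardos, in the form later used by Thorup and Karger for greedy tree packing. We assume integer edge weights. An edge of weight $w(e)$ is treated as $w(e)$ parallel unit-capacity copies, but only implicitly: every step works with loads normalized by $w(e)$.

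\textbf{The LP and the target value.} Consider the fractional packing LP: maximize $\sum_T x_T$ over spanning trees $T$, subject to $\sum_{T\ni e} x_T\le w(e)$ for every edge $e$. By the Nash-Williams--Tutte theorem, every graph with minimum cut $\lambda_G$ has a packing of value at least $\lambda_G/2$. So it suffices to find a feasible packing whose value is at least $(1-\epsilon)$ times the LP optimum $\tau^*\ge \lambda_G/2$.

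\textbf{The greedy procedure.} Fix $K=\Theta(\epsilon^{-2}\lambda_G\log n)$, which is $O(\epsilon^{-2}\lambda_G)$ up to the logarithmic factor. Maintain a multiset $\T$ of trees, initially empty. For each $e$ let $L(e)$ be the number of trees of $\T$ containing $e$, and define the relative load $\ell(e)=L(e)/w(e)$. Repeat $K$ times: compute a minimum spanning tree $T$ with respect to the lengths $\ell(e)$, and add $T$ to $\T$. At the end, give every tree weight $1/\max_e \ell(e)$, counted with multiplicity. This scaling makes the packing feasible by construction, and the packing has at most $K$ distinct trees. Each iteration is one MST computation plus updating the loads along $n-1$ edges, which costs $\tO(m)$. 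The total time is therefore $\tO(Km)=\tO(\epsilon^{-2}m\lambda_G)$, and everything is deterministic.

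\textbf{The approximation guarantee.} This is the main obstacle: we must show that after $K$ rounds $\max_e \ell(e)\le (1+O(\epsilon))K/\tau^*$, so that the scaled packing has value $K/\max_e\ell(e)\ge(1-O(\epsilon))\tau^*$. I would run the standard potential argument on
\[
\Phi=\sum_e w(e)\,(1+\epsilon)^{\ell(e)\cdot c},
\]
where the scale $c$ is chosen so that one round increases each exponent by at most $1/\lambda_G$-ish. This is the width bound: since $w(e)\ge 1$ and each tree touches $e$ at most once, one round raises $\ell(e)$ by at most $1/w(e)\le 1$. The key points of the argument are as follows.

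First, the MST with respect to the exponential lengths is a best response. By LP duality, for any length function $y$ we have
\[
\min_T \sum_{e\in T} y(e)\;\le\; \frac{\sum_e w(e)\,y(e)}{\tau^*}.
\]
This bounds $\Phi$'s multiplicative growth per round by $1+\epsilon c/\tau^*$.

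Second, the true MST is taken with respect to the linear lengths $\ell(e)$, not the exponential ones. To handle this I would use the Thorup--Karger observation that the MST order under $\ell$ coincides with the order under any monotone transform of $\ell$, such as the exponential lengths, provided the comparison uses $\ell(e)$ together with tie-breaking. Because the transform is applied edge-by-edge to the same quantity $\ell(e)$, the relative order of edges is preserved, so the same tree minimizes both.

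Third, comparing $\log\Phi$ at the end against $\log$ of the largest single term gives
\[
\max_e \ell(e)\le (1+O(\epsilon))\,K/\tau^*+O(\epsilon^{-1}\log m).
\]
The choice $K=\Theta(\epsilon^{-2}\lambda_G\log n)$ makes the additive term an $\epsilon$-fraction of $K/\tau^*$, using $\tau^*\le\lambda_G$.

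If the monotone-transform step turns out to be delicate, the fallback is to compute the MST directly with respect to the exponential lengths $(1+\epsilon)^{c\,\ell(e)}$. This changes neither the running time nor the count of distinct trees.
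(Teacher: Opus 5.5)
Your route is the standard multiplicative-weights argument that the citation to \cite{PST91} stands for; the paper gives no proof of its own. That route is: repeatedly add a minimum spanning tree under the relative loads $\ell(e)=L(e)/w(e)$, analyse with an exponential potential, and use the fact that an MST is optimal for every monotone transform of the lengths. It works, but one step is mis-stated.

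The scale in $\Phi=\sum_e w(e)(1+\epsilon)^{c\,\ell(e)}$ must be $c=\Theta(1)$, e.g.\ $c=1$, not $c\approx 1/\lambda_G$. The per-round growth bound needs $(1+\epsilon)^{c/w(e)}-1\le \epsilon c/w(e)$. This holds by convexity exactly when $c\le w(e)$, so $c\le 1$ suffices since $w(e)\ge 1$. This is also where the factor $w(e)$ in $\Phi$ cancels, leaving a marginal cost of at most $\epsilon c\,(1+\epsilon)^{c\ell(e)}$ per tree edge. That cost depends on $\ell(e)$ alone and monotonically, which is what makes the MST under $\ell$ a best response.

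Write $\Phi_0=\sum_e w(e)$ and $\ell^*=\max_e\ell(e)$ after $K$ rounds. The final comparison then reads $c\,\ell^*\ln(1+\epsilon)\le \ln(\Phi_0/w_{\min})+\epsilon cK/\tau^*$. So the additive term is $O\bigl(\epsilon^{-1}c^{-1}\ln(\Phi_0/w_{\min})\bigr)$. With $c\approx 1/\lambda_G$ it would be a factor $\lambda_G$ larger, and you would need $\Theta(\epsilon^{-2}\lambda_G^2\log n)$ rounds.

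\textbf{What the argument delivers.} You get $O(\epsilon^{-2}\lambda_G\log n)$ distinct trees, not the stated $O(\epsilon^{-2}\lambda_G)$. The logarithm is intrinsic to this analysis. It is harmless here: the paper applies the lemma only to the unweighted skeleton $H'$ with $\lambda_{H'}=\polylog(n)$, and it only needs $\polylog(n)$ trees.

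\textbf{Additive term.} The additive term is really $\epsilon^{-1}\ln\bigl(\sum_e w(e)/\min_e w(e)\bigr)$ rather than $\epsilon^{-1}\log m$. The two coincide in that unweighted application.

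\textbf{Choosing $K$.} Fixing $K$ in advance presupposes a constant-factor estimate of $\lambda_G$. You can avoid this by stopping as soon as $\max_e\ell(e)\ge C\epsilon^{-2}\ln(\Phi_0/w_{\min})$. Since the scaled packing is always feasible, $\max_e\ell(e)\ge K/\tau^*$ at all times. So the stopping rule fires within $O(\epsilon^{-2}\tau^*\ln(\Phi_0/w_{\min}))$ rounds. At that moment the additive term is at most an $O(\epsilon/C)$ fraction of $\max_e\ell(e)$.
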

\begin{lemma}[Lemma 2.3 of \cite{Kar00}]
    \label{lemma:structural-packing-2-respecting}
    Given any weighted tree packing $\T$ of a graph $G$ with value $\beta\lambda_G$ and any cut of value $\alpha\lambda_G$, at least a $\tfrac{1}{2}(3-\tfrac{\alpha}{\beta})$ fraction of the trees in $\T$ $2$-respect the cut.
\end{lemma}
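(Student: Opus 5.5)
The plan is a weighted averaging (Markov-type) argument over the trees of $\T$, using only two facts: every spanning tree crosses every cut at least once, and a tree packing respects the edge weights. Write the packing as weights $w_T\ge 0$ on the trees $T\in\T$, so that its value is $\sum_{T\in\T} w_T=\beta\lambda_G$. The packing constraint says that for every edge $e$ we have $\sum_{T\ni e} w_T\le w(e)$. Fix a cut $C\subseteq V$ with $\mintcut_G(C)=\alpha\lambda_G$. For each tree $T$, let $k_T=|E_T\cap E(C,V\setminus C)|$ be the number of tree edges crossing $C$. Throughout, ``fraction of trees'' means fraction measured by the weights $w_T$.

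\emph{Upper bound on crossings.} Double count the weighted number of tree edges crossing $C$. Summing the packing constraint over the cut edges gives
\[
\sum_{T\in\T} w_T k_T=\sum_{e\in E(C,V\setminus C)}\ \sum_{T\ni e} w_T\le \sum_{e\in E(C,V\setminus C)} w(e)=\alpha\lambda_G .
\]

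\emph{Lower bound on crossings.} Each $T$ is a spanning tree and $C$ is a proper nonempty subset of $V$, so $T$ must contain an edge leaving $C$; hence $k_T\ge 1$. Let $x$ be the weighted fraction of trees with $k_T\le 2$, i.e.\ the trees that $2$-respect $C$, so these carry total weight $x\beta\lambda_G$. The remaining trees have $k_T\ge 3$ and carry weight $(1-x)\beta\lambda_G$. Therefore
\[
\sum_{T\in\T} w_T k_T\ \ge\ x\beta\lambda_G\cdot 1+(1-x)\beta\lambda_G\cdot 3=(3-2x)\beta\lambda_G .
\]

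\emph{Conclusion.} Combining the two bounds gives $(3-2x)\beta\le\alpha$, which rearranges to $x\ge \tfrac12\bigl(3-\tfrac{\alpha}{\beta}\bigr)$, the claimed bound.

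I expect no real obstacle here. The only points needing care are that a spanning tree crosses every nontrivial cut at least once, and that ``fraction'' is taken with respect to the packing weights (or uniformly, if all trees have equal weight).
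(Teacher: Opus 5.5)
Your proposal is correct and is the standard weighted-averaging proof from \cite{Kar00}: the packing constraints bound $\sum_T w_T k_T$ above by $\alpha\lambda_G$, the fact $k_T\ge 1$ for spanning trees bounds it below by $(3-2x)\beta\lambda_G$, and the fraction is measured by the tree weights. The paper does not reprove this lemma and only cites Karger's Lemma 2.3, so there is no separate argument in the paper to compare against.
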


Observe that by \Cref{lemma:tree-packing-procedure}, the complexity of constructing the tree packing is proportional to $m\cdot \lambda_G$.
Therefore, in order to keep the processing time to $\tO(m)$, we need to first lower the value of $\lambda_G$ to $\polylog (n)$.
To do so, we construct a skeleton graph $H$ of $G$ which approximately preserves all near minimum cuts of $G$ and has minimum cut value $O(\log n)$.
\begin{theorem}[Theorem 7.8 of \cite{HLRW24}]
    \label{lemma:skeleton-graph-construction}
    There exists a deterministic algorithm that, given a weighted graph $G$ on $n$ vertices and $m$ edges with minimum cut value $\lambda_G$ and parameter $\epsilon\in [0,1]$, constructs in time $\tO(\epsilon^{-O(1)}m)$ an unweighted graph $H$ and weight parameter $W'=(\epsilon/\log n)^{O(1)} \lambda_G$ such that,
    \begin{enumerate}
        \item For any cut $C\subseteq V$, such that $\mintcut_G(C) \le \alpha \lambda_G$ for a small enough constant $\alpha>1$, $W'\cdot \mintcut_H(C)\le (1+\epsilon)\mintcut_G(C)$, and
        \item For any cut $C\subseteq V$, $ W'\cdot \mintcut_H(C) \ge (1-\epsilon)\mintcut_G(C)$.
    \end{enumerate}
\end{theorem}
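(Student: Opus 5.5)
Since this is imported from \cite{HLRW24}, I would not re-derive it in full. The plan is to follow the standard route: first obtain the randomized skeleton of \cite{Kar00}, then derandomize it by replacing random sampling with a deterministic rounding that is justified by an expander-type decomposition.

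\textbf{Randomized template.} Set $W'=\Theta(\epsilon^2\lambda_G/\log n)$ and let $H$ contain each unit of weight independently with probability $p=1/W'$. By Karger's cut-counting bound, there are at most $n^{2\alpha'}$ cuts of value at most $\alpha'\lambda_G$. A Chernoff bound with the chosen $p$ gives concentration $W'\cdot\mintcut_H(C)\in(1\pm\epsilon)\mintcut_G(C)$ with failure probability $n^{-\Omega(\alpha')}$ for each such cut. A union bound over cut values in geometric buckets then yields both items simultaneously. Item~2 needs only a lower tail, which holds for all cuts in the same way.

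\textbf{Derandomization.} The goal is to replace sampling by a deterministic choice of integer multiplicities $\lfloor w(e)/W'\rceil$ that errs only by a $(1\pm\epsilon)$ factor on the relevant cuts. Naive rounding fails because a cut crossing many light edges accumulates rounding error. To control this, I would compute a (boundary-linked) expander decomposition, or hierarchy, of $G$ with conductance $\phi=(\epsilon/\log n)^{O(1)}$, in deterministic $\tO(\epsilon^{-O(1)}m)$ time. Inside each expander cluster $X$, any cut of value at most $\alpha\lambda_G$ can cut off only a small side of $X$, since its value must be at least $\phi$ times that side's volume. Because every vertex has degree at least $\lambda_G$, that small side has few vertices. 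So within a cluster, near-minimum cuts are essentially determined by a small vertex set plus the boundary edges. I would then replace the intra-cluster edges by a deterministic sparse ``expander-like'' unweighted graph whose degrees are $\deg(v)/W'$ up to $(1\pm\epsilon)$. Any small-side cut in it is approximated by degree sums, because internal edges of a tiny set are negligible in a simple-ish regime. Inter-cluster edges, whose total weight is small, are rounded with error charged to $\phi$.

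\textbf{Main obstacle.} The hardest step is item~1 for cuts that cut through many clusters simultaneously. Per-cluster errors must sum to at most $\epsilon\cdot\mintcut_G(C)$, not $\epsilon\cdot\mathrm{vol}$. I would handle this by charging each cluster's error to the portion of $C$'s boundary inside that cluster, using the boundary-linkedness so that boundary edges count as well. Over $O(\log n)$ hierarchy levels this costs an extra $\log n$ factor, which is absorbed in the $(\epsilon/\log n)^{O(1)}$ choice of $W'$. Item~2 follows more easily, since rounding down by degree sums only loses weight we can bound from below via conductance.
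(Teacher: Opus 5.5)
The paper gives no proof of this statement. It imports Theorem~7.8 of \cite{HLRW24} and adds only one remark: \cite{HLRW24} states item~1 for exact minimum cuts, and the paper asserts that the same proof covers cuts of value at most $\alpha\lambda_G$, which is what the tree-packing stability lemma uses. If you rely on the citation, that caveat is the one thing you must add, and you do not. If you rely on your sketch instead, it has genuine gaps: the randomized template is standard and fine, but the derandomization is not.

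First, the inter-cluster step fails. Rounding to $\lfloor w(e)/W'\rceil$ deletes every edge lighter than $W'/2$. A small \emph{total} inter-cluster weight says nothing about any particular cut. A balanced near-minimum cut has tiny conductance, so the decomposition must separate its sides, and the cut may consist entirely of inter-cluster edges. For example, take two unit-weight cliques $K_N$ joined by $N/2$ unit edges. Then $\lambda_G=N/2$ and $W'>2$ for large $N$. A valid decomposition has the two cliques as clusters, every connecting edge rounds to $0$, and the minimum cut has value $0$ in $H$, violating item~2. Pushing these edges up a hierarchy does not fix this by itself. A synthetic replacement at a higher level must place its endpoints inside lower-level clusters, and you give no argument that near-minimum cuts slicing those clusters stay within $1\pm\epsilon$.

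Second, the intra-cluster step assumes the internal weight of a small side is negligible. This is false for weighted graphs, because $\phi$-expansion only gives $w(\partial_X S)\ge\phi\,\mathrm{vol}(S)$. Suppose $u,v$ share an edge of weight $100\lambda_G$ and each has $\lambda_G/2$ further incident weight. Then $\{u,v\}$ is a minimum cut, and $u,v$ can lie in a common $\phi$-expander cluster whenever $\phi\le 1/201$. A degree-matching ``expander-like'' replacement gives $\{u,v\}$ value about $201\lambda_G/W'$ in $H$, violating item~1. Only a replacement that essentially reproduces $G$'s own edges can work. There are two smaller problems as well:
\begin{itemize}
\item Even with no internal weight, $(1\pm\epsilon)$-accurate degrees cost $\epsilon\,\mathrm{vol}(S)\le(\epsilon/\phi)\,w(\partial_X S)$, so the error is off by a factor $1/\phi$.
\item A conductance argument recovers only a $\phi$ fraction of a cut that splits a cluster evenly, whereas item~2 as stated needs $1-\epsilon$. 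For instance, a degree-matching expander roughly halves the bipartition cut of a $K_{N,N}$ cluster.
\end{itemize}

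Third, you cannot simply assume a deterministic $\tO(m)$-time expander decomposition with $\phi=(\epsilon/\log n)^{O(1)}$. The standard deterministic construction \cite{CGLNPS20} takes $m^{1+o(1)}$ time with $n^{o(1)}$ losses, which would spoil both the running time and the form of $W'$.

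The missing idea is a deterministic, \emph{correlated} choice of which light edges of $G$ to keep, so that all near-minimum cuts are approximated simultaneously. In other words, you need a derandomization of the Chernoff-plus-union-bound step of your template, built on a cluster structure computable deterministically in near-linear time. This is the substance of \cite{HLRW24}, which builds on Li's pessimistic-estimator derandomization of Karger's skeleton, and it cannot be recovered from your sketch.
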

It is straightforward to verify that the skeleton graph $H$ has minimum cut value $\polylog (n)$.
Note that the original theorem statement of \cite{HLRW24} only guarantees (1) for minimum cuts of $G$, however it is not hard to verify that the same proof also holds for cuts of value at most $\alpha \lambda_G$ for a small enough constant $\alpha>1$.
In addition, there exist randomized constructions of skeleton graphs that have better polylogarithmic factors, however we opt to use a single procedure even for our randomized algorithm for simplicity.

\begin{proof}[Proof of \Cref{lemma:tree-packing-stability}]
    Let $\alpha^{*} =\min\set{\alpha-1,1/100}$ where $\alpha$ is the parameter from (1) in \Cref{lemma:skeleton-graph-construction}.
    Begin by constructing a skeleton graph $H'$ using \Cref{lemma:skeleton-graph-construction} with $\epsilon=\alpha^*$.
    The number of edges in $H'$ is at most $\tO(m)$ since \Cref{lemma:skeleton-graph-construction} takes $\tO(m)$ time.
    In addition, observe that the minimum cut of $H'$ has $\polylog (n)$ edges.

    Applying \Cref{lemma:tree-packing-procedure} on $H'$ with $\epsilon=\alpha^*$ we get a tree packing $\T$ of $H'$ with value at least $(1-\epsilon)\lambda_{H'}/2\ge 2\lambda_{H'}/5$ and at most $O(\lambda_{H'})$ distinct trees in $\tO(m)$ time.
    Fix some $1+\alpha^*$-approximate minimum cut $C\subseteq V$ of $G$.
    By \Cref{lemma:skeleton-graph-construction}, $C$ is a $(1+\alpha^*)^2/(1-\alpha^*)\le (1+1/30)$ approximate minimum cut of $H'$.
    Hence, by \Cref{lemma:structural-packing-2-respecting}, at least a $\tfrac{1}{2}(3-\tfrac{1+1/30}{2/5})=\Omega(1)$ fraction of the trees in $\T$ $2$-respect $C$.
    Therefore, at least one tree in $\T$ $2$-respects $C$.
    
    To conclude the proof, notice that when $G$ undergoes at most $\eta \lambda_G$ unit weight edge updates, the value of every cut changes by at most $\eta \lambda_G$.
    Choosing $\eta=\alpha^*/2$ notice that the minimum cut of $G_t$ has value at most $(1+\alpha^*/2)\lambda_G$.
    Furthermore, any cut that achieves this value must have value at most $(1+\alpha^*)\lambda_G$ in $G$, and hence must be $2$-respected by some tree in $\T$.
    Therefore, every minimum cut of $G_t$ is $2$-respected by some tree in $\T$.
\end{proof}

\subsection{Dynamic tree cut-query oracle}
\label{sec:dynamic-cut-query-oracle}
In this section we prove \Cref{lemma:dynamic-T-cut-query-oracle}.
The data structure is based on the $2$D semigroup range searching data structure of \cite{Chazelle88}.
Throughout this section we use the semigroup $\R_{\ge 0}$ with the addition operation.
\begin{definition}[$2$D Semigroup Range Searching]
    Let $f$ be some function from points in $\R^2$ to a commutative semigroup $(G,+)$.
    A semigroup range searching data structure stores $m$ points $p_1,\ldots,p_m$ on the plane and supports the following query.
    Given an axes-aligned rectangle $R$, return $\sum_{i\in [m]} \indic{p_i\in R} f(p_i)$.
\end{definition}
\begin{lemma}[\cite{Chazelle88}]
    \label{lemma:chazelle-range-counting}
    There exists a $2$D semigroup range searching data structure that uses $\polylog m$ worst-case time to add/remove a point and $O(\log^2 m)$ time to answer a range query.
\end{lemma}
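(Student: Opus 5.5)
I would prove this with a two-level range tree. The primary structure is a balanced search tree over the $x$-coordinates, and each primary node $\nu$ stores a secondary balanced search tree over the points in its slab, keyed by $y$. Every secondary node is augmented with the semigroup sum of $f$ over its subtree. The augmentation uses only $+$, never subtraction, which matters because a semigroup has no inverses.

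\emph{Query.} An axis-aligned rectangle $R=[x_1,x_2]\times[y_1,y_2]$ is answered in three steps:
\begin{enumerate}
\item Decompose $[x_1,x_2]$ into $O(\log m)$ canonical primary nodes.
\item In each canonical node's secondary tree, decompose $[y_1,y_2]$ into $O(\log m)$ canonical subtrees.
\item Add up their stored sums.
\end{enumerate}
Steps 1 and 2 use $O(\log^2 m)$ semigroup additions in total, and no inverse is needed.

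\emph{Updates.} Inserting or deleting a point touches the $O(\log m)$ primary nodes on its root-to-leaf path. In each of their secondary trees (say red-black trees) I insert or delete the point, ordering ties in $y$ by a unique point identifier. Only the $O(\log m)$ augmented sums along the rotation and search path then need recomputing. This gives $O(\log^2 m)$ worst-case time, provided the primary tree's shape never changes.

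\emph{The main obstacle.} Rebalancing the primary tree is the step I expect to be hard. A rotation at a primary node forces its secondary structure to be rebuilt in time proportional to the slab size, so the naive bound is amortized at best. I would try two remedies, in this order.
\begin{itemize}
\item \textbf{Fixed universe (first choice).} In our application the coordinates always come from a fixed universe, namely post-order indices in $[n]$ (or a fixed $\poly(n)$ range). I would then build the primary tree once as a static balanced tree over that universe, with several points per leaf allowed. The primary tree never rebalances, so every update costs $O(\log n\cdot\log m)=\polylog$ worst case.
\item \textbf{Arbitrary coordinates.} Here I would use a weight-balanced BB[$\alpha$] primary tree. Rebuilding a subtree of size $s$ is triggered only after $\Omega(s)$ updates inside it, which gives $\polylog$ amortized time. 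To make this worst case, I would use the standard global-rebuilding technique: spread each rebuild over the subsequent updates, maintaining a shadow copy of the structure, and switch to it once the rebuild completes.
\end{itemize}
Either remedy yields the claimed $\polylog m$ update time and $O(\log^2 m)$ query time. For the oracle of \Cref{lemma:dynamic-T-cut-query-oracle} the fixed-universe version suffices, since edges are mapped to points whose coordinates are the post-order indices of their endpoints.
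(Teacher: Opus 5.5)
Your argument is correct, but it takes a different route from the paper. The paper gives no proof here: it cites Chazelle's functional (space-efficient) structure and remarks that its amortized update time can be made worst case ``using standard techniques.''

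You instead build the textbook two-level range tree with sum-augmented balanced secondary trees, which uses only semigroup additions, as required. Your decisive extra observation is about the one place the lemma is used, \Cref{lemma:dynamic-T-cut-query-oracle}. There the coordinates are post-order indices in $[n]$, so the primary tree can be a static balanced tree over that universe. This gives worst-case $O(\log n\cdot\log m)$ updates directly, with no amortized-to-worst-case conversion at all. That is more self-contained than the paper's remark. The price is $O(m\log n)$ space instead of Chazelle's near-linear space, which does not matter for the time bounds the paper needs.

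Two minor points. First, in the fixed-universe version the query cost is $O(\log n\cdot\log m)$, not literally $O(\log^2 m)$. This is harmless: $G_0$ has a spanning tree, so $m\ge n-1$ initially, and only $\polylog(n)$ is used downstream. Second, your arbitrary-coordinate variant is only a sketch. Keeping a BB[$\alpha$] primary tree balanced in the worst case needs many concurrent, nested \emph{partial} rebuilds at different levels (the Willard--Lueker technique), not a single Overmars-style global rebuild. That is at the same level of rigor as the paper's own remark, and you do not need it.
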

\begin{remark}
    The update time complexity of the data structure of \cite{Chazelle88} is amortized, however (as stated in the paper) it can be converted to worst-case using standard techniques.
\end{remark}
\begin{proof}[Proof of \Cref{lemma:dynamic-T-cut-query-oracle}]
    The proof mostly follows Lemma 6.11 in \cite{MN20}, except for performing queries of contiguous vertices in the post-order traversal.
    Order the vertices of $V$ according to a post-order traversal of the vertices $v_1,\ldots,v_n$.
    Then, initiate an empty semigroup range searching data structure $A$ of \Cref{lemma:chazelle-range-counting}, and for every edge $(v_i,v_j)$ place a point at coordinates $(i,j),(j,i)$ with its weight equal to the weight of the edge.
    Notice that this requires $\tO(|E|)$ time, and that every edge insertion/deletion can be performed by adding/removing a point in $A$ in $\polylog (n)$ time by \Cref{lemma:chazelle-range-counting}.
    We now explain how to answer all query types specified by \Cref{definition:T-cut-query-oracle}.
    We begin by showing how to answer every query that corresponds to a cut that is the union of at most $O(1)$ contiguous sets in the post-order traversal.
    
    In the case of a single contiguous set, denote the cut $S=\set{v_a,v_{a+1},\ldots,v_b}$.
    Observe that querying $A$ at $[p,q]\times[r,x]$ (when the intervals are disjoint), returns $w(E(\set{v_p,\ldots,v_q},\set{v_r,\ldots,v_x}))$.
    Therefore, summing the queries $[a,b]\times[1,a-1],[a,b]\times[b+1,n]$ returns exactly the sum of all edges with one endpoint in $S=\set{v_a,\ldots, v_b}$ and the other in $V\setminus S$.
    See \Cref{fig:contiguous-cuts} for an illustration of these queries.

    In the case of a union of two contiguous sets, denote the cut $S=\set{v_{a_1},v_{a_1+1},\ldots,v_{b_1}}\cup\set{v_{a_2},v_{a_2+1},\ldots,v_{b_2}}$.
    Assume without loss of generality that $b_1<a_2$.
    Similarly to the previous case, observe that summing the queries $[a_1,b_1]\times[1,a_1-1],[a_1,b_1]\times[b_1+1,a_2-1],[a_1,b_1]\times[b_2+1,n]$ returns $w(E(\set{v_{a_1},\ldots,v_{b_1}},V\setminus S))$.
    Therefore, the query can be answered by summing these queries and the symmetric queries for the second contiguous set.
    For an illustration of these queries, see \Cref{fig:contiguous-cuts}.
    Similarly, one can answer also queries that are the union of $O(1)$ contiguous sets by summing the appropriate queries to $A$.

\begin{figure}[htpb]
  \centering
  \scalebox{0.6}{
    \begin{tikzpicture}[
  rect/.style  = {fill=blue!15, draw=blue!70!black, line width=0.6pt},
  frame/.style = {draw=gray!55, dashed, line width=0.4pt},
  guide/.style = {draw=gray!40, dashed, line width=0.3pt},
  bar/.style   = {draw=orange!80!red, line width=1.2pt, line cap=round},
  axis/.style  = {draw=gray!65, -{Latex[length=4pt]}, line width=0.4pt},
  rlab/.style  = {text=orange!80!red, font=\bfseries\small},
]

\draw[frame] (0,0) rectangle (10,10);

\draw[guide] (3,0)  -- (3,10);
\draw[guide] (6,0)  -- (6,10);
\draw[guide] (0,7)  -- (10,7);
\draw[guide] (0,4)  -- (10,4);

\filldraw[rect] (0,4) rectangle (3,7);   %
\filldraw[rect] (6,4) rectangle (10,7);  %

\draw[bar] (3,10.5)  -- (6,10.5);
\draw[bar] (3,10.35) -- (3,10.65);
\draw[bar] (6,10.35) -- (6,10.65);
\node[rlab, above=1pt] at (3,10.5) {$a$};
\node[rlab, above=1pt] at (6,10.5) {$b$};

\draw[bar] (-0.7,4)  -- (-0.7,7);
\draw[bar] (-0.85,4) -- (-0.55,4);
\draw[bar] (-0.85,7) -- (-0.55,7);
\node[rlab, left=1pt] at (-0.7,7) {$a$};
\node[rlab, left=1pt] at (-0.7,4) {$b$};

\draw[axis] (0,-0.25)  -- (10.4,-0.25);
\draw[axis] (-0.25,10) -- (-0.25,-0.4);

\node[below=2pt, font=\small] at (0,-0.25)  {$1$};
\node[below=2pt, font=\small] at (10,-0.25) {$n$};
\node[anchor=east, font=\small] at (-0.3,10) {$1$};
\node[anchor=east, font=\small] at (-0.3,0)  {$n$};

\node[font=\small] at (5,-1.6) {(i)};

\draw[frame] (13,0) rectangle (23,10);

\draw[guide] (14.8,0) -- (14.8,10);
\draw[guide] (16,0)   -- (16,10);
\draw[guide] (18.5,0) -- (18.5,10);
\draw[guide] (20,0)   -- (20,10);

\draw[guide] (13,8.5) -- (23,8.5);
\draw[guide] (13,7)   -- (23,7);
\draw[guide] (13,4.5) -- (23,4.5);
\draw[guide] (13,3)   -- (23,3);

\filldraw[rect] (13,7)   rectangle (14.8,8.5);  %
\filldraw[rect] (16,7)   rectangle (18.5,8.5);  %
\filldraw[rect] (20,7)   rectangle (23,8.5);    %

\filldraw[rect] (13,3)   rectangle (14.8,4.5);
\filldraw[rect] (16,3)   rectangle (18.5,4.5);
\filldraw[rect] (20,3)   rectangle (23,4.5);

\draw[bar] (14.8,10.5)  -- (16,10.5);
\draw[bar] (14.8,10.35) -- (14.8,10.65);
\draw[bar] (16,10.35)   -- (16,10.65);
\node[rlab, above=1pt] at (14.8,10.5) {$a_1$};
\node[rlab, above=1pt] at (16,10.5)   {$b_1$};

\draw[bar] (18.5,10.5)  -- (20,10.5);
\draw[bar] (18.5,10.35) -- (18.5,10.65);
\draw[bar] (20,10.35)   -- (20,10.65);
\node[rlab, above=1pt] at (18.5,10.5) {$a_2$};
\node[rlab, above=1pt] at (20,10.5)   {$b_2$};

\draw[bar] (12.3,7)    -- (12.3,8.5);
\draw[bar] (12.15,7)   -- (12.45,7);
\draw[bar] (12.15,8.5) -- (12.45,8.5);
\node[rlab, left=1pt] at (12.3,8.5) {$a_1$};
\node[rlab, left=1pt] at (12.3,7)   {$b_1$};

\draw[bar] (12.3,3)    -- (12.3,4.5);
\draw[bar] (12.15,3)   -- (12.45,3);
\draw[bar] (12.15,4.5) -- (12.45,4.5);
\node[rlab, left=1pt] at (12.3,4.5) {$a_2$};
\node[rlab, left=1pt] at (12.3,3)   {$b_2$};

\draw[axis] (12.75,-0.25) -- (23.4,-0.25);
\draw[axis] (12.75,10)    -- (12.75,-0.4);

\node[below=2pt, font=\small] at (13,-0.25) {$1$};
\node[below=2pt, font=\small] at (23,-0.25) {$n$};
\node[anchor=east, font=\small] at (12.7,10) {$1$};
\node[anchor=east, font=\small] at (12.7,0)  {$n$};

\node[font=\small] at (18,-1.6) {(ii)};

\end{tikzpicture}
  }
  \caption{An illustration of contiguous cuts. (i) a single contiguous cut, (ii) a union of two contiguous cuts. The blue rectangles represent the queries to $A$ that are needed to answer the query.}
  \label{fig:contiguous-cuts}
\end{figure}

    Using the above queries, one can answer all queries of $2$-respecting cuts.
    Notice that every cut that $1$-respects $T$ is of the form $u^{\downarrow}$ for some $u\in V$.
    Furthermore, $u^{\downarrow}$ is a contiguous range in the ordering since it is formed by a post order traversal.
    Next, notice that $2$-respecting cuts are either of the form $u^{\downarrow}\cup w^{\downarrow}$ for some $u,w\in V$, or of the form $u^{\downarrow}\setminus w^{\downarrow}$ for some $u,w\in V$ such that $u$ is an ancestor of $w$ in $T$.
    Similarly to the previous case, these cuts are also formed by one or two contiguous ranges in the ordering, and therefore the procedure described above can find their value.
\end{proof}

\subsection{Putting It All Together}
\label{sec:putting-it-all-together}
In this section we prove \Cref{lemma:encapsulation-of-mn20}.
The proof is immediate given the following result of \cite{MN20} and \Cref{lemma:find-all-interesting-pairs}.
\begin{theorem}
    \label{theorem:mn20-2-respecting}
    There exists a deterministic algorithm that, a spanning tree $T$ of a weighted graph $G$, a tree cut oracle corresponding to them, and the list of interested vertices, finds a minimum cut among those that $2$-respect $T$ in time $\tO(n)$.
\end{theorem}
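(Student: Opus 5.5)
The plan is to follow the structure of \cite{MN20}: classify every $2$-respecting cut by the tree edges it cuts, and show that only $\tO(n)$ candidate cuts need to be evaluated. Each candidate is then queried through the tree cut-query oracle in $\polylog(n)$ time.

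The $1$-respecting cuts are the sets $u^{\downarrow}$. We query all $n$ of them directly, one oracle query each. The $2$-respecting cuts come in two types: $u^{\downarrow}\cup v^{\downarrow}$ with $u\perp v$, and $u^{\downarrow}\setminus v^{\downarrow}$ with $v$ a strict descendant of $u$. Both types are among the cuts the oracle supports by \Cref{definition:T-cut-query-oracle}.

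The key structural step, adapted from \cite{MN20}, is that a $2$-respecting cut defined by $u,v$ can be strictly smaller than both $\mintcut_G(u^{\downarrow})$ and $\mintcut_G(v^{\downarrow})$ only if $u$ and $v$ are mutually interested. For the cross type, $\mintcut_G(u^{\downarrow}\cup v^{\downarrow})=\mintcut_G(u^{\downarrow})+\mintcut_G(v^{\downarrow})-2w(E(u^{\downarrow},v^{\downarrow}))$. For this to be less than $\mintcut_G(u^{\downarrow})$ we need $w(E(u^{\downarrow},v^{\downarrow}))>\mintcut_G(v^{\downarrow})/2$, and symmetrically for $v$. So $v$ is cross interested in $u$ and $u$ is cross interested in $v$. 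The descendant type is analogous with down-interest, using $\mintcut_G(u^{\downarrow}\setminus v^{\downarrow})=\mintcut_G(u^{\downarrow})+\mintcut_G(v^{\downarrow})-2w(E(v^{\downarrow},V\setminus u^{\downarrow}))$. Here the condition for beating $\mintcut_G(u^{\downarrow})$ is that $u$ is down interested in $v$. The condition for beating $\mintcut_G(v^{\downarrow})$ is that more than half of $v^{\downarrow}$'s boundary leaves $u^{\downarrow}$, i.e.\ $v$ is interested in $u$ in the corresponding sense. Any cut failing these conditions is dominated by a $1$-respecting cut already evaluated. Hence it suffices to evaluate cuts for pairs that are interested in each other.

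By \Cref{observation:interested-edges-structure}, the set of vertices any fixed $u$ is interested in lies on a single root-to-leaf path. The given list of interested vertices presents this set as $O(\log n)$ segments of heavy paths, each segment given by its endpoints. The proof of \Cref{lemma:find-all-interesting-pairs} outputs it in exactly this form. The remaining task is to find, among all pairs on these paths, the mutually interested pairs or at least a best one.

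I expect this step to be the main obstacle, since the segments can be long, and naively enumerating all pairs could cost $\Theta(n^2)$. I would follow \cite{MN20}: for each heavy path $P$, collect all $u$ whose interested segment lies on $P$, together with the vertices of $P$ that are themselves interested back in some vertex. Then I would run the bipartite/path procedure of \cite{MN20}, which finds the minimum $2$-respecting cut among these candidate pairs using $\tO(|P|+\#\text{requests})$ oracle queries. The total number of requests is $\tO(n)$ because each $u$ contributes $O(\log n)$ segments, and each query costs $\polylog(n)$. If that procedure is unavailable as a black box, I would fall back on the monotonicity along $P$: as $v$ descends $P$, interest in $u$ is monotone, so the mutually interested pairs form intervals that can be located by binary search, which gives $\polylog(n)$ candidate pairs per request. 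Finally, I return the minimum value over all evaluated candidates and the $1$-respecting cuts, for a total running time of $\tO(n)$.
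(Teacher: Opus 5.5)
Your main route matches the paper. The paper does not prove this statement; it imports it from \cite{MN20}, and the proof environment that follows it is really the proof of \Cref{lemma:encapsulation-of-mn20}. You likewise hand the decisive step to the path procedure of \cite{MN20}.

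Your reduction to the $1$-respecting cuts plus mutually interested pairs is correct, with one slip in the descendant case: you swapped the two conditions. In fact $\mintcut_G(u^{\downarrow}\setminus v^{\downarrow})<\mintcut_G(u^{\downarrow})$ holds iff $w(E(v^{\downarrow},V\setminus u^{\downarrow}))>\mintcut_G(v^{\downarrow})/2$, while beating $\mintcut_G(v^{\downarrow})$ is exactly down-interest of $u$ in $v$. This is harmless because you need both conditions. However, the former relation is not one the paper defines, so it is not in the given list. You do not need it anyway, since $v$ can only be a down-interest target of its ancestors, which lie on $O(\log n)$ heavy paths.

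The fallback you sketch yourself would fail, because mutually interested pairs can number $\Theta(n^2)$. Hang two paths $p_1,\dots,p_k$ and $q_1,\dots,q_k$ from the root, and join $p_k$ and $q_k$ by an edge heavier than every tree edge. Then every $p_i,q_j$ are mutually cross interested, and $\mintcut_G(p_i^{\downarrow}\cup q_j^{\downarrow})$ is the sum of the weights of the tree edges entering $p_i$ and $q_j$, which can be an arbitrary sequence along each path. So binary search finds the endpoints of long intervals but not the minimum inside them, and you do not get $\polylog(n)$ candidates per request.

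The missing idea, which is what makes the \cite{MN20} procedure run in $\tO(n)$, is the Monge property. Take heavy paths $Q=(q_1,q_2,\dots)$ and $P=(p_1,p_2,\dots)$ listed top-down, with $q_i\perp p_j$. Then $w(E(q_i^{\downarrow},p_j^{\downarrow}))$ is a two-dimensional suffix sum of the nonnegative weights between the layers $q_a^{\downarrow}\setminus q_{a+1}^{\downarrow}$ and $p_b^{\downarrow}\setminus p_{b+1}^{\downarrow}$, over $a\ge i$ and $b\ge j$. Hence $M[i][j]=\mintcut_G(q_i^{\downarrow}\cup p_j^{\downarrow})$ satisfies $M[i][j]+M[i+1][j+1]\le M[i][j+1]+M[i+1][j]$. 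Its minimum can therefore be found with $\tO(\#\text{rows}+\#\text{cols})$ oracle queries by monotone row-minima search (SMAWK or divide and conquer). The descendant case is analogous, with a triangular matrix when $P=Q$.

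This requires splitting the requesters aimed at $P$ by their own heavy path $Q$. Requesters from different branches have non-nested subtrees, and the Monge property fails across them. Take as rows the vertices of $Q$ interested in $P$, and as columns the vertices of $P$ interested back in $Q$. Then every vertex lands in $O(\log n)$ groups, which gives the $\tO(n)$ bound.
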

\begin{proof}
    Begin by finding all the interested vertices using \Cref{lemma:find-all-interesting-pairs} in $\tO(n)$ time.
    Then, use \Cref{theorem:mn20-2-respecting} to find a minimum cut among those that $2$-respect $T$ and return it.
    This concludes the proof.
\end{proof}

\section{Sublinear Update Time for Edge Connectivity}
\label{sec:sublinear-edge-connectivity}
In this section we prove \Cref{lemma:sublinear-edge-connectivity}, giving an algorithm that maintains the edge connectivity of a dynamic graph $G$ on $n$ vertices with $\tO(n/\delta_G)$ worst-case update time, where $\delta_G$ is the minimum degree of $G$.
We begin by introducing some tools that will be used throughout the section, and then we give an overview of our algorithm and its analysis.
We first introduce the following data structure from \cite{HKMR25}.
\begin{definition}[Dynamic Cutset (DCS)]
    A \emph{Dynamic Cutset (DCS)} is a dynamic data structure that maintains a forest $F$ and a dynamic graph $G$. 
    It supports the following operations on $F$:
    \begin{itemize}
        \item $\insertF(e)$: Insert $e$ into $F$.
        \item $\deleteF(e)$: Delete $e$ from $F$.
    \end{itemize}
    It also supports the following operations on $G$:
    \begin{itemize}
        \item $\insertG(e)$: Insert $e$ into $G$.
        \item $\deleteG(e)$: Delete $e$ from $G$. If $e$ is in $F$, remove it from $F$.
        \item $\query(v)$: Return an edge $e$ in $G$ that has one endpoint in the component of $F$ containing $v$ and one endpoint elsewhere, or return $\bot$ if no such edge exists.
    \end{itemize}
\end{definition}
\begin{theorem}[Lemma 8 of \cite{HKMR25}]
    There exists a randomized DCS data structure with worst-case update time of $\polylog (n)$ and query time of $\polylog (n)$.
\end{theorem}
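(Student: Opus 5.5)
This theorem is cited from \cite{HKMR25}, so the plan is to reconstruct the standard construction rather than derive it from earlier results in this paper. The approach combines an Euler-tour-tree (or top-tree) representation of the forest $F$ with $\ell_0$-samplers (XOR-based linear sketches) stored at the vertices and aggregated over each tree.

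\textbf{Representation.} Maintain $F$ with balanced-BST Euler tour trees, so that linking, cutting, and finding the root of the component containing $v$ all take $O(\log n)$ worst-case time. Give each vertex $x$ a linear sketch $\sigma(x)$ of its incidence vector in $G$. Encode each edge $\{a,b\}$ with $a<b$ as a unique identifier, added with sign $+1$ at $a$ and $-1$ at $b$. For a vertex set $S$, the sum $\sum_{x\in S}\sigma(x)$ is then a sketch of the vector supported exactly on the edges of $E(S,V\setminus S)$, because internal edges cancel. I would use $O(\log n)$ levels of geometric subsampling, each level storing an edge-id sum, a count, and a fingerprint for verifying that the level has exactly one surviving edge. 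Each node of the Euler tour BST stores the sum of the sketches in its subtree, so a component's aggregate sketch is read off at the root.

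\textbf{Operations.} $\insertG$ and $\deleteG$ update $\sigma$ at both endpoints, which is $O(\polylog n)$ sketch coordinates. We then recompute aggregates along the $O(\log n)$ ancestors in the BST, giving $\polylog(n)$ worst-case time. $\deleteG$ additionally calls $\deleteF$ when $e\in F$. $\insertF$ and $\deleteF$ are link and cut in the Euler tour trees; they touch $O(\log n)$ BST nodes, and each aggregate is recomputed in $\polylog n$ time. For $\query(v)$, locate the BST root of $v$'s tour and decode its aggregate sketch. If some level isolates a single edge and the fingerprint passes, return that edge. If the sketch is zero, return $\bot$.

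\textbf{Main obstacle: worst-case correctness under adaptive usage.} Each query must succeed with high probability, but the decoded edge can influence later $F$-updates, and the sketch randomness is reused across queries. I would handle this as follows. Each query fails with probability only $1/\poly(n)$, by the standard $\ell_0$-sampling analysis using $O(\log n)$ independent repetitions. The sketch functions are fixed against the sequence of $G$-updates, which comes from an oblivious adversary. The delicate part is that $F$ may depend on earlier outputs. I would resolve this by observing that correctness of a single query depends only on the cut vector of the component, and that the fingerprint makes false positives negligible for every fixed vector. Taking a union bound over all polynomially many possible (component, time) pairs would then be too many, so I would first try instead to argue that the output is determined by the cut set up to the success event, which makes the algorithm effectively deterministic given success. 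If that fails, I would fall back to fresh independent sketch copies for each query, kept in a rotating pool so that the worst-case time stays $\polylog(n)$.
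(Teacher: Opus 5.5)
Your construction (Euler-tour trees for $F$ whose BST nodes aggregate linear $\ell_0$-sketches of signed vertex-incidence vectors) is the standard sketch-based cutset structure, as in \cite{KKM13,GKKT15}. The paper gives no proof of its own and simply imports the statement from \cite{HKMR25}, so up to your last paragraph you are reconstructing the expected argument. The gap is in that last paragraph, at the step you yourself call the main obstacle: neither proposed resolution works. Note first that a union bound fails not because there are polynomially many (component, time) pairs. It fails because, once $F$ depends on earlier outputs, the queried set can be any of exponentially many vertex sets.

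\textbf{First idea.} The returned edge is not a function of the cut. It is whichever edge some subsampling level happens to isolate, so it depends on the hash functions. Conditioning on success therefore does not turn the sequence of queried sets into one fixed in advance. Returning a canonical edge instead, e.g.\ the cut edge of minimum identifier, does not rescue this. Under insertions and deletions that needs an $\Omega(n)$-bit summary already for the incidence vector of a single vertex, by a reduction from augmented indexing.

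\textbf{Fallback.} The rotating pool fails on running time. Every copy in the pool must be updated on every $G$- and $F$-operation. A copy whose randomness has influenced an output can only be refreshed by recomputing all its vertex sketches and Euler-tour aggregates, which is $\Theta(m)$ work. Queries may arrive at every step, so with only $\polylog(n)$ work per operation you would need roughly $m/\polylog(n)$ live copies. Updates would then take time roughly linear in $m$.

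What your construction does establish is the guarantee such sketch-based structures actually provide. Every answer is correct with high probability, provided the entire operation sequence (including the $F$-operations) is fixed independently of the sketch randomness, i.e.\ the oblivious setting. For that statement, your per-vector failure bound plus a union bound over the polynomially many operations is a complete proof.

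You can also make the error one-sided by verifying a decoded edge deterministically. Check in a dictionary that it is currently in $G$, and check via find-root that its endpoints lie in different Euler-tour trees, one of them containing $v$. Then the only possible error is a false $\bot$.

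If query outputs are fed back into $F$ or $G$, as in \Cref{claim:retrieve-maximal-packing}, the client must arrange the needed independence, not the data structure. One standard way is Bor\r{u}vka-style rounds that each consult a separate, independent sketch copy, as in the Ahn--Guha--McGregor spanning-forest sketch.
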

\begin{remark}
    There also exists a deterministic DCS data structure which requires an $n^{o(1)}$ overhead in the update and query time, which can be used to derandomize parts of our algorithm.
\end{remark}
We will also use the following easy result.
\begin{claim}
    \label{claim:retrieve-maximal-packing}
    There exists an algorithm that, given a DCS on a graph $G$ with forest $F$ such that $F$ has $r$ connected components, the list of component IDs, and every vertex $v$ knows the ID of the component it belongs to in $F$, constructs a maximal $k$-packing of forests of $G/F$ with $O(rk)$ edges in $\tO(rk)$ time.
\end{claim}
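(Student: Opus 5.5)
We construct the packing greedily, one forest at a time, using the DCS queries to locate edges of $G/F$, in the spirit of the classical Nagamochi--Ibaraki construction of a maximal packing. The contracted graph $G/F$ has $r$ vertices, one per component of $F$, so each forest in the packing has at most $r-1$ edges and the whole packing has $O(rk)$ edges.

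The key primitive is as follows. Suppose we maintain a forest $T_i$ on the contracted vertices, and treat each contracted vertex (component) as a unit. A DCS query on the union of $F$ with (lifts of) $T_i$ returns an edge leaving a given component of $F\cup T_i$. To use this, we temporarily insert the edges of $T_i$ into the DCS forest with $\insertF$; this is legal since $T_i$ is acyclic in $G/F$, so $F\cup T_i$ is a forest in $G$. We also temporarily delete from $G$, via $\deleteG$, the edges already used by $T_1,\dots,T_{i-1}$, so that the forests stay edge-disjoint.

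Building $T_i$ then proceeds like a Borůvka/spanning-forest search. We iterate over the $r$ component IDs. For each one, we repeatedly call $\query$ on a representative vertex. Whenever an edge $e$ is returned, it connects two different components of $F\cup T_i$, so we add $e$ to $T_i$, call $\insertF(e)$, and continue querying. When $\bot$ is returned, the current component is closed in the graph of remaining edges. Each successful query adds an edge, so there are at most $r-1$ of these per forest, plus one failing query per component, giving $O(r)$ queries per forest.

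After finishing $T_i$, we make the following updates:
\begin{itemize}
\item call $\deleteF$ on the edges of $T_i$, restoring the DCS forest to $F$;
\item call $\deleteG$ on the edges of $T_i$;
\item proceed to $T_{i+1}$.
\end{itemize}
At the very end, we reinsert with $\insertG$ all $O(rk)$ removed edges, restoring the DCS to its original state. The total number of DCS operations is $O(rk)$, each costing $\polylog(n)$, for $\tO(rk)$ time overall. Component IDs can be read off from each returned edge's endpoints in $O(1)$ time, using the stored vertex IDs.

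Correctness rests on the fact that $T_i$ is a spanning forest of $G/F$ minus $T_1\cup\dots\cup T_{i-1}$. When the loop ends, no remaining edge leaves any component of $F\cup T_i$. Hence every edge of $E\setminus\bigcup_{j\le i}T_j$ that is not a self-loop in $G/F$ closes a cycle with $T_i$, which is exactly maximality for $G/F$. Self-loops of $G/F$, i.e.\ edges inside an $F$-component, are never returned by the queries, and they trivially close cycles.

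The main obstacle is the bookkeeping. One point is that a component must be re-queried after it merges, because the query semantics refer to the current $F\cup T_i$. Another is ensuring the $\insertF$ calls never create a cycle. Both follow since every returned edge crosses distinct components of the current forest.
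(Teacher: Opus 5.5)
Your proposal is correct and follows essentially the same route as the paper. Each forest is built greedily by repeated DCS queries from each component, merging with $\insertF$; its edges are then removed with $\deleteG$ before the next forest is built, and everything is restored at the end with $\insertG$. The only differences are cosmetic: you call $\deleteF$ explicitly before $\deleteG$ (the paper relies on $\deleteG$ also removing the edge from $F$), and you spell out the maximality argument that the paper leaves as ``easy to see.''
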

\begin{proof}
    The algorithm iteratively constructs a single spanning tree $T$ of $G/F$, deletes all edges of $T$ from $G/F$, and repeats this process $k$ times.
    It is easy to see that the result is a maximal $k$-packing of forests of $G/F$.
    We show that each iteration takes $\tO(r)$ time, and adds $O(r)$ edges to the sparsifier yielding our desired result.

    To construct a spanning tree of $G/F$, start with some $v\in V$.
    Then, call $\query(v)$ to find an edge $e=(u,w)$ and add it to $T$, where $u$ is in the same component as $v$ in $F$ and $w$ is in a different component.
    It is possible to determine the endpoints of $e$ in $T$ by checking which component of $F$ contains $w$, and which component contains $u$.
    This can be done in $O(1)$ time as each vertex knows the ID of the component it belongs to by the theorem statement, and the components of $F$ are the vertices of $G/F$.
    Then, merge the components containing $u,w$ in $F$ by calling $\insertF((u,w))$.
    Repeat this process until $\query(v)$ returns $\bot$, which means that no more vertices exist in the spanning tree containing the component of $v$.
    Then, iterate over the next component in $G/F$ and repeat the process until every component of $G/F$ has been processed.
    At the end of the process, every component in  $T$ has an outgoing edge or it was found to be an isolated vertex in $G/F$.
    
    The total time complexity of this process is $\tO(r)$ as it adds $O(r)$ edges to $T$ and each edge is found in $\polylog (n)$ time.
    To repeat the process $k$ times, delete the edges of $T$ from $G/F$ by calling $\deleteG(e)$ for each edge $e\in T$, which takes $\tO(r)$ time.
    This resets the data structure to its original state (minus the edges of $T$), and allows us to repeat the process for the next iteration.
    After the construction, restore the original state of the DCS by calling $\insertG(e)$ for each edge $e$ that was deleted from $G/F$.
\end{proof}
\subsection{Algorithm Overview}
We are now ready to describe the algorithm of \Cref{lemma:sublinear-edge-connectivity}.
Our algorithm uses the star-contraction procedure \cite{AEGLMN22} to construct a weak NMC sparsifier (see \Cref{definition:weak-nmc}).
Recall that a weak NMC sparsifier of a graph $G$ is a contraction $H$ of $G$, such that for every cut $C\subseteq V$ of value at most $\lambda_G+\tfrac{1}{10}\delta_G$ in $G$, no edge in $E(C,V\setminus C)$ is contracted in $H$ with constant probability.

We will use the $\tau$-star-contraction procedure, introduced by \cite{KK25}, which is a variant of the star-contraction procedure of \cite{AEGLMN22} that is defined for a degree threshold $\tau$.
Sample a set of \emph{center vertices} $R\subseteq V$ by including each vertex $v\in V$ independently with probability $p=O(\log n/\tau)$.
Let $V_{\ge\tau}=\set{v\in V\setminus R \mid d_G(v)\ge \tau}$ be the set of non-center vertices with degree at least $\tau$,
where throughout $d_G(v)$ is the degree of $v$ in $G$.
Then, for every $v\in V_{\ge\tau}$, uniformly sample a vertex $r\in N_G(v)\cap R$, where $N_G(v)$ is the neighborhood of $v$ in $G$, and contract the corresponding edge $(v,r)$, keeping parallel edges, which yields a contracted multigraph $G'$.
The main guarantee of this procedure is that every non-trivial near-minimum cut of $G$, i.e., not composed of a single vertex, with $\mintcut_G(C)\le \lambda_G+\tfrac{1}{10}\delta_G$ is preserved in $G'$ with constant probability.
The following theorem states this formally.
\begin{restatable}[\cite{AEGLMN22,KK25,KK26a}]{theorem}{starcontraction}
    \label{theorem:star-contraction}
    Let $G=(V,E)$ be an unweighted graph on $n$ vertices and let $C$ be a non-trivial cut with $\mintcut_G(C)\le \lambda_G+\tfrac{1}{10}\delta_G$.
    Then, $\tau$-star-contraction with $p = 800\log n/\tau$ yields a contracted graph $G'$ such that if $\tau \le \delta_G$,
    \begin{enumerate}
        \item with probability at least $1-1/n^4$ all vertices in $V\setminus R$ are contracted and $G'$ has at most $\tO(n/\tau)$ vertices, and
        \item no edge in $E(C,V\setminus C)$ is contracted with probability at least $2\cdot 3^{-13}$.
    \end{enumerate}
\end{restatable}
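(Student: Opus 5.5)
Both parts are proved separately; the second part is the main obstacle.

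\emph{Part 1.} For each $v\in V\setminus R$ with $d_G(v)\ge \delta_G\ge\tau$, the number of neighbours in $R$ is $\mathrm{Bin}(d_G(v),p)$ with mean at least $800\log n$. A Chernoff bound gives $\Pr[N_G(v)\cap R=\emptyset]\le e^{-800\log n}$, and a union bound shows every non-center vertex gets contracted into a center. Then $G'$ has at most $|R|$ vertices. Since $\Exp{|R|}{}=pn=O(n\log n/\tau)$, another Chernoff bound gives $|R|=\tO(n/\tau)$, and the total failure probability is at most $1/n^4$.

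\emph{Part 2.} Let $C$ be a non-trivial cut with $\mintcut_G(C)\le\lambda_G+\delta_G/10$. An edge $(v,r)$ of the cut is contracted only if one endpoint $v$ is a non-center that picked a center on the other side. So it suffices to bound, for each $v$, the probability that $v$ picks across the cut. I follow the \cite{AEGLMN22} argument.

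\emph{Step 1 (structure).} For $v\in C$, let $d_{out}(v)=|E(\{v\},V\setminus C)|$. First, $|C|\ge\delta_G/2$: otherwise, in a simple graph every $v\in C$ has at least $d_G(v)-|C|+1$ outside neighbours, which forces $\mintcut_G(C)\ge|C|(\delta_G-|C|+1)>\lambda_G+\delta_G/10$ when $2\le|C|\le\delta_G/2$. The same holds for $V\setminus C$.

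\emph{Step 2 (bounding bad vertices).} Call $v$ \emph{bad} if $d_{out}(v)\ge d_G(v)/3$ (constants to tune). Since $\sum_v d_{out}(v)=\mintcut_G(C)\le 1.1\delta_G$, there are only $O(1)$ bad vertices, at most about $3.3$ per side. A good vertex $v$ picks a uniform center among its neighbours in $R$. Conditioned on $R$ being concentrated on its neighbourhood, it crosses with probability roughly $d_{out}(v)/d_G(v)$. Using $1-x\ge e^{-2x}$, the probability that no good vertex crosses is at least $\exp(-2\sum_v d_{out}(v)/d_G(v))\ge\exp(-2\cdot 1.1)$, a constant.

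\emph{Step 3 (the obstacle).} Bad vertices must not cross either. I would first try to handle them by requiring that each bad vertex either lies in $R$ or picks a neighbour on its own side. A bad vertex still has at least $\delta_G/10$-ish neighbours on its own side, because otherwise moving it across would lower the cut below $\lambda_G$ by the near-minimality slack. So with only $O(1)$ bad vertices this happens with constant probability each, for example $1/3$ each.

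The dependencies between the choice of $R$ and the picks are handled by conditioning on $R$ and using the concentration from Part 1 for neighbourhood sizes within the relevant sides. Multiplying the constants yields a bound like $2\cdot 3^{-13}$. I expect the careful accounting of the bad-vertex case and the constants, and the fact that a vertex's neighbours in $R$ split by side only approximately proportionally (which needs $\tau\le\delta_G$ and $p=800\log n/\tau$ concentration), to be the main obstacle.
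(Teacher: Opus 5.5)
Your Part 1 and your overall frame are fine. Conditioned on $R$, the picks are independent, so the cut survives with probability $\prod_{v\in V\setminus R}(1-c_R(v)/d_R(v))$, where $c_R(v)$ counts the cut-neighbours of $v$ in $R$ and $d_R(v)=|N_G(v)\cap R|$. The gap is in Step 2. You assert that once $R$ is concentrated on $N_G(v)$, a good vertex crosses with probability roughly $c(v)/\deg(v)$. You then bound the product pointwise by $\exp(-2\sum_v c(v)/\deg(v))$. Concentration controls $d_R(v)$, but it controls $c_R(v)$ only when $c(v)=\Omega(\tau)$. A good vertex can have $c(v)=1$. Then, for $\tau\gg\log n$, $c_R(v)$ is $0$ with probability $1-p$ and $1$ otherwise. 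So the conditional crossing probability is either $0$ or about $1/(p\deg(v))$, never about $1/\deg(v)$. Hence $S=\sum_v c_R(v)/d_R(v)$ is genuinely random, and your Step 2 silently replaces it by its mean. The approximately proportional split by side that you name as the main obstacle is simply false for such vertices, and it is not what makes the proof work.

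The missing ingredient is an exact first-moment identity rather than concentration. Given $d_R(v)=k>0$, the set $N_G(v)\cap R$ is a uniform $k$-subset of $N_G(v)$, so $\mathbb{E}[c_R(v)/d_R(v)\mid d_R(v)>0]=c(v)/\deg(v)$. This gives $\mathbb{E}[S]\le\sum_{v\in V}c(v)/\deg(v)\le 2\mintcut_G(C)/\delta_G\le 2.2$. Both sides of the cut contribute, so the bound is $2.2$, not $1.1$. The paper then applies Markov to get $S\le 8$ with constant probability; Jensen on the convex map $x\mapsto e^{-2x}$ would instead rescue your exponential bound. This is combined with a w.h.p.\ bound $\max_v c_R(v)/d_R(v)\le 2/3$, and the product is lower-bounded via \Cref{proposition:probability-optimization}.

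For the max bound, the paper uses exactly the near-minimality computation of your Step 3, but for \emph{every} vertex. From $\lambda_G\le\mintcut_G(C\setminus\{v\})=\mintcut_G(C)+\deg(v)-2c(v)$ one gets $c(v)\le\tfrac{11}{20}\deg(v)$. Then $d_R(v)\ge 720\log n$ plus a two-regime Chernoff bound on $c_R(v)$ gives ratio at most $2/3$. So your good/bad split is unnecessary, and so is Step 1, which is never used. Your stated justification for bad vertices, ``$\delta_G/10$-ish own-side neighbours'', does not by itself give probability $1/3$ per vertex. The correct consequence of the same inequality, $\deg(v)-c(v)\ge\tfrac{9}{20}\deg(v)$, does.
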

We note that previous works \cite{AEGLMN22,KK26a} only prove the theorem for exact minimum cuts, however the same proof extends for any cuts of value $\lambda_G+\epsilon\delta_G$ for a small enough $\epsilon>0$.
For completeness, we include the proof of \Cref{theorem:star-contraction} in \Cref{appendix:star-contraction}.
We will use the implementation of the $\tau$-star-contraction procedure of \cite{KK26a} for the dynamic setting.
The main insight of the implementation is that one can fix the center vertices $R$ at the beginning of the algorithm, and then simply (re)assign vertices to centers as new edges come in.
We use this insight to maintain a DCS on the contracted graph $G/F$ of the $\tau$-star-contraction, where $F$ is the forest of contracted edges.
\begin{claim}
    \label{claim:star-contraction-dcs}
    There exists an algorithm that, given a dynamic graph $G$ on $n$ vertices undergoing edge insertions and deletions, and a degree threshold $\tau$, maintains a DCS on $G$ with $F$ being the forest of contracted edges in the $\tau$-star-contraction of $G$ using worst-case update time of $\poly(\log n)$.
    The algorithm is randomized and succeeds with high probability.
\end{claim}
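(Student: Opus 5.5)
The plan is to fix the random center set $R$ once at initialization, sampling each vertex with probability $p=800\log n/\tau$, and then maintain for every non-center vertex $v$ its set of center neighbors $N_G(v)\cap R$ in a balanced search tree (or an array supporting swap-deletion), so that a uniformly random element can be drawn and the set can be updated in $O(\log n)$ time. Each vertex $v\in V\setminus R$ with $d_G(v)\ge\tau$ stores its current assigned center $r(v)$, and the forest $F$ is exactly $\set{(v,r(v))}$. This is a star forest, since centers are never assigned and non-centers attach only to centers. Alongside, we run the randomized DCS of Lemma 8 of \cite{HKMR25} on $G$ with this forest $F$. Every change of $F$ is then translated into $\insertF/\deleteF$ calls, and every change of $G$ into $\insertG/\deleteG$ calls.

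For an update $(u,w)$, only the endpoints $u$ and $w$ can change their degree or their center neighborhoods, so at most $O(1)$ assignments need revisiting. On insertion, we call $\insertG$ and update the degrees and center-neighbor sets. If an endpoint $v\notin R$ now satisfies $d_G(v)\ge\tau$ and has no assignment, we draw a fresh uniform center $r\in N_G(v)\cap R$ and call $\insertF((v,r))$. On deletion, we call $\deleteG$, which removes the edge from $F$ if it is $(v,r(v))$. If the deleted edge was $v$'s assigned edge, or $d_G(v)$ drops below $\tau$, we unassign $v$ and, if $v$ is still eligible, resample uniformly from the remaining center neighbors and call $\insertF$. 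Each update therefore triggers $O(1)$ DCS operations and $O(\log n)$ bookkeeping, for $\polylog(n)$ worst-case time overall.

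The main obstacle is showing that the maintained assignment has the same distribution as a fresh $\tau$-star-contraction of the current graph $G_t$, so that \Cref{theorem:star-contraction} applies at every time. Against an oblivious adversary, the update sequence is fixed in advance, independently of $R$ and of our samples. I would argue by induction over the updates that, conditioned on $R$, each assigned vertex's center is uniform over its current center neighborhood, independently across vertices. A center is resampled exactly when the previous one leaves that neighborhood. When a new center neighbor is added, the existing assignment stays uniform only if we re-randomize: with probability $1/|N_G(v)\cap R|$ we switch $v$ to the new center, which is a standard reservoir-type step costing $O(1)$ DCS operations.

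High probability is needed only for the event of part (1) of \Cref{theorem:star-contraction}, namely that every eligible vertex has a center neighbor; this holds per time step with probability $1-n^{-4}$. A union bound over polynomially many updates, or over each epoch, finishes the argument, and if some vertex ever lacks a center neighbor we simply leave it unassigned.
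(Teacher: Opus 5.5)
Your proposal is correct, and its skeleton matches the paper's proof. You fix $R$ once, keep one sampled center per non-center vertex so that $F$ is a star forest, and turn each change of a sample into $O(1)$ calls to $\deleteF$ and $\insertF$ on the randomized DCS. The real difference is the sampling mechanism.

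The paper gives each center-neighbor edge $(v,r)$ an independent uniform priority when the edge is inserted. It keeps the neighbors of $v$ in $R$ in a heap and makes the minimum-priority neighbor $v$'s center. The current assignment is then a fixed function of the current neighborhood and of i.i.d.\ priorities. So uniformity over $N_G(v)\cap R$ and independence across vertices hold at every time immediately, with no induction over the update history and no separate distributional argument for insertions versus deletions.

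You use reservoir-style sampling instead: resample when the assigned center leaves the neighborhood, and switch to a newly arrived center with probability $1/|N_G(v)\cap R|$, using the updated size. This needs the inductive invariant you sketch. The switch step is essential: without it the sample is biased toward older center neighbors, so it belongs in the insertion routine of your second paragraph, not only in the later correctness discussion. In exchange, you need only discrete randomness and an indexable set rather than real-valued priorities.

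You also explicitly enforce the threshold $d_G(v)\ge\tau$ from the definition of $\tau$-star-contraction, which the paper's proof silently drops. This is harmless in the paper, because the contraction is only read at time $T^*$ with $\tau^*\le\delta_{G_{T^*}}$, when every vertex is eligible. Still, your version matches the definition.

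One small remark: the ``with high probability'' in the claim comes mainly from the randomized DCS, not only from part (1) of the star-contraction theorem. Its oblivious-adversary guarantee applies because the sequence of DCS operations depends only on the update sequence and the star-contraction coins, which are independent of the DCS's internal randomness.
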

\begin{proof}
    Sample a set of center vertices $R\subseteq V$ by including each vertex $v\in V$ independently with probability $p=O(\log n/\tau)$.
    In addition, initialize a DCS on $G$ with $F$ being the empty forest.
    For each vertex $v\in V\setminus R$, maintain a pointer to the center vertex $r\in R$ that $v$ is currently contracted to, or $\bot$ if $v$ is not contracted to any center vertex.
    Finally, for each $v\in V\setminus R$, maintain a heap of the neighbors of $v$ in $R$ sorted by priority as described later.

    Whenever an edge $(u,v)$ is inserted into $G$, insert it into the DCS by calling $\insertG((u,v))$.
    In addition, if $u\in R$ and $v\notin R$ (or vice versa), insert $u$ into the heap of $v$ (or insert $v$ into the heap of $u$) with a priority drawn uniformly at random from $[0,1]$.
    Then, if the element with minimum priority in the heap of $v$ (or $u$) has changed, call $\deleteF((v,r))$ (or $\deleteF((u,r))$) for the previous center vertex $r$ that $v$ (or $u$) was contracted to, and call $\insertF((v,r'))$ (or $\insertF((u,r'))$) for the new center vertex $r'$ with minimum priority in the heap of $v$ (or $u$).
    The procedure for edge deletion is analogous.

    It is easy to verify that this procedure samples a center vertex for $v$ uniformly at random from its neighborhood in $R$.
    Finally, note that the time to process an edge update is $\polylog(n)$, since it is only to update the heap, and do at most $3$ queries to the DCS data structure.
\end{proof}

Our algorithm maintains a DCS of a $\tau$-star-contraction of the input graph $G$ for $\tau=2^i$ for $i\in [\log n]$ using \Cref{claim:star-contraction-dcs}, and in parallel maintains the minimum degree $\delta_G$ of $G$ using a heap.
The main part of our algorithm is divided into epochs, each of length $c\cdot \lambda_G/50$ updates, for the appropriate constant $c$ in \Cref{theorem:dynamic-mn20}.
We will use the following easy claim regarding the preservation of cuts within each epoch.
\begin{claim}
    \label{claim:cut-preservation-epoch}
    Let $H_i$ be the graph obtained by constructing a maximal $2\lambda_{G_i}$-packing of forests of a $\tau$-star-contraction of $G_i$, and let $H_{i+t}$ be the graph obtained by reflecting each edge update in $G_i$ to $H_i$ for $t$ updates.
    If $C\subseteq V$ is a non-trivial minimum cut of $G_{i+t}$ and $t\le \lambda_{G_i}/20$ then $C$ is a minimum cut of $H_{i+t}$ with constant probability.
\end{claim}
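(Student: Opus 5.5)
The plan is to chain \Cref{lemma:might-as-well-nmc} with \Cref{lemma:might-as-well-ni}. Set $G=G_i$, let $S_i$ be the $\tau$-star-contraction of $G_i$ (a weak NMC sparsifier by \Cref{theorem:star-contraction}, since $\tau\le\delta_{G_i}$ and $G_i$ is simple), and let $H_i$ be a maximal $2\lambda_{G_i}$-packing of forests of $S_i$. Let $I$ and $D$ be the insertions and deletions among the $t$ updates, so $|I|+|D|=t\le \lambda_{G_i}/20\le \delta_{G_i}/20$. Write $S_{i+t}=(S_i\cup I)\setminus D$, noting that $H_{i+t}=(H_i\cup I)\setminus D$, where an update is reflected on the contracted endpoints.

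\textbf{Step 1 (NMC step).} Let $C$ be a non-trivial minimum cut of $G_{i+t}$. By \Cref{lemma:might-as-well-nmc}, with constant probability no edge of $E_{G_{i+t}}(C,V\setminus C)$ is contracted in $S_{i+t}$. Since the contraction is then consistent with $C$, $C$ induces a cut of $S_{i+t}$ with $\mintcut_{S_{i+t}}(C)=\mintcut_{G_{i+t}}(C)=\lambda_{G_{i+t}}$. Every cut of a contraction is a cut of $G_{i+t}$ of equal value, so $\lambda_{S_{i+t}}\ge\lambda_{G_{i+t}}$. Hence $C$ is a minimum cut of $S_{i+t}$ and $\lambda_{S_{i+t}}=\lambda_{G_{i+t}}$.

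\textbf{Step 2 (NI step).} Apply \Cref{lemma:might-as-well-ni} to the graph $S_i$, with packing parameter $c\lambda_{S_i}$ where $c=2\lambda_{G_i}/\lambda_{S_i}$. This step needs care with constants. We have $\lambda_{S_i}\ge\lambda_{G_i}$ since $S_i$ is a contraction. On the event from Step 1, applied with $t=0$, a minimum cut of $G_i$ survives in $S_i$, so $\lambda_{S_i}=\lambda_{G_i}$ and $c=2$. The lemma then requires $t\le \tfrac{c-1}{2}\lambda_{S_i}=\lambda_{G_i}/2$, which holds with room to spare. It gives that every minimum cut of $S_{i+t}=(S_i\cup I)\setminus D$ is a minimum cut of $(H_i\cup I)\setminus D=H_{i+t}$. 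Combined with Step 1, $C$ is a minimum cut of $H_{i+t}$.

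\textbf{Main obstacle.} The delicate point is the probabilistic coupling. Step 1 at time $i+t$ (for $C$) and the identity $\lambda_{S_i}=\lambda_{G_i}$ needed in Step 2 are two separate events. To avoid conditioning on both, Step 2 can be run using only the deterministic inequality $\lambda_{S_i}\ge\lambda_{G_i}$. A maximal $2\lambda_{G_i}$-packing preserves all cuts of $S_i$ of value at most $2\lambda_{G_i}$, so the argument behind \Cref{lemma:might-as-well-ni} still applies, as follows. Any cut of value at most $\lambda_{G_i}+t$ in $S_{i+t}$ has value at most $\lambda_{G_i}+2t<2\lambda_{G_i}$ in $S_i$, so it is preserved exactly in $H_i$ and hence in $H_{i+t}$. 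Every other cut has value at least $2\lambda_{G_i}-t$ in $H_{i+t}$. Since $\lambda_{S_{i+t}}\le\lambda_{G_{i+t}}\le \lambda_{G_i}+t$ on the Step 1 event, minimum cuts of $S_{i+t}$ coincide with those of $H_{i+t}$. The conclusion therefore holds whenever the single constant-probability event of \Cref{lemma:might-as-well-nmc} occurs.
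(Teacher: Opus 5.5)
Your proposal is correct and follows the same route as the paper: it chains \Cref{lemma:might-as-well-nmc} (to pass from $G_{i+t}$ to the updated star-contraction) with \Cref{lemma:might-as-well-ni} (to pass to $H_{i+t}$). Your direct Nagamochi--Ibaraki argument with the fixed threshold $2\lambda_{G_i}$ is more careful than the paper's one-line appeal to $\lambda_{H_i}=\lambda_{G_i}$, since star-contraction destroys trivial cuts and its connectivity can exceed $\lambda_{G_i}$. There is one small slip: a cut $X$ whose $S_{i+t}$-value exceeds $\lambda_{G_i}+t$ need not have $H_{i+t}$-value at least $2\lambda_{G_i}-t$. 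It only has $H_{i+t}$-value at least $\min\{\mintcut_{S_{i+t}}(X),\,2\lambda_{G_i}-t\}>\lambda_{G_i}+t$, which is all your conclusion needs.
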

\begin{proof}
    Denote by $G_i'$ the $\tau$-star-contraction of $G_i$ that is used to construct $H_i$.
    Notice that by \Cref{lemma:might-as-well-nmc} while $t\le \lambda_{G_i}/20 \le \delta_{G_i}/20$, every non-trivial minimum cut of $G_{i+t}$ is a minimum cut of $G_{i+t}'$ with constant probability, where $G_{i+t}'$ is the NMC sparsifier with the same updates applied.
    In addition, by \Cref{lemma:might-as-well-ni}, every non-trivial minimum cut of $G_{i+t}'$ is a minimum cut of $H_{i+t}$ as long as $t\le \lambda_{H_i}/20=\lambda_{G_i}/20$.
    Combining the above, we get the desired result.
\end{proof}
We are now ready to prove \Cref{lemma:sublinear-edge-connectivity}.
\begin{proof}[Proof of \Cref{lemma:sublinear-edge-connectivity}]
    If the minimum cut of the graph has value at most $n^{o(1)}$, use the algorithm of \cite{ElHL26} to maintain the edge connectivity using worst-case update time of $n^{o(1)}$.
    Therefore, we can assume that the minimum cut of the graph has value at least $n^{o(1)}$, and thus $\delta_G\ge n^{o(1)}$.

    We now describe an algorithm that finds the edge connectivity with constant probability, to obtain a high probability guarantee we repeat the algorithm $O(\log n)$ times in parallel and return the minimum value returned by the repetitions.
    Maintain a DCS of a $\tau$-star-contraction of the input graph $G$ for $\tau=2^i$ for $i\in [\log n]$ using \Cref{claim:star-contraction-dcs}.
    In parallel, maintain the minimum degree $\delta_G$ of $G$ using a heap.
    The algorithm is divided into epochs, each of length $c\cdot \lambda_G/50$ (for the $\lambda_G$ at the beginning of the epoch) updates, for the constant $c$ in \Cref{theorem:dynamic-mn20}.    
    Denote the time at the beginning of some epoch by $T$ and the graph at the beginning of the epoch by $G_T$, hence the length of the epoch is $c\cdot \lambda_{G_T}/50$ updates, which is $\omega(1)$ as $\lambda_{G_T}\ge n^{o(1)}$ by our assumption.

    In addition, let $T' \in [T- c\cdot \lambda_{G_T}/200, T]$, where $c<1$ is the small constant in \Cref{theorem:dynamic-mn20}.
    Notice that $\lambda_{G_T'}\in (1\pm 1/200) \lambda_{G_T}$ since there are at most $c\cdot \lambda_{G_T}/200$ updates between $T'$ and $T$. 

    Assume that at the beginning of each epoch, we have the data structure of \Cref{theorem:dynamic-mn20} constructed on a maximal $2\lambda_{G_{T'}}$-packing of forests of a $\tau$-star-contraction of $G_{T'}$, for $\tau\ge \delta_{G_{T'}}/2$.
    Observe that the age of the data structure at the beginning of the epoch is at most $c\cdot \lambda_{G_{T}}/200\le c\lambda_{G_T'}/150$.
    Since the length of the epoch is $c\cdot \lambda_{G_T}/50\le c\cdot \lambda_{G_T'}/40$, the data structure remains valid for the entire epoch by \Cref{theorem:dynamic-mn20}.
    In addition, by \Cref{claim:cut-preservation-epoch}, while we are in the current epoch, every non-trivial minimum cut of $G_{T'+t}$ is a minimum cut of $H_{T'+t}$ with constant probability.
    Hence, to answer edge connectivity queries for $G_{T'+t}$, return the minimum between the value returned by \Cref{theorem:dynamic-mn20} and the minimum degree of $G_{T'+t}$.

    We now describe the data structure construction at the end of each epoch.
    Denote the time when there are only $c\cdot \lambda_{G_T}/400$ updates left in the current epoch by $T^*$.
    Let $\tau^* = 2^{\lfloor \log_2 \delta_{G_{T^*}} \rfloor}$, and notice that $\tau^*\ge \delta_{G_{T^*}}/2$.
    Construct a maximal $2\lambda_{G_{T^*}}$-packing of forests of a $\tau^*$-star-contraction of $G_{T^*}$ using \Cref{claim:retrieve-maximal-packing}, and build a data structure of \Cref{theorem:dynamic-mn20} on it in the background.
    Notice that the total construction time is $\tO(n\lambda_{G_{T^*}}/\delta_{G_{T^*}})$.
    Split the construction into $c\cdot \lambda_{G_{T^*}}/800$ updates, such that after each update the algorithm spends $\tO(n/\delta_{G_{T^*}})$ time on the construction.
    During the last $c\cdot \lambda_{G_{T^*}}/800$ updates apply all the updates that were performed in $G$ since time $T^*$ to the data structure.
    Hence, the worst-case update time of the algorithm is $\tO(n/\delta_{G})$.
    Finally, note that during the construction of the maximal $2\lambda_{G_{T^*}}$-packing of forests, the DCS of the $\tau^*$-star-contraction of $G$ must stay static.
    To do so, we queue all edge updates, and apply them to the DCS during the construction of the \Cref{theorem:dynamic-mn20} data structure.

    Notice that at the beginning of the next epoch, denoted by $T''=T+c\cdot \lambda_{G_T}/50$, there exists a data structure of \Cref{theorem:dynamic-mn20} on a maximal $2\lambda_{G_{T^*}}$-packing of forests of a $\tau^*$-star-contraction of $G_{T^*}$.
    Furthermore, the age of the data structure at the beginning of the next epoch is at most $c\cdot \lambda_{G_{T^*}}/400 \le c\cdot \lambda_{G_{T''}}/200$ since $\lambda_{G_{T^*}}\in (1\pm 1/200) \lambda_{G_{T''}}$.
    Hence, it satisfies the assumption for the next epoch, and thus we can apply the same procedure for each epoch.

    Finally, the query time is $\tO(n/\delta_G)$ by using the data structure of \Cref{theorem:dynamic-mn20} on a graph with $\tO(n/\delta_G)$ vertices, and checking the minimum degree of $G$ using the heap.
\end{proof}

\subsection*{Acknowledgments}
In the writing of this paper, the authors used GitHub Copilot for in-line writing suggestions. 
In addition, the authors used Claude for feedback on several sections, particularly on the clarity, and correctness of both the English prose and the mathematical content. 
All proof ideas and technical content were developed by the authors, who maintain full responsibility for the content of the paper.

\bibliographystyle{alpha}
\bibliography{refs}

\newcommand{\etalchar}[1]{$^{#1}$}
\begin{thebibliography}{HKMR25}

\bibitem[AD21]{AD21}
Sepehr Assadi and Aditi Dudeja.
\newblock A simple semi-streaming algorithm for global minimum cuts.
\newblock In {\em 4th Symposium on Simplicity in Algorithms, {SOSA} 2021},
  pages 172--180. {SIAM}, 2021.

\bibitem[ADK{\etalchar{+}}16]{ADK+16}
Ittai Abraham, David Durfee, Ioannis Koutis, Sebastian Krinninger, and Richard
  Peng.
\newblock On fully dynamic graph sparsifiers.
\newblock In {\em {IEEE} 57th Annual Symposium on Foundations of Computer
  Science, {FOCS} 2016}, pages 335--344. {IEEE} Computer Society, 2016.

\bibitem[AEG{\etalchar{+}}22]{AEGLMN22}
Simon Apers, Yuval Efron, Pawel Gawrychowski, Troy Lee, Sagnik Mukhopadhyay,
  and Danupon Nanongkai.
\newblock Cut query algorithms with star contraction.
\newblock In {\em 63rd {IEEE} Annual Symposium on Foundations of Computer
  Science, {FOCS} 2022}, pages 507--518. {IEEE}, 2022.

\bibitem[AL21]{AL21}
Simon Apers and Troy Lee.
\newblock Quantum complexity of minimum cut.
\newblock In {\em 36th Computational Complexity Conference, {CCC} 2021}, volume
  200 of {\em LIPIcs}, pages 28:1--28:33. Schloss Dagstuhl - Leibniz-Zentrum
  f{\"{u}}r Informatik, 2021.

\bibitem[CGL{\etalchar{+}}20]{CGLNPS20}
Julia Chuzhoy, Yu~Gao, Jason Li, Danupon Nanongkai, Richard Peng, and
  Thatchaphol Saranurak.
\newblock A deterministic algorithm for balanced cut with applications to
  dynamic connectivity, flows, and beyond.
\newblock In {\em 61st {IEEE} Annual Symposium on Foundations of Computer
  Science, {FOCS} 2020}, pages 1158--1167. {IEEE}, 2020.

\bibitem[Cha88]{Chazelle88}
Bernard Chazelle.
\newblock A functional approach to data structures and its use in
  multidimensional searching.
\newblock {\em {SIAM} J. Comput.}, 17(3):427--462, 1988.

\bibitem[CQX19]{CQX19}
Chandra Chekuri, Kent Quanrud, and Chao Xu.
\newblock {LP} relaxation and tree packing for minimum k-cuts.
\newblock In {\em 2nd Symposium on Simplicity in Algorithms, {SOSA} 2019},
  OASIcs, pages 7:1--7:18. Schloss Dagstuhl - Leibniz-Zentrum f{\"{u}}r
  Informatik, 2019.

\bibitem[DEMN21]{DEM+21}
Michal Dory, Yuval Efron, Sagnik Mukhopadhyay, and Danupon Nanongkai.
\newblock Distributed weighted min-cut in nearly-optimal time.
\newblock In {\em {STOC} '21: 53rd Annual {ACM} {SIGACT} Symposium on Theory of
  Computing}, pages 1144--1153. {ACM}, 2021.

\bibitem[DHNS19]{DHN+19}
Mohit Daga, Monika Henzinger, Danupon Nanongkai, and Thatchaphol Saranurak.
\newblock Distributed edge connectivity in sublinear time.
\newblock In {\em Proceedings of the 51st Annual {ACM} {SIGACT} Symposium on
  Theory of Computing, {STOC} 2019}, pages 343--354. {ACM}, 2019.

\bibitem[dVC25]{dVC25}
Tijn de~Vos and Aleksander B.~G. Christiansen.
\newblock Tree-packing revisited: Faster fully dynamic min-cut and arboricity.
\newblock In {\em Proceedings of the 2025 Annual {ACM-SIAM} Symposium on
  Discrete Algorithms, {SODA} 2025}, pages 700--749. {SIAM}, 2025.

\bibitem[EGIN97]{EGIN97}
David Eppstein, Zvi Galil, Giuseppe~F. Italiano, and Amnon Nissenzweig.
\newblock Sparsification - a technique for speeding up dynamic graph
  algorithms.
\newblock {\em J. {ACM}}, 44(5):669--696, 1997.

\bibitem[EHL25]{ElHL25}
Antoine El{-}Hayek, Monika Henzinger, and Jason Li.
\newblock Fully dynamic approximate minimum cut in subpolynomial time per
  operation.
\newblock In {\em Proceedings of the 2025 Annual {ACM-SIAM} Symposium on
  Discrete Algorithms, {SODA} 2025}, pages 750--784. {SIAM}, 2025.

\bibitem[EHL26]{ElHL26}
Antoine El{-}Hayek, Monika Henzinger, and Jason Li.
\newblock Deterministic and exact fully-dynamic minimum cut of
  superpolylogarithmic size in subpolynomial time.
\newblock In {\em Proceedings of the 2026 Annual {ACM-SIAM} Symposium on
  Discrete Algorithms, {SODA} 2026}, pages 613--663. {SIAM}, 2026.

\bibitem[Fre85]{Fre85}
Greg~N. Frederickson.
\newblock Data structures for on-line updating of minimum spanning trees, with
  applications.
\newblock {\em {SIAM} J. Comput.}, 14(4):781--798, 1985.

\bibitem[GG18]{GG18}
Barbara Geissmann and Lukas Gianinazzi.
\newblock Parallel minimum cuts in near-linear work and low depth.
\newblock In {\em Proceedings of the 30th on Symposium on Parallelism in
  Algorithms and Architectures, {SPAA} 2018}, pages 1--11. {ACM}, 2018.

\bibitem[GHN{\etalchar{+}}23]{GHN+23}
Gramoz Goranci, Monika Henzinger, Danupon Nanongkai, Thatchaphol Saranurak,
  Mikkel Thorup, and Christian Wulff{-}Nilsen.
\newblock Fully dynamic exact edge connectivity in sublinear time.
\newblock In Nikhil Bansal and Viswanath Nagarajan, editors, {\em Proceedings
  of the 2023 {ACM-SIAM} Symposium on Discrete Algorithms, {SODA} 2023}, pages
  70--86. {SIAM}, 2023.

\bibitem[GI91]{GI91}
Zvi Galil and Giuseppe~F. Italiano.
\newblock Fully dynamic algorithms for edge-connectivity problems (extended
  abstract).
\newblock In {\em Proceedings of the 23rd Annual {ACM} Symposium on Theory of
  Computing, {STOC} 1991}, pages 317--327. {ACM}, 1991.

\bibitem[GK13]{GK13}
Mohsen Ghaffari and Fabian Kuhn.
\newblock Distributed minimum cut approximation.
\newblock In {\em Distributed Computing - 27th International Symposium, {DISC}
  2013}, volume 8205 of {\em Lecture Notes in Computer Science}, pages 1--15.
  Springer, 2013.

\bibitem[GKKT15]{GKKT15}
David Gibb, Bruce~M. Kapron, Valerie King, and Nolan Thorn.
\newblock Dynamic graph connectivity with improved worst case update time and
  sublinear space.
\newblock {\em CoRR}, abs/1509.06464, 2015.

\bibitem[GMW20]{GMW20}
Pawel Gawrychowski, Shay Mozes, and Oren Weimann.
\newblock Minimum cut in o(m log{\({^2}\)} n) time.
\newblock In {\em 47th International Colloquium on Automata, Languages, and
  Programming, {ICALP} 2020}, volume 168 of {\em LIPIcs}, pages 57:1--57:15.
  Schloss Dagstuhl - Leibniz-Zentrum f{"{u}}r Informatik, 2020.

\bibitem[GNT20]{GNT20}
Mohsen Ghaffari, Krzysztof Nowicki, and Mikkel Thorup.
\newblock Faster algorithms for edge connectivity via random 2-out
  contractions.
\newblock In {\em Proceedings of the 2020 {ACM-SIAM} Symposium on Discrete
  Algorithms, {SODA} 2020}, pages 1260--1279. {SIAM}, 2020.

\bibitem[HdLT01]{HdLT01}
Jacob Holm, Kristian de~Lichtenberg, and Mikkel Thorup.
\newblock Poly-logarithmic deterministic fully-dynamic algorithms for
  connectivity, minimum spanning tree, 2-edge, and biconnectivity.
\newblock {\em J. {ACM}}, 48(4):723--760, 2001.

\bibitem[HKMR25]{HKMR25}
Monika Henzinger, Evangelos Kosinas, Robin M{\"{u}}nk, and Harald R{\"{a}}cke.
\newblock Efficient contractions of dynamic graphs - with applications.
\newblock In {\em 33rd Annual European Symposium on Algorithms, {ESA} 2025},
  volume 351 of {\em LIPIcs}, pages 36:1--36:14. Schloss Dagstuhl -
  Leibniz-Zentrum f{\"{u}}r Informatik, 2025.

\bibitem[HLRW24]{HLRW24}
Monika Henzinger, Jason Li, Satish Rao, and Di~Wang.
\newblock Deterministic near-linear time minimum cut in weighted graphs.
\newblock In {\em Proceedings of the 2024 {ACM-SIAM} Symposium on Discrete
  Algorithms, {SODA} 2024}, pages 3089--3139. {SIAM}, 2024.

\bibitem[HRW17]{HRW17}
Monika Henzinger, Satish Rao, and Di~Wang.
\newblock Local flow partitioning for faster edge connectivity.
\newblock In {\em Proceedings of the Twenty-Eighth Annual {ACM-SIAM} Symposium
  on Discrete Algorithms, {SODA} 2017}, pages 1919--1938. {SIAM}, 2017.

\bibitem[JST24]{JST24}
Wenyu Jin, Xiaorui Sun, and Mikkel Thorup.
\newblock Fully dynamic min-cut of superconstant size in subpolynomial time.
\newblock In {\em Proceedings of the 2024 {ACM-SIAM} Symposium on Discrete
  Algorithms, {SODA} 2024}, pages 2999--3026. {SIAM}, 2024.

\bibitem[Kar00]{Kar00}
David~R. Karger.
\newblock Minimum cuts in near-linear time.
\newblock {\em J. {ACM}}, 47(1):46--76, 2000.

\bibitem[Ken26]{Kenneth-Mordoch26}
Yotam Kenneth{-}Mordoch.
\newblock Faster pseudo-deterministic minimum cut.
\newblock {\em CoRR}, abs/2602.14550, 2026.

\bibitem[KK25]{KK25}
Yotam Kenneth{-}Mordoch and Robert Krauthgamer.
\newblock Cut-query algorithms with few rounds.
\newblock In {\em 33rd Annual European Symposium on Algorithms, {ESA} 2025},
  LIPIcs, pages 100:1--100:14. Schloss Dagstuhl - Leibniz-Zentrum f{\"{u}}r
  Informatik, 2025.

\bibitem[KK26a]{KK26b}
Yotam Kenneth{-}Mordoch and Robert Krauthgamer.
\newblock Faster all-pairs minimum cut: Bypassing exact max-flow.
\newblock In {\em Proceedings of the 58th Annual {ACM} Symposium on Theory of
  Computing, {STOC} 2026}, pages 1254--1265. {ACM}, 2026.

\bibitem[KK26b]{KK26a}
Yotam Kenneth{-}Mordoch and Robert Krauthgamer.
\newblock Simple algorithms for fully dynamic edge connectivity.
\newblock In {\em 2026 Symposium on Simplicity in Algorithms, {SOSA} 2026},
  pages 394--403. {SIAM}, 2026.

\bibitem[KKM13]{KKM13}
Bruce~M. Kapron, Valerie King, and Ben Mountjoy.
\newblock Dynamic graph connectivity in polylogarithmic worst case time.
\newblock In {\em Proceedings of the Twenty-Fourth Annual {ACM-SIAM} Symposium
  on Discrete Algorithms, {SODA} 2013}, pages 1131--1142. {SIAM}, 2013.

\bibitem[KS96]{KS96}
David~R. Karger and Clifford Stein.
\newblock A new approach to the minimum cut problem.
\newblock {\em J. {ACM}}, 43(4):601--640, 1996.

\bibitem[KT19]{KT19}
Ken{-}ichi Kawarabayashi and Mikkel Thorup.
\newblock Deterministic edge connectivity in near-linear time.
\newblock {\em J. {ACM}}, 66(1):4:1--4:50, 2019.

\bibitem[MN20]{MN20}
Sagnik Mukhopadhyay and Danupon Nanongkai.
\newblock Weighted min-cut: sequential, cut-query, and streaming algorithms.
\newblock In {\em Proceedings of the 52nd Annual {ACM} {SIGACT} Symposium on
  Theory of Computing, {STOC} 2020}, pages 496--509. {ACM}, 2020.

\bibitem[NI92]{NI92}
Hiroshi Nagamochi and Toshihide Ibaraki.
\newblock A linear-time algorithm for finding a sparse k-connected spanning
  subgraph of a k-connected graph.
\newblock {\em Algorithmica}, 7(5{\&}6):583--596, 1992.

\bibitem[PST91]{PST91}
Serge~A. Plotkin, David~B. Shmoys, and {\'{E}}va Tardos.
\newblock Fast approximation algorithms for fractional packing and covering
  problems.
\newblock In {\em 32nd Annual Symposium on Foundations of Computer Science, San
  Juan, Puerto Rico}, pages 495--504. {IEEE} Computer Society, 1991.

\bibitem[RSW18]{RSW18}
Aviad Rubinstein, Tselil Schramm, and S.~Matthew Weinberg.
\newblock Computing exact minimum cuts without knowing the graph.
\newblock In {\em 9th Innovations in Theoretical Computer Science Conference,
  {ITCS} 2018}, volume~94 of {\em LIPIcs}, pages 39:1--39:16. Schloss Dagstuhl
  - Leibniz-Zentrum f{\"{u}}r Informatik, 2018.

\bibitem[Sar21]{Sar21}
Thatchaphol Saranurak.
\newblock A simple deterministic algorithm for edge connectivity.
\newblock In Hung~Viet Le and Valerie King, editors, {\em 4th Symposium on
  Simplicity in Algorithms, {SOSA} 2021}, pages 80--85. {SIAM}, 2021.

\bibitem[ST83]{ST83}
Daniel~Dominic Sleator and Robert~Endre Tarjan.
\newblock A data structure for dynamic trees.
\newblock {\em J. Comput. Syst. Sci.}, 26(3):362--391, 1983.

\bibitem[SW97]{SW97}
Mechthild Stoer and Frank Wagner.
\newblock A simple min-cut algorithm.
\newblock {\em J. {ACM}}, 44(4):585--591, 1997.

\bibitem[Tho07]{Tho07}
Mikkel Thorup.
\newblock Fully-dynamic min-cut.
\newblock {\em Comb.}, 27(1):91--127, 2007.

\bibitem[TK00]{TK00}
Mikkel Thorup and David~R. Karger.
\newblock Dynamic graph algorithms with applications.
\newblock In {\em Algorithm Theory - {SWAT} 2000, 7th Scandinavian Workshop on
  Algorithm Theory}, volume 1851 of {\em Lecture Notes in Computer Science},
  pages 1--9. Springer, 2000.

\end{thebibliography}
\appendix
\section{Star Contraction Proof}
\label{appendix:star-contraction}
In this section we prove a strengthened version of the $\tau$-star contraction of \cite{AEGLMN22,KK25} that preserves every non-trivial cut with value at most $\mintcut_G(C)\le \lambda_G+(1/10)\delta_G$.
We begin with restating the theorem for convenience.
\starcontraction*
The proof requires the following useful proposition from \cite{AEGLMN22}, and the following version of the Chernoff bound.
\begin{proposition}[Proposition 4.1 in \cite{AEGLMN22}]
    \label{proposition:probability-optimization}
    Let $n$ be a positive integer, $a \in [0,1)$, and $b\ge 1$. 
    Define
\begin{align*}
    F(a,b) 
    = 
    \min_{x\in \R^n}
    \quad
     &\prod_{i\in [n]} (1-x_i)
    \\
     \text{subject to}
    \quad
    &\sum_{i\in [n]} x_i = b,
    \\
    & \forall i\in [n]: 0\le x_i \le a.
\end{align*}
\end{proposition}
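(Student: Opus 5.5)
The statement as displayed only defines $F(a,b)$. I expect the proposition (as in \cite{AEGLMN22}) to go on to give the value of $F(a,b)$, or a lower bound on it. The natural candidate is
\[
F(a,b)=(1-a)^{\lfloor b/a\rfloor}\bigl(1-(b-a\lfloor b/a\rfloor)\bigr)\ \ge\ (1-a)^{\lceil b/a\rceil},
\]
assuming the program is feasible, i.e.\ $na\ge b$. The plan is to show that the minimizer puts as many coordinates as possible at the cap $a$ and puts the remainder into a single coordinate. This works directly for whichever of these forms the proposition states.

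\textbf{Step 1 (reduce to concave minimization).} Restrict to the feasible set. Since $a<1$, every factor $1-x_i$ lies in $[1-a,1]$ and is strictly positive. So minimizing the product is equivalent to minimizing $\Phi(x)=\sum_{i\in[n]}\log(1-x_i)$. Each summand $x\mapsto\log(1-x)$ is concave on $[0,a]$, so $\Phi$ is concave on the feasible polytope $P=\{x: \sum_i x_i=b,\ 0\le x_i\le a\}$. A concave function on a compact polytope attains its minimum at a vertex.

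\textbf{Step 2 (vertex structure).} Characterize the vertices of $P$. A vertex is determined by $n$ linearly independent tight constraints. One of them is the equality $\sum_i x_i=b$, so at least $n-1$ coordinates must be at a bound, $0$ or $a$. Hence a vertex has $k$ coordinates equal to $a$, at most one coordinate in $(0,a)$, and the rest equal to $0$. The sum constraint then forces $k=\lfloor b/a\rfloor$ and the free coordinate to be $r=b-a\lfloor b/a\rfloor\in[0,a)$. All such vertices are permutations of each other, so they all have the same objective value. This gives the exact formula for $F(a,b)$ above.

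As a cross-check, a direct exchange argument gives the same conclusion. Take two coordinates $0<x_i\le x_j<a$ with $x_i+x_j=s$ fixed. The product $(1-x_i)(1-x_j)=1-s+x_ix_j$ decreases as the pair is pushed apart, so repeatedly spreading pairs reaches the same extreme form.

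\textbf{Step 3 (the clean lower bound).} Since $r\le a$, we have $1-r\ge 1-a$, and therefore $F(a,b)\ge (1-a)^{\lfloor b/a\rfloor+1}\ge(1-a)^{b/a+1}$. If one wants an exponential form, $(1-a)^{1/a}\ge e^{-1}(1-a)$ can be used where convenient.

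\textbf{Main obstacle.} I expect the main difficulty to be only bookkeeping: the boundary cases $r=0$ and $b>na$ (the infeasible case, which should be excluded by assumption), and matching the exact form stated in the proposition. The concavity and vertex argument itself is standard.
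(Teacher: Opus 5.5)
Your proof is correct. Step 1 is sound: on the feasible set every factor lies in $[1-a,1]\subset(0,1]$, so taking logarithms gives a concave objective on a nonempty compact polytope once $na\ge b$. The rank count in Step 2 correctly forces at least $n-1$ coordinates onto a bound. Your exchange identity $(1-x_i)(1-x_j)=1-s+x_ix_j$ independently confirms that a minimizer has at most one coordinate strictly inside $(0,a)$.

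There is no proof in this paper to compare against. The proposition is quoted from \cite{AEGLMN22}, and the displayed statement omits its conclusion. What the paper actually uses, in the proof of \Cref{theorem:star-contraction} with $a=2/3$ and $b=8$, is $F(a,b)\ge(1-a)^{\lceil b/a\rceil}$, which yields $3^{-12}$.

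\textbf{The floor bound is too weak in the case the paper needs.} Your Step 3 records only $(1-a)^{\lfloor b/a\rfloor+1}$. This loses a factor $1-a$ exactly when $b/a\in\mathbb{Z}$, and that is the case in the application ($b/a=12$), where it would give $3^{-13}$ instead of $3^{-12}$. The ceiling form in your opening display does follow from your exact value, but only after splitting on $r$:
\begin{itemize}
\item if $r=0$, the value is exactly $(1-a)^{b/a}=(1-a)^{\lceil b/a\rceil}$;
\item if $r>0$, then $\lceil b/a\rceil=\lfloor b/a\rfloor+1$ and $1-r\ge 1-a$.
\end{itemize}

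\textbf{The application has an inequality, not an equality.} The paper only knows $\sum_i x_i\le 8$. The product is nonincreasing in each coordinate, so when $na\ge b$ you can raise coordinates until the total is $b$ and then apply $F(a,b)$. When $na<b$, the product is at least $(1-a)^n\ge(1-a)^{\lceil b/a\rceil}$. So the inequality form covers the $\le$ constraint as well.

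\textbf{Minor hypotheses.} You implicitly need $a>0$ for $b/a$ to be defined; for $a=0$ and $b\ge 1$ the program is infeasible. The hypothesis $b\ge 1$ plays no role in your argument.
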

\begin{theorem}\label[theorem]{theorem:chernoff}
  Let $X_1,\ldots,X_m\in [0,a]$ be independent random variables.
    For any $\delta \in [0,1]$ and $\mu \geq \mathbb{E}\left[\sum_{i=1}^{m}X_i\right]$, we have
        \begin{align}
            \nonumber
            \mathbb{P}\left[
                \left|
                    \sum_{i=1}^{m}X_i - \mathbb{E}\left[\sum_{i=1}^{m}X_i\right]
                \right|
                \geq \delta \mu
            \right]
            \leq
            2\exp
            \left(
                -\frac{\delta^2\mu}{3a}
            \right)
            .
        \end{align}
\end{theorem}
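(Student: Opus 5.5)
The plan is to reduce to the textbook multiplicative Chernoff bound for independent $[0,1]$-valued variables whose total mean is known \emph{exactly}. There are two mismatches with the statement: the range is $[0,a]$ rather than $[0,1]$, and $\mu$ is only an upper bound on the mean. I will remove both by rescaling and padding.

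\emph{Step 1 (rescaling).} Set $Y_i = X_i/a \in [0,1]$ and $\mu' = \mu/a$. Write $E = \mathbb{E}[\sum_i Y_i] \le \mu'$. The event in the statement is then exactly $|\sum_i Y_i - E| \ge \delta\mu'$. The target bound becomes $2\exp(-\delta^2\mu'/3)$.

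\emph{Step 2 (padding the mean up to $\mu'$).} Set $r=\lceil \mu'-E\rceil$ and add deterministic variables $Z_1,\dots,Z_r$, each equal to $(\mu'-E)/r\in[0,1]$; if $\mu'=E$, add none.
\begin{itemize}
\item Constants are independent of everything and lie in $[0,1]$.
\item The padded sum $S' = \sum_i Y_i + \sum_j Z_j$ has mean exactly $\mu'$.
\item $S' - \mu' = \sum_i Y_i - E$.
\end{itemize}
Hence the event becomes $|S'-\mu'| \ge \delta\mu'$, which is exactly the form the standard bound handles.

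\emph{Step 3 (the two tails).} For independent $[0,1]$ variables with mean exactly $\mu'$, use the usual moment-generating-function argument. Convexity gives $\mathbb{E}[e^{\lambda W}] \le \exp((e^\lambda-1)\mathbb{E}W)$ for $W\in[0,1]$. Markov's inequality with the optimal $\lambda=\ln(1+\delta)$ then yields the upper tail:
\[
\Pr[S'\ge(1+\delta)\mu']\le\left(\frac{e^\delta}{(1+\delta)^{1+\delta}}\right)^{\mu'}\le e^{-\delta^2\mu'/3}\quad\text{for }\delta\in[0,1].
\]
The symmetric argument with $\lambda<0$ gives $\Pr[S'\le(1-\delta)\mu']\le e^{-\delta^2\mu'/2}\le e^{-\delta^2\mu'/3}$. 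A union bound over the two tails gives $2\exp(-\delta^2\mu/(3a))$.

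\emph{Main obstacle.} The only real subtlety is Step 2. Applied naively with mean $E$, the lower tail is fine, since the deviation $\delta\mu\ge\delta E$. The upper tail, however, needs the deviation measured relative to $\mu$ rather than $E$. The padding trick handles both tails uniformly, and all remaining steps are routine calculations.
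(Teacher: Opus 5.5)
Your proof is correct. The paper gives no proof of this statement; it only quotes it in the appendix as a standard form of the Chernoff bound, so there is no route of the paper's to compare yours against.

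Your argument is the textbook derivation: rescaling to $[0,1]$, then the convexity bound on the moment-generating function, then Markov's inequality. The padding step correctly handles the one nonstandard feature of the statement, namely that the event is centered at the true mean $E$ while the deviation is measured by the upper bound $\mu'$. For the upper tail this does not follow from the usual ``$\mu\ge$ mean'' form, because $\{S\ge E+\delta\mu'\}\supseteq\{S\ge(1+\delta)\mu'\}$. Your shift $S'=S+(\mu'-E)$ turns the event exactly into $\{|S'-\mu'|\ge\delta\mu'\}$.

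One small detail: at the endpoint $\delta=1$ of the lower tail, the choice $\lambda=\ln(1-\delta)$ is undefined. Either let $\lambda\to-\infty$, which gives $\Pr[S'\le 0]\le e^{-\mu'}$, or pass to the limit $\delta\uparrow 1$ by continuity of probability.
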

\begin{proof}[Proof of \Cref{theorem:star-contraction}]
    The proof follows the same lines as  \cite{AEGLMN22,KK25,KK26a}, but we include it here for completeness.
    Throughout the proof we assume that $\delta_G \ge \tau$.
    Fix some cut $C \subseteq V$ of $G$, such that $n-1>|C|>1$, and let $\C\subseteq E$ be the set of edges crossing $C$.
    In addition, let $c(v)$ be the number of edges in $\C$ incident to $v$.
    We begin by arguing the correctness of the algorithm.
    Since the algorithm only performs edge contractions, it is clear that it can only increase the edge connectivity of the graph.
    Therefore, it remains to show that it does not contract any edge in $\C$ with constant probability.

    Observe that the probability that a contraction of a vertex $v\in H\setminus R$ hits an edge in $\C$ is equal to $c_R(v)/\deg_R(v)$, where $c_R(v) = |\C \cap E(v,R)|$, $d_R(v)=|E(v,R)|$ and recalling $H=\set{v\in V \mid \deg(v)\ge \tau}$.
    Therefore,
    \begin{equation*}
        \Probability{\text{$\C$ is not contracted} \mid R}
        = \prod_{v\in H\setminus R} \left(1-\frac{c_R(v)}{\deg_R(v)}\right)
        .
    \end{equation*}
    We say that a set $R$ is \emph{good} if $\sum_{v\in H\setminus R} \frac{c_R(v)}{d_R(v)} \le 8$ and $\max_{v\in H\setminus R} \frac{c_R(v)}{d_R(v)} \le 2/3$.
    Notice that if $R$ is good then the probability that $\C$ is not contracted is bounded by,
    \begin{align*}
        \Probability{\text{$\C$ is not contracted}}
        &\ge \Probability{\text{$\C$ is not contracted} \mid R \text{ is good}} \cdot \Probability{R \text{ is good}}
        \\
        &\ge \left( 1-\frac{2}{3} \right)^{\lceil 8/(2/3)\rceil} 
        \cdot \Probability{R \text{ is good}}
        = 3^{-12} \cdot \Probability{R \text{ is good}},
    \end{align*}
    where the first inequality is from the law of total probability and the second is from \Cref{proposition:probability-optimization}.
    Therefore, to prove the correctness guarantee of the theorem it suffices to show that $\Probability{R \text{ is good}} \ge 1/2$.

    Throughout we will use the following claim.
    \begin{claim}[Proposition 4.5 in \cite{AEGLMN22}]
        \label{claim:expectation-of-ratio}
    Let $G=(V,E)$ be a simple $n$-vertex graph and let $C\subseteq V$ be some cut.
    Choose a set $R$ by putting each vertex of $V$ into $R$ independently with some probability $p$.
    Then, for any $v\in V$,
    \begin{equation*}
        \Exp{c_R(v)/d_R(v) \mid d_R(v) >0}{} = \frac{c(v)}{\deg(v)}
        ,
    \end{equation*}
    where the expectation is taken over the choice of $R$.
    \end{claim}
    We first bound the probability that $\sum_{v\in H\setminus R} c_R(v)/d_R(v) \ge 8$.
    Then notice that,
    \begin{equation*}
        \sum_{v\in H\setminus R} \frac{c(v)}{\deg(v)}
        \le \sum_{v\in H\setminus R} \frac{c(v)}{\delta_G}
        \le \frac{2|\C|}{\delta_G}
        \le \frac{22}{10}
        ,     
    \end{equation*}
    where the last inequality is from $|\C|\le \lambda + (1/10)\cdot\delta_G\le (11/10)\delta_G$.
    By the linearity of expectation we have that,
    \begin{equation*}
        \Exp{\sum_{v\in H\setminus R} \frac{c_R(v)}{d_R(v)}}{R}
        \le \sum_{v\in H\setminus R} \Exp{\frac{c_R(v)}{d_R(v)} \mid d_R(v) > 0}{R}
        \le \sum_{v\in H\setminus R} \frac{c(v)}{\deg(v)}
        \le \frac{11}{5}
        ,
    \end{equation*}
    where the first inequality is from $\Exp{c_R(v)/d_R(v) \mid d_R(v) >0}{} \ge \Exp{\frac{c_R(v)}{d_R(v)}}{R}$, and the second is from \Cref{claim:expectation-of-ratio}.
    Therefore, by Markov's inequality we have,
    \begin{equation*}
        \Probability{\sum_{v\in H\setminus R : c_R(v)>0} \frac{c_R(v)}{d_R(v)} \ge 8}
        \le \frac{1}{3}
        .
    \end{equation*}
    For the rest of the proof assume that this event does not occur.

    We now turn to bound the probability that $\max_{v\in H\setminus R} c_R(v)/d_R(v) > 2/3$.
    We start by showing that $c(v)/\deg(v) \le \tfrac{11}{20}$.
    Observe that for any $v\in C$, $\mintcut_G(C\setminus \set{v}) = \mintcut_G(C)+\deg(v)-2c(v)$.
    Furthermore, $\mintcut_G(C)\le \lambda_G + (1/10)\delta_G$ and $\mintcut_G(C\setminus \set{v})\ge \lambda_G$.
    Hence,
    \begin{equation*}
        \lambda_G 
        \le \mintcut_G(C\setminus \set{v}) 
        = \mintcut_G(C)+\deg(v)-2c(v) 
        \le \lambda_G + (1/10)\delta_G +\deg(v)-2c(v)
        .
    \end{equation*}
    Rearranging the terms we obtain that $c(v) \le \tfrac{11}{20}\deg(v)$.
    Therefore, $\tfrac{c(v)}{\deg(v)} \le \tfrac{11}{20}\deg(v)/\deg(v) \le \tfrac{11}{20}$.
    We obtain a similar bound for $v\in V\setminus C$.

    Recall that we sample each vertex into $R$ with probability $p=800\cdot \log n/\tau$.
    Using the Chernoff bound, we obtain that $d_R(v)\ge 0.9 \deg(v)p$ with probability at least,
    \begin{equation*}
        \Probability{d_R(v) \ge 0.9 \deg(v)p} 
        \ge 1-\exp(-(1/10)^2 \deg(v)800\log n /(2\tau)) 
        \ge 1-1/n^4
        ,
    \end{equation*}
    where the last inequality is from $\deg(v)\ge \tau$.
    Therefore, $d_R(v)\ge 0.9 \deg(v)p$ for all $v\in H$ with probability at least $1-O(n^{-3})$.
    Assume this event holds for the rest of the proof.
    
    Notice that similarly to $d_R(v)$, we have $\Exp{c_R(v)}{R} = c(v)p$.
    Since $c(v)/\deg(v)\le 11/20$ we have that $\Exp{c_R(v)}{R} \le 11/20 \cdot p\deg(v)$.
    Hence, if $c_R(v)/d_R(v) > 2/3$ then $c_R(v) > 2d_R(v)/3\ge 2\cdot (9/10) \deg(v)p/3=(3/5)\cdot(20/11)\Exp{c_R(v)}{R} =(12/11)\Exp{c_R(v)}{R}$, where the last inequality is from $d_R(v) \ge 0.9 \deg(v)p$.
    Furthermore, since $d_R(v)\ge 0.9 \deg(v)p \ge 720 \log n$ we must have that $c_R(v) \ge 480 \log n$ for this event to occur.
    
    To bound the maximum, we split into two cases.
    The first, when $\Exp{c_R(v)}{R} \ge 400 \log n$.
    In this case by the Chernoff bound we have,
    \begin{equation*}
        \Probability{c_R(v) \ge (12/11)\cdot \Exp{c_R(v)}{R}}
        \le \exp(-(1/11)^2 \cdot 400 \log n / 3)
        \le n^{-1.1}
        .
    \end{equation*}
    The second case is when $\Exp{c_R(v)}{R} < 400 \log n$.
    Recall that $d_R(v)\ge 720 \log n$, and hence if we show that $c_R(v) \le 480 \log n$ then we can conclude that $c_R(v)/d_R(v) \le2/3$.
    In this case let $s =  480 \log n/\Exp{c_R(v)}{R} -1\ge 60\log n/\Exp{c_R(v)}{R}$.
    Then, by the Chernoff bound we have that,
    \begin{equation*}
        \Probability{c_R(v) \ge 480\log n}
        \le \exp(-s^2 \cdot \Exp{c_R(v)}{R} / 3)
        \le \exp(-60^2 \log n  \cdot (\log n/\Exp{c_R(v)}{R}) / 3)
        \le n^{-3}
        .
    \end{equation*}
    Combining the two cases we have that $\max_{v\in H\setminus R} c_R(v)/d_R(v)\le 2/3$ with probability at least $1-O(n^{-0.1})$.
    Hence, we can conclude that $R$ is good with probability at least $1/2$.

    To conclude the proof we now show the size guarantee.
    Partition the vertices of $G'$ into $R$ and $U=V\setminus R$.
    We have seen above that for every $v\in H \setminus R$ we have that $d_R(v)\ge 0.9 \deg(v)p\ge 720 \log n$ with probability at least $1-1/n^4$.
    Therefore, every vertex in $H\setminus R$ is contracted with high probability.
    Since every vertex in $V$ has degree at least $\tau$, we have that $H=V$.
    Finally, by a straightforward application of the Chernoff bound we have that $|R| = O(n/\tau \cdot \log n)$ with high probability.
\end{proof}

\section{Dynamic Maximum \texorpdfstring{$k$}{k}-Packing of Forests}
\label{sec:packing-of-forests}
In this section we show how to maintain a maximal $n$-packing of forests, proving \Cref{lemma:packing-of-forests}.
Such results were previously used in the literature (see e.g. \cite{ADK+16,KK26a}), and we include the proof for completeness.

Throughout we assume the existence of a minimum spanning forest data structure with the following guarantees.
\begin{definition}[Minimum Spanning Forest]
    A fully dynamic minimum spanning forest (MSF) data structure maintains a spanning forest of a dynamic graph $G$ undergoing edge insertions and deletions, and supports the following operations:
    \begin{itemize}
        \item \texttt{insert(e)}: Insert edge $e$ into $G$.
        \item \texttt{delete(e)}: Delete edge $e$ from $G$.
    \end{itemize}
    An edge insertion only changes the spanning forest if the inserted edge connects two components, and an edge deletion only changes the spanning forest if the deleted edge is in the spanning forest, in which case it finds a single replacement edge if one exists.
\end{definition}
\begin{theorem}
    There exist dynamic MSF data structures with the following guarantees:
    \begin{itemize}
        \item A deterministic data structure with $n^{o(1)}$ worst-case update time~\cite{CGLNPS20}.
        \item A deterministic data structure with $\polylog(n)$ amortized update time~\cite{HdLT01}.
        \item A randomized data structure with $\polylog(n)$ worst-case update time~\cite{KKM13,GKKT15}.
    \end{itemize}
\end{theorem}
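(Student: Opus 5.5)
The plan is to cite this theorem rather than prove it: it collects three known results, and it will be used only as a black box for \Cref{lemma:packing-of-forests}. Still, I would check that each cited data structure really provides the interface in the MSF definition. On insertion, the forest changes only when the new edge joins two components. On deletion of a forest edge, at most one replacement edge is added. The forest is spanning, that is, maximal. The first two properties are the ones that matter for the packing construction, since each update to $G$ then causes $O(1)$ changes to the maintained forest.

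For the deterministic $\polylog(n)$ amortized bound I would invoke \cite{HdLT01}. Its fully dynamic minimum spanning forest, run with all weights equal (or ties broken consistently), maintains a spanning forest under exactly these rules. Each deletion of a tree edge searches the levels for a single replacement edge, which it either finds or certifies that none exists.

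For the randomized $\polylog(n)$ worst-case bound I would invoke \cite{KKM13}, with the refinement of \cite{GKKT15}. These maintain a spanning forest, and after a tree edge is deleted they locate one replacement edge through cutset sketches. They succeed with high probability against an oblivious adversary, which is consistent with how \Cref{lemma:packing-of-forests} is used.

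For the deterministic $n^{o(1)}$ worst-case bound I would invoke \cite{CGLNPS20}, which maintains a deterministic dynamic MSF with subpolynomial worst-case update time.

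The only step that needs any care, and the main possible obstacle, is the phrase ``finds a single replacement edge''. Some implementations, especially the worst-case ones built on expander hierarchies, may internally modify the forest by more than $O(1)$ edges per update. I would handle this by noting that for a spanning forest (unit weights) the minimum spanning forest can be taken to be unique under a fixed tie-breaking rule by edge identifiers. Then the MSF changes by at most one edge swap per update, so the output changes by $O(1)$ edges regardless of what the data structure does internally. The stated guarantees therefore hold exactly as claimed.
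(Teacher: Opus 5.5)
Your proposal is correct and matches the paper, which likewise states this theorem purely as a citation of \cite{CGLNPS20}, \cite{HdLT01}, and \cite{KKM13,GKKT15} and gives no further argument. One small refinement to your uniqueness remark: break ties by insertion time (newer edges heavier) rather than by arbitrary fixed identifiers. Otherwise an insertion inside a component can swap into the forest, which is still an $O(1)$ change but violates the definition's insertion clause that you claim holds exactly.
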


\begin{proof}[Proof of \Cref{lemma:packing-of-forests}]
    The algorithm maintains $k$ dynamic MSF data structures, $T_1,T_2,\ldots,T_k$, in parallel one for each forest in the packing.
    The input for the MST $T_i$ is the graph $G_i = G \setminus \bigcup_{j=1}^{i-1} T_j$, i.e. the graph obtained by removing all edges in the previous trees.
    Given an edge insertion $e$, insert it into each of the $k$ MSTs, until it is successfully inserted into one of them.
    Given an edge deletion $e$, delete it from each of the $k$ data structures.
    If $e$ is in one of the trees, it is deleted from that tree and might be replaced by a single replacement edge, which is then deleted from all subsequent trees.
    It is easy to see that $\{T_1,\ldots,T_k\}$ is a maximal $k$-packing of forests of $G$.

    To conclude, we analyze the update time of the algorithm.
    Denote the update time of the MST data structure by $\tau$.
    Each inserted edge is added to at most $k$ trees, since once it is inserted into one of the $k$ data structures, it will not be inserted into any of the subsequent ones.
    For an edge deletion, notice that if the edge is in one of the trees, it is deleted from that tree and at most one replacement edge is inserted into that tree.
    This replacement edge is then deleted from all subsequent trees.
    Therefore, each tree sees at most a single edge deletion.
    Hence, the total update time for an edge deletion is at most $k \tau$.
    Plugging in the values of $\tau$ for the three MST data structures above gives the desired result.
\end{proof}

\section{Simple Deterministic Dynamic Minimum Cut Algorithm}
\label{sec:simple-deterministic}
In this section we give an alternative, self-contained proof of \Cref{theorem:deterministic-edge-connectivity}.
To prove \Cref{theorem:deterministic-edge-connectivity}, we use the following result on deterministic construction of a strong NMC sparsifier in near-linear time~\cite{KT19,HRW17}.
\begin{theorem}[\cite{KT19,HRW17}]
    \label{theorem:deterministic-nmc-sparsifier}
    There exists a deterministic algorithm that, given a simple graph $G$ on $n$ vertices and $m$ edges, constructs a strong NMC sparsifier of $G$ with $\tO(n/\delta_G)$ vertices and $\tO(m/\delta_G)$ edges in $\tO(m)$ time.
\end{theorem}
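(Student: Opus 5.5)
We would prove the statement along the lines of \cite{Sar21}, which gives a simplified version of \cite{KT19}. The contracted pieces are \emph{cores} of expanders that are cut out of $G$ deterministically.

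\textbf{Step 1: expander decomposition.} Run a deterministic expander decomposition with conductance parameter $\phi=\Theta(1/\delta_G)$, taken small enough relative to the constants below. The decomposition we need should run in $\tO(m)$ time and return clusters $X_1,\dots,X_s$. Each induced $G[X_i]$ should be a $\phi$-expander, and the number of inter-cluster edges should be at most $\tO(\phi m)=\tO(m/\delta_G)$. The cited deterministic decompositions may only reach $\tO(m)$ time with a subpolynomial loss in $\phi$, or with $m^{1+o(1)}$ running time. In that case we would instead use the $\tO(m)$-time deterministic tools of \cite{KT19,HRW17}, for example local flow partitioning.

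\textbf{Step 2: trimming and shaving.} First \emph{trim}: repeatedly remove from each cluster any vertex with fewer than $\tfrac{2}{5}d(v)$ neighbours inside the cluster. The standard potential argument shows this removes only $O(\text{inter-cluster edges})$ volume and keeps each remainder an $\Omega(\phi)$-expander. Then \emph{shave}: drop every vertex with at most $d(v)/2+1$ neighbours inside its trimmed cluster. What remains of each cluster is its core, and we output the graph obtained by contracting every core.

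\textbf{Step 3: size bounds.} Every removed vertex has $\Omega(\delta_G)$ of its edges leaving its cluster. Charging these to the inter-cluster edges bounds the number of singleton vertices by $\tO(m/\delta_G^2)\le \tO(n/\delta_G)$. A nonempty core has $\ge 2\delta_G/5$ inner neighbours per vertex in a simple graph, so it contains $\Omega(\delta_G)$ vertices. Cores with few vertices are handled by the same charging. Hence there are $\tO(n/\delta_G)$ vertices. The surviving edges are the inter-cluster edges plus the edges incident to removed vertices, and charging again gives $\tO(m/\delta_G)$ edges.

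\textbf{Step 4: correctness, the main obstacle.} We must show that no non-trivial cut $C$ with $\mintcut_G(C)\le\lambda_G+\delta_G/10\le 1.1\delta_G$ crosses a core. Suppose $C$ splits a core $K$, and let $S$ be the side of $K$ with smaller volume. The cut edges inside $K$ number at most $1.1\delta_G$. Expansion then gives $\mathrm{vol}(S)=O(\delta_G/\phi)$, which is $O(\delta_G^2)$ with our choice of $\phi$. We tune the constant so that $|S|\le \delta_G/5$. Since each vertex of $S$ has $\ge 2\delta_G/5$ neighbours in $K$ and fewer than $|S|$ of them lie in $S$, the set $S$ sends at least $|S|(2\delta_G/5-\delta_G/5)$ edges across the cut. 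This forces $|S|\le 5$, a constant.

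The shaving condition now finishes the argument. Each $v\in S$ has more than $d(v)/2+1$ neighbours in $K$, and at most $|S|-1$ of them are in $S$. So each $v\in S$ contributes more than $d(v)/2-|S|+2$ cut edges. Moving $S$ to the other side of $C$ then changes the cut by a strictly negative amount when $|S|=1$. Combining this with $\mintcut_G(C)\le\lambda_G+\delta_G/10$ and the simple-graph count, we would show that $C$ restricted to the core's side must be trivial, a contradiction.

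Getting the constants $2/5$, $1/2$, $1/10$ and $\phi$ to interlock is the delicate part. Our first attempt would copy the exact thresholds of \cite{Sar21}, whose argument only handles exact minimum cuts, and enlarge the slack to absorb the extra $\delta_G/10$.
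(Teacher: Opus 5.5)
\textbf{The gap is in Step 4: with the shaving threshold $d(v)/2+1$ your argument does not extend to near-minimum cuts, and the claim is actually false for that construction.} For reference, the paper gives no proof of this statement. It quotes \cite{KT19,HRW17}, whose stated guarantee concerns non-trivial \emph{minimum} cuts. The real content of a proof is therefore the extension to all non-trivial cuts of value up to $\lambda_G+\delta_G/10$, which the paper's definition of a strong NMC sparsifier demands. You flag this extension as delicate.

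With threshold $d(v)/2+1$, moving $S$ across $C$ lowers the cut value by only a constant (more than $2$ when $|S|=1$). That contradicts exact minimality but is fully compatible with $\mintcut_G(C)\le\lambda_G+\delta_G/10$, and no simple-graph count repairs this. Here is a counterexample.
\begin{itemize}
\item Take a clique $B$ on $d+3$ vertices and a much larger clique $A$.
\item Join one vertex $v\in B$ to $d-2$ vertices of $A$, and join two other vertices of $B$ to one vertex of $A$ each.
\item Then $\lambda_G=d$ and $\delta_G=d+2$, and $\{A,B\}$ is a legitimate expander decomposition for $\phi=\Theta(1/d)$.
\item The vertex $v$ survives trimming and also shaving, since $d_B(v)=d+2>d(v)/2+1=d+1$. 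So the core is all of $B$.
\item Yet $A\cup\{v\}$ is a non-trivial cut of value $d+4\le\lambda_G+\delta_G/10$ for $d\ge 38$, and its edges from $v$ into $B$ get contracted.
\end{itemize}
So the thresholds themselves must change. Enlarging the slack in the bookkeeping is not enough.

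\textbf{Fix.} The missing idea is that the shaving slack must scale with $\delta_G$.
\begin{itemize}
\item Keep $v$ in the core only if $d_{X'}(v)>d(v)/2+\delta_G/20+4$, where $X'$ is the trimmed cluster.
\item Apply expansion to the cluster $X$, not to the core $K$; the core is not known to be an expander. Let $S'$ be $X'$ intersected with the smaller-volume side of $C$ inside $X$. After one round of shaving, a core vertex's $2d(v)/5$ inner neighbours are guaranteed only in $X'$, not in $K$, which is why $S'$ is defined inside $X'$.
\item Your counting then gives $|S'|\le 5$. This needs $\phi\delta_G$ to be a sufficiently \emph{large} constant, not a small one as Step 1 says.
\item Move only $S''=S'\cap K$, which is nonempty. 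Each $v\in S''$ has at least $d_{X'}(v)-4$ neighbours on the far side. So moving it lowers the cut by more than $\delta_G/10$, giving $\mintcut_G(C\setminus S'')<\lambda_G$.
\item This is a contradiction unless $C=S''$. That case is excluded by $|C|(\delta_G-|C|+1)>1.1\delta_G$ for $2\le|C|\le5$ and $\delta_G$ above a constant.
\item Shaved vertices still have $\Omega(d(v))$ edges leaving $X'$, so your charging is unaffected.
\end{itemize}

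\textbf{Further problems.}
\begin{itemize}
\item The inequality $\tO(m/\delta_G^2)\le\tO(n/\delta_G)$ in Step 3 holds only when $m=\tO(n\delta_G)$. This is true for the $2\delta_G$-NI sparsifier to which the paper applies the theorem, but false in general.
\item Count cores by nonempty trimmed clusters, each of which has more than $2\delta_G/5$ vertices, rather than by core sizes.
\item With known deterministic expander decompositions, the Saranurak route gives $m^{1+o(1)}$ time and subpolynomial losses in the size bounds. Falling back on \cite{KT19,HRW17} would require redoing the near-minimum argument for their cores. So the $\tO(m)$ bound is not established by your sketch.
\end{itemize}
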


\begin{proof}[Proof of \Cref{theorem:deterministic-edge-connectivity}]
    The algorithm maintains a maximal $n$-packing of forests of the graph, and a heap of vertex degrees to track the minimum degree $\delta_G$.
    We divide the updates into epochs of $\delta_G/40$ updates.
    Assume that at the beginning of each epoch we are given a strong NMC sparsifier $H$ of a $2\delta_G$-NI sparsifier of $G$ whose age is less than $\delta_G/80$.
    Throughout the epoch, whenever an edge $e=(u,v)$ is inserted or deleted in $G$, insert/delete the corresponding edge in $H$ (if $u,v$ are not contracted in $H$).
    By combining \Cref{lemma:might-as-well-ni,lemma:might-as-well-nmc}, we have that every non-trivial minimum cut of $G_t$ is a minimum cut of $H_t$ for every $t$ in the epoch, where $H_t$ is the NMC sparsifier with the same updates applied.
    Therefore, to answer a query of the minimum cut of $G_t$, it suffices to run a static deterministic minimum cut algorithm on $H_t$ such as \cite{HLRW24} and return the minimum between it and the minimum degree of $G_t$.

    When there rest only $\delta_G/80$ updates left in the epoch, use the first $2\delta_G$ forests in the packing to build a $2\delta_G$-NI sparsifier of $G$, and then build a strong NMC sparsifier of the NI sparsifier.
    Notice that constructing the NI and NMC sparsifiers takes $\tO(n\delta_G)$ time by \Cref{theorem:deterministic-nmc-sparsifier}, and thus we can spend $\tO(n)$ time during $\delta_G/160$ updates to build the NMC sparsifier in the background.
    In the final $\delta_G/160$ updates of the epoch, we apply all the edge updates that arrived during the previous $\delta_G/160$ updates to the new NMC sparsifier, $\tO(\log n)$ time per update, and then switch to the new NMC sparsifier at the beginning of the next epoch.
    Observe that the age of the NMC sparsifier at the end of the epoch is at most $\delta_G/80$, and thus it is valid for the next epoch.
    The algorithm constructing the NMC sparsifier requires a static copy of the maximal $n$-packing of forests, which we obtain by queueing the updates to the  maximal $n$-packing of forests and applying them in the next $\delta_G/80$ updates after the NMC sparsifier is constructed.

    Finally, the overall time complexity of the algorithm is $n^{1+o(1)}$ worst-case update time or $\tO(n)$ amortized update time, depending on the choice of the algorithm for maintaining the maximal $n$-packing of forests by \Cref{lemma:packing-of-forests}, which concludes the proof.
\end{proof}

\section{Robustness to Edge Updates}
\label{sec:robustness-edge-updates}
In this section we prove that the minimum cut of a graph is preserved when we replace it by a maximal packing of forests, or an NMC sparsifier, as long as the number of updates is not too large compared to the minimum cut of the graph.
This proves \Cref{lemma:might-as-well-ni} and \Cref{lemma:might-as-well-nmc}.
\begin{proof}[Proof of \Cref{lemma:might-as-well-ni}]
    Let $\kappa = c\lambda_G$, $G' = (G\cup I)\setminus D$ and $H' = (H\cup I)\setminus D$ denote the two graphs obtained by applying the updates.
    In addition, let $C\subseteq V$ be a minimum cut of $G'$.
    We will show that $C$ is also a minimum cut of $H'$, which implies that the minimum cut of $G'$ is the same as the minimum cut of $H'$.

    Observe that $\mintcut_{G'}(C) \le k +\lambda_G <\tfrac{1+c}{2}\lambda_G$, since $C$ is a minimum cut of $G'$ and $k+l<\tfrac{c-1}{2}\lambda_G$.
    Furthermore, $\mintcut_{G}(C) \le  k + l +\lambda_G < \tfrac{1+c}{2}\lambda_G$, since $k+l<\tfrac{c-1}{2}\lambda_G$.
    Since $c>1$, we have that $\mintcut_G(C) < \kappa$, and thus the cut $C$ is preserved in any maximal $\kappa$-packing of forests of $G$, and thus also by $H$.
    Hence, all the edges of $E(C, V\setminus C)$ are preserved in $H$, and thus $\mintcut_{H'}(C) = \mintcut_{G'}(C)$.
\end{proof}
\begin{proof}[Proof of \Cref{lemma:might-as-well-nmc}]
    Let $G' = (G\cup I)\setminus D$ and $H' = (H\cup I)\setminus D$ denote the two graphs obtained by applying the updates.
    Let $C\subseteq V$ be a minimum cut of $G'$.
    We will show that with constant probability, no edge in $E_{G'}(C, V\setminus C)$ is contracted in $H'$.

    Observe that $\mintcut_{G'}(C) \le k +\lambda_G <\lambda_G+\tfrac{1}{20}\delta_G$, since it's a minimum cut of $G'$ and $k+l<\tfrac{1}{20}\delta_G$.
    Furthermore, $\mintcut_{G}(C) \le  k +\lambda_G + l < \lambda_G + \tfrac{1}{10}\delta_G$, since $k+l<\tfrac{1}{10}\delta_G$.
    By the definition of an NMC sparsifier, with constant probability no edge in $E_{G}(C, V\setminus C)$ is contracted in $H$, and thus also in $H'$.
    Finally, every edge inserted or deleted in $G'$ is also inserted or deleted in $H'$, and thus no edge in $E_{G'}(C, V\setminus C)$ is contracted in $H'$ with constant probability. 
\end{proof}

\end{document}